\newcommand{\Dlabel}[1]{\label[definition]{#1}}
\newcommand{\Llabel}[1]{\label[lemma]{#1}}
\newcommand{\Clabel}[1]{\label[corollary]{#1}}
\newcommand{\GER}{\GE{r,\sigma}}
\newcommand{\svz}{$\sigma$-visible zigzag}
\newcommand{\cpath}{constraint-path}
\newcommand{\proofs}{the appendix}
\newtheorem{protocol}{Protocol}
\newtheorem{definition}{Definition}
\newtheorem{corollary}{Corollary}
\newtheorem{lemma}{Lemma}
\newtheorem{theorem}{Theorem}
\newlist{parts}{enumerate}{1}
\crefname{partsi}{Part}{Parts}
\setlist[parts,1]{label=\arabic*.,ref=\arabic*}
\newcommand{\white}[1]{{\color{white}{#1}}}
\newcommand{\Int}{{\mathbb{Z}}}
\newcommand{\Nat}{{\mathbb{N}}}
\newcommand{\sfa}{\underline{a}}
\newcommand{\sfb}{\underline{b}}
\newcommand{\eqdef}{\triangleq}
\def\emptyset{\mbox{\O}}
\newcommand{\defemph}[1]{\textbf{\textit{#1}}}
\newcommand{\sat}{\models}
\newcommand{\Proc}{\mathsf{Procs}}
\newcommand{\bnode}[1]{(#1)}
\newcommand{\node}[1]{\langle#1\rangle}
\newcommand{\fip}{{\it fip}}
\newcommand{\Gcal}{\ensuremath{\mathcal{G}}}
\newcommand{\Gz}{\Gcal_0}
\newcommand{\Rrep}{\ensuremath{\mathcal{R}}}
\newcommand{\Net}{\mathsf{Net}}
\newcommand{\Chan}{{\mathsf{Chans}}}
\newcommand{\lb}[1]{\bm{L}_{\scriptscriptstyle{#1}}}
\newcommand{\ub}[1]{\bm{U}_{\scriptscriptstyle{#1}}}
\newcommand{\bcm}{\mbox{$\mathsf{bcm}$}}
\newcommand{\Ext}{\mathbb{E}}
\newcommand{\ffip}{{\sf{FFIP}}}
\newcommand{\atleast}[1]{\xrightarrow{\makebox[4mm]{\mbox{$#1$}}}}
\newcommand{\Atleast}[1]{\xrightarrow{\makebox[8mm]{\mbox{$#1$}}}}
\newcommand{\AAtleast}[1]{\xrightarrow{\makebox[11mm]{\mbox{$#1$}}}}
\newcommand{\pair}[1]{\node{#1}}
\newcommand{\timeof}{\mathsf{{time}}}
\newcommand{\base}{\ensuremath{\mathsf{base}}}
\newcommand{\head}{\ensuremath{\mathsf{head}}}
\newcommand{\tail}{\ensuremath{\mathsf{tail}}}
\newcommand{\weight}{\mathsf{wt}}
\newcommand{\te}[1]{\ensuremath{\theta_{#1}}}
\newcommand{\lam}{\rightsquigarrow}
\newcommand{\lamr}{\rightsquigarrow_r}
\newcommand{\ocomp}[2]{#1{\scriptstyle\odot}#2}
\newcommand{\concatSym}{\cdot}
\newcommand{\concat}{\!\concatSym\!}
\newcommand{\Comp}[2]{\ensuremath{#1\cdot #2}}
\newcommand{\comp}[2]{#1\concat#2}
\newcommand{\Ksub}[1]{ K_{\!\;\!_{#1}}}
\newcommand{\basic}{\mathtt{basic}}
\newcommand{\GB}[1]{\ensuremath{G_{B}({#1})}}
\newcommand{\Gb}{\ensuremath{G_{B}}}
\newcommand{\VB}{\ensuremath{V_{B}}}
\newcommand{\EB}{\ensuremath{E_{B}}}
\newcommand{\GE}[1]{\ensuremath{G_{E}({#1})}}
\newcommand{\VE}{\ensuremath{V_{E}}}
\newcommand{\EE}{\ensuremath{E_{E}}}
\newcommand{\Early}{\mathsf{Early}}
\newcommand{\Late}{\mathsf{Late}}
\newcommand{\early}[3]{\mathsf{Early}\langle#1\atleast{#3}#2\rangle}
\newcommand{\late}[3]{\mathsf{Late}\langle#1\atleast{#3}#2\rangle}
\newcommand{\bearly}[3]{\bm{\mathsf{Early}\langle#1}\atleast{\bm{#3}}\bm{#2\rangle}}
\newcommand{\blate}[3]{\bm{\mathsf{Late}\langle#1}\atleast{\bm{#3}}\bm{#2\rangle}}
\newcommand{\sigC}{\sigma_{\scriptscriptstyle{C}}}
\newcommand{\joins}{S}
\newcommand{\tet}{\ensuremath{\theta}}
\newcommand{\past}{\mathsf{past}}
\newcommand{\Prot}{\mathtt{P}}
\newcommand{\tlf}{two-legged fork}
\newcommand{\lbSym}{\bm{L}}
\newcommand{\ubSym}{\bm{U}}
\newcommand{\upb}[1]{\ub{#1}}
\newcommand{\oCompose}{\odot}
\newcommand{\Ts}[1]{\ensuremath{T_r^{#1}}}
\newcommand{\Tst}{\Ts{\tet}}
\newcommand{\angles}[1]{\ensuremath{\langle{#1}\rangle}}
\newcommand{\nodeL}[1]{\ensuremath{({#1})}}
\newcommand{\nodeF}[1]{\ensuremath{\angles{#1}}}
\newcommand{\Go}{``{\sf go}''}
\newcommand{\sigAware}{\ensuremath{\sigma\text{-recognized}}}
\newcommand{\Vyes}[1]{\ensuremath{V^r_\sigma({#1})}}
\newcommand{\Vno}[1]{\ensuremath{\overline{V}^r_\sigma({#1})}}
\newcommand{\Ayes}[1]{\ensuremath{A^r_\sigma({#1})}}
\newcommand{\Ano}[1]{\ensuremath{\overline{A}^r_\sigma({#1})}}
\newcommand{\VrInit}{\ensuremath{V^{r,0}}}
\renewcommand{\em}{\it}
\renewcommand{\cref}{\Cref}
\begin{document}
 
\title{%
On Using Time Without Clocks
via Zigzag Causality 
}
\author{Asa Dan\\Technion\\{\small asadan@campus.technion.ac.il}
\and Rajit Manohar\\Yale University\\{\small rajit.manohar@yale.edu}
\and Yoram Moses\\Technion\\{\small moses@ee.technion.ac.il}
}
\date{}
%

\begin{titlepage}
\maketitle
\begin{abstract}
%
Even in the absence of clocks, time bounds on the duration of actions enable the use of time for distributed coordination. This paper initiates an investigation of coordination in such a setting. A new communication structure called a {\em zigzag pattern} is introduced, and shown to guarantee bounds on the relative timing of events in this clockless model. Indeed, zigzag patterns are shown to be necessary and sufficient for establishing that events occur in a manner that satisfies prescribed bounds. We capture when a process can know that an appropriate zigzag pattern exists, and use this to provide 
necessary and sufficient conditions for timed coordination of events using a full-information protocol in the clockless model. 
\end{abstract}

\vfill
\noindent
\textbf{Keywords}:
coordination, time, clocks, temporal ordering, time bounds.
\vfill

\thanks{This is an extended version of a paper that has been accepted to PODC 2017. This research was supported in part by a Ruch grant from the Jacobs Institute at Cornell Tech, and by a grant from the United States-Israel Binational Science Foundation (BSF), Jerusalem, Israel, and the United States National Science Foundation (NSF) under grant CCF 1617945. Yoram Moses is the Israel Pollak academic chair at the Technion}

\thispagestyle{empty}
\end{titlepage}


\section{Introduction}
\label{sec:intro}

Coordination is a fundamental task in distributed systems. The order in which events take place, and often also the relative timing of the events, can be of primary concern in many applications. 
Timing of actions can be useful, for example, when we wish to dispatch trains in a manner that ensures proper use of critical single-lane sections of the track, or schedule plane takeoffs to alleviate unnecessary congestion at the destination airports.  

In asynchronous systems processes have no access to clocks and, furthermore, they have no timing information except for what they can obtain based on the happens-before relation of \cite{Lamclocks}. 
In such systems, only  the ordering of events can be determined, and not their relative timing. 
Using accurate clocks, it is possible to orchestrate much more finely-tuned temporal patterns of events at the different sites than in the asynchronous setting. Moreover, this can often be achieved using significantly less communication~\cite{LamSM,BzMacm}. 
Of course, clocks are not always available, and when they are, maintaining clocks accurate and in synchrony is far from being automatic. 
But even when processes do not have access to clocks, 
bounds on the timing of communication and of actions are routinely monitored, and system designers can often have access to reliable timing information \cite{giusto2001reliable}. 

This paper initiates an investigation of the use of time for distributed coordination when processes do not have clocks, but do have bounds on the duration of events and of communication.   We call this the {\em bounded communication model without clocks} (or the ``clockless model'' for short), and denote it by \bcm. 
It is not {\it a priori} obvious that the clockless model is any more powerful than the asynchronous model. 
We will show that it is. Throughout the text we will use notation borrowed from~\cite{MB} that states timed precedence between events. 
We write $e\atleast{x}e'$ to state that the event~$e$ takes place at least~$x$ time units before~$e'$ does.%
\footnote{As discussed in~\cite{MB}, although $e\atleast{x}e'$ states a lower bound of~$x$ on the time difference between the events (i.e., $t_{e'}\ge t_e+x$), 
 the same notation can also be used to state upper bounds. Since  $t_{e'}\le t_e+y$  is equivalent to $t_e\ge t_{e'}-y$, we can capture an upper bound of~$y$ on how much later~$e'$ occurs by $e\atleast{-y}e'$. } In an asynchronous system, only the relative ordering of events can be coordinated, and not their timing. Thus, for example, the only way to ensure that an action~$\sfb$ is performed by process~$B$ no more than 10000 time steps before~$\sfa$ is performed by process~$A$ (in our notation, this is denoted by $\sfa\Atleast{\scriptscriptstyle{-10000}}\sfb$) is by having~$B$ perform~$\sfb$ {\em after}~$\sfa$. This is far from optimal, of course, as~$\sfb$ could be performed way before~$\sfa$. Similarly, the only way that
$B$ can be sure to act before~$A$ does, is if~$\sfa$ cooperates, and waits until a message chain from~$B$ informs it that~$\sfb$ has been performed.
As we shall see, in the clockless model it is possible to allow~$B$ (or~$A$) to act much earlier. 
In this paper, we will focus on two basic coordination problems in which~$B$ should act in a manner that is causally and temporally related to~$A$'s action, without requiring~$A$ to adjust its own decision to act, and often without requiring any communication between the two.   

\begin{definition}[Timed Coordination]
\Dlabel{def:1}
Given processes~$A$, $B$ and~$C$, suppose that~$A$ performs the action~$\sfa$ when it receives a \Go\ message from~$C$. 
Moreover, assume that~$C$'s decision to send this message is spontaneous. We define 
two coordination problems:%
 \begin{itemize}
\item[]$\bearly{\sfb}{\sfa}{x}$,\quad in 
which~$B$ should perform~$\sfb$  at least~$x$ time units before~$\sfa$ is performed;  
and 
\item[]$\blate{\sfa}{\sfb}{x}$,\quad 
in which similarly $B$ should perform~$\sfb$ at least~$x$ time units after~$\sfa$ is performed.
\end{itemize}
In both cases, $\sfb$ should  be performed in a run only if~$\sfa$ is performed. 
\end{definition}

In each of these coordination problems, $A$ acts unconditionally when it receives~$C$'s message.
Process~$B$ needs to perform~$\sfb$ only if it can do so in a manner that conforms to the stated bounds. 
Suppose that we wish to ensure that $\sfa\atleast{\scriptstyle{0}}\sfb$, i.e., that~$\sfa$ occurs no later than~$\sfb$.
We can of course ensure this by creating a message chain from~$A$ to~$B$, which starts at or after the occurrence of~$\sfa$. Once the final message in this chain  is received, $\sfb$ can safely be performed. 
In an asynchronous system, such a message chain would be necessary. 
Is it possible to ensure in our model that~$\sfa$ happens before~$\sfb$ without creating a message chain from~$A$ to~$B$?

\begin{figure}[h]
\begin{center}
\includegraphics[height=2in]{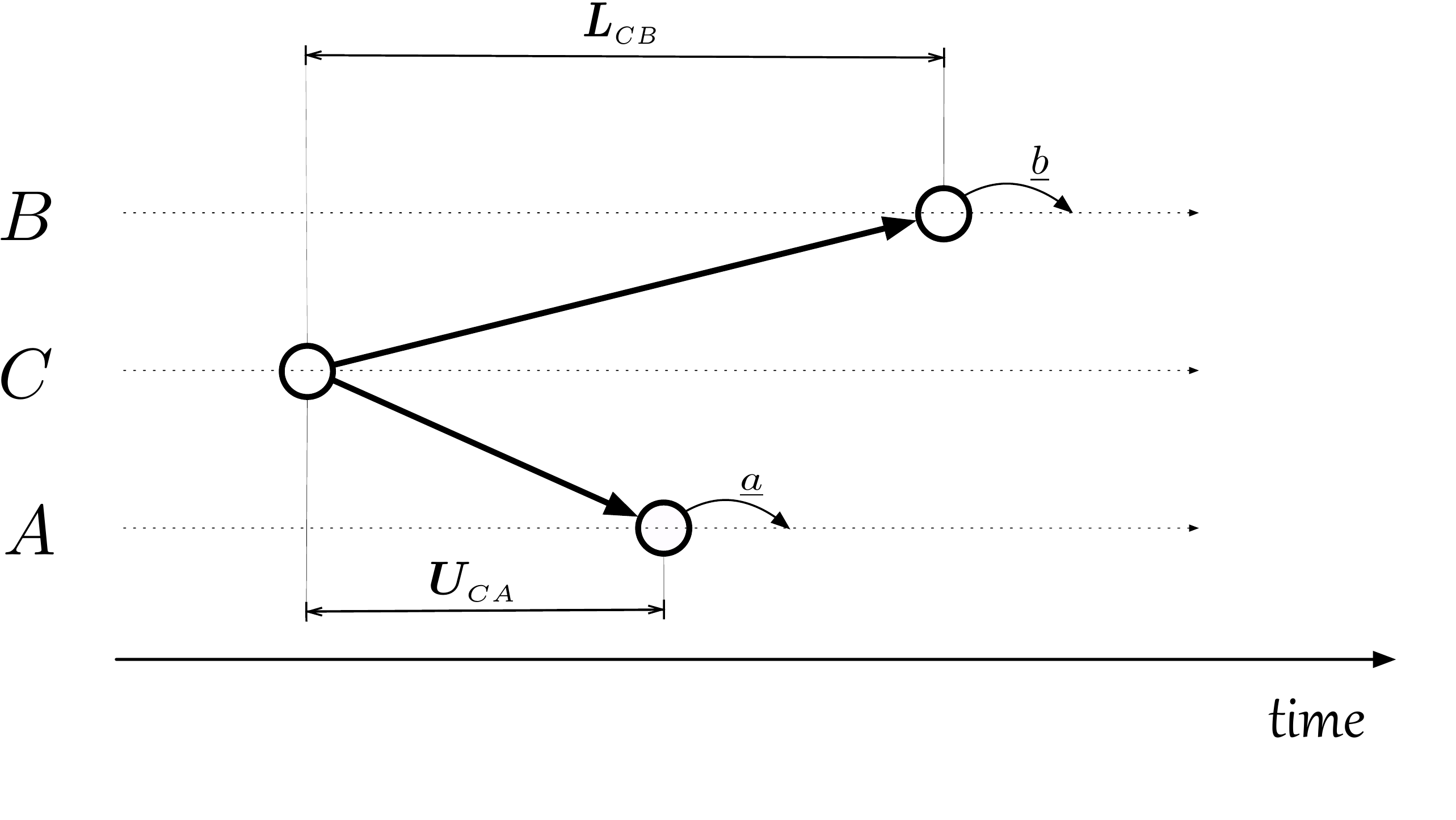}
\end{center}
\caption{Coordination without direct communication}
\label{fig:1}
\end{figure}

Let us return to the question of ensuring that $\sfa\atleast{\scriptstyle{x}}\sfb$ for 
a general value of~$x$, and consider the example depicted in \Cref{fig:1}. Here  process~$C$  simultaneously sends messages to~$A$ and~$B$. Let us denote by $\ub{CA}$ the upper bound on message transmission times for the channel $CA$,  and by~$\lb{CB}$ the lower bound  for $CB$. It is easy to check that if $\lb{CB}\ge \ub{CA}+x$ then~$B$ is guaranteed to receive~$C$'s message no less than~$x$ time units after~$A$ receives it. In that case, as $A$ is assumed to perform $\sfa$ upon receiving~$C$'s message, it is possible to ensure that~$\sfb$ happens at least~$x$ time units later than~$\sfa$, by having $B$ perform~$\sfb$ upon receiving~$C$'s message. Notice that this guarantees  a (timed) causal connection between actions at~$A$ and at~$B$ even without any communication between~$A$ and~$B$. 

Clearly, the analysis underlying the example of \Cref{fig:1} remains valid if we replace each  of the direct messages from~$C$ to~$A$ and~$B$ by a message chain, and replace the condition 
$\lb{CB}\ge\ub{CA}+x$ by a requirement that the sum of lower bounds along the chain from~$C$ to~$B$ exceeds~$x$ plus the sum of upper bounds along the chain from~$C$ to~$A$.
We remark that, in a precise sense, the asynchronous solution (for a simple happened-before requirement) is an instance of \Cref{fig:1} in which $C=A$, and  $\lb{CB}>0=\ub{CA}$. 

Note that if $\lb{CB}<\ub{CA}+x$ then, by waiting for more than $\delta=\ub{CA}-\lb{CB}+x$ time units before performing~$\sfb$, process~$B$ would also ensure that~$\sfa$ takes place $x$ time units
before~$\sfb$ does. 
But we are assuming that~$B$ \defemph{has no clock or timer} that it can use to measure the passage of~$\delta$ time steps. It can only use bounds on communication or internal actions to estimate the passage of time. 
We remark that in current-day technology, clocks and timers are often available. A vast portion of computer chip come with built-in clocks, and highly accurate clock synchronization algorithms are by now standard~\cite{IEEE1588}. But this does not cover all distributed systems of interest. Indeed, it is becoming popular to consider biological systems such as the brain or human body as instances of distributed systems. There, no explicit clock can be found, although timing appears to play a role~\cite{konishi1986,synaptic}. Another setting that fits the \bcm\ model is that of asynchronous (or self-timed) VLSI circuits, where there is no clock but there are bounds on data transfer along wires and on delays of gates~\cite{myerscircuit}. 

A natural question at this point is whether the pattern depicted in \Cref{fig:1} is typical for coordinating actions based on transmission bounds. In other words, is this essentially the only way in which~$B$ can guarantee that~$\sfa$ will be performed (sufficiently long) before~$\sfb$ in the clockless model? Interestingly, the answer is No. 
Consider the scenario depicted in \Cref{fig:2}. In this case~$E$ sends a message to~$B$ and to~$D$, while~$C$ sends a message to~$D$ and to~$A$. Moreover, $D$ receives $C$'s message before it receives~$E$'s message. Finally, $A$ performs~$\sfa$ upon receiving~$C$'s message, and~$B$ performs~$\sfb$ when it receives the message that~$E$ sends. 
As depicted in \Cref{fig:2}, denote the sending times of~$C$ and~$E$'s messages by $t_c$ and $t_e$. Moreover, let~$t_a$ and $t_d$ be the times at which $A$ and~$D$ receive~$C$'s message, and let~$t_b$ the time at which~$B$ receives~$E$'s message. 
%
%
%
%
%
%
%
%
%
Clearly, $\sfb$ is performed no earlier than time~$t_e+\lb{EB}$, yielding inequality~(i) below.
Similarly, the action~$\sfa$ is performed no later than time $t_c+\ub{CA}$, yielding inequality~(iv).
However, the fact that $E$'s message to~$D$ arrives after~$C$'s message arrives implies that~$t_e$ can not be pushed ``too far'' back relative to~$t_c$. After all, 
the message along~$ED$ took no more than~$\ub{ED}$ time units (inequality~(ii)), and the one along $CD$ took no less than $\lb{CD}$ (inequality~(iii)).
%
Altogether, we have: \\[.8ex]
\white{.}$~$\quad\quad\qquad\begin{tabular}{rlcl}
(i)~~& $t_b$  & $\ge$ & $t_e+\lb{EB}$,\\
(ii)~~ &$t_e$&$>$ &$t_d-\ub{ED}$,\\
(iii)~~ &$t_d$&$\ge$ &$t_c+\lb{CD}$, \quad~~~and\\
(iv)~~ &$t_c$&$\ge$ &$t_a-\ub{CA}$.\\[1.5ex]
\end{tabular}

By substitution, we have that~ 
  $t_b>t_a-\ub{CA}+\lb{CD}-\ub{ED}+\lb{EB}$. 
  Thus,  $t_b>t_a+x$ is guaranteed in this case if 
 \begin{equation}\label{eq:bounds}
 -\ub{CA}+\lb{CD}-\ub{ED}+\lb{EB}~~~\ge~~~ x.
 \end{equation}

\begin{figure*}
\centering
\begin{subfigure}{.5\textwidth}
  \centering
  \includegraphics[width=.6\linewidth]{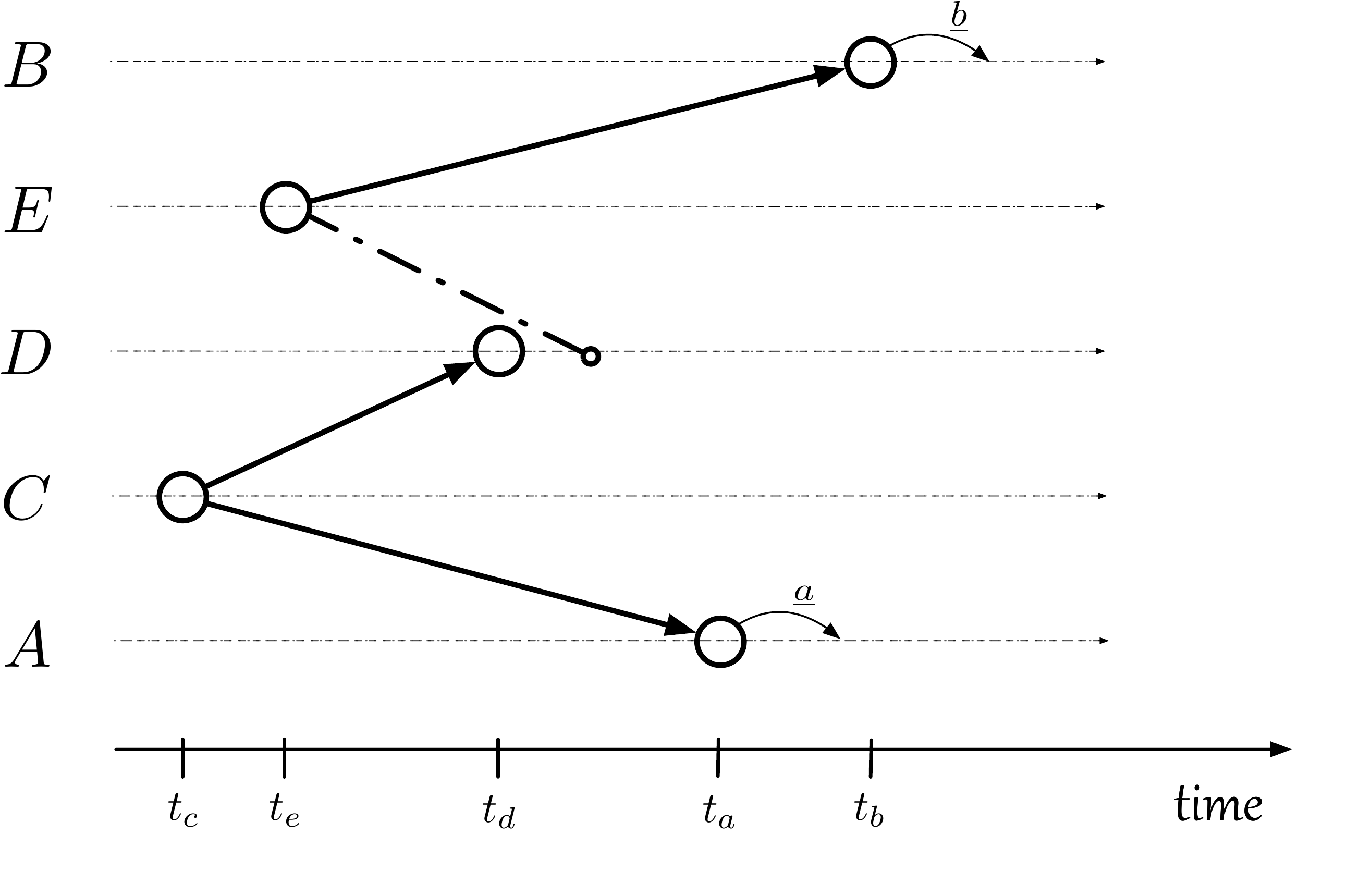}
  \caption{~Zigzag-based happens-before\\$~$}
  \label{fig:2}
\end{subfigure}%
\begin{subfigure}{.5\textwidth}
  \centering
  \includegraphics[width=.6\linewidth]{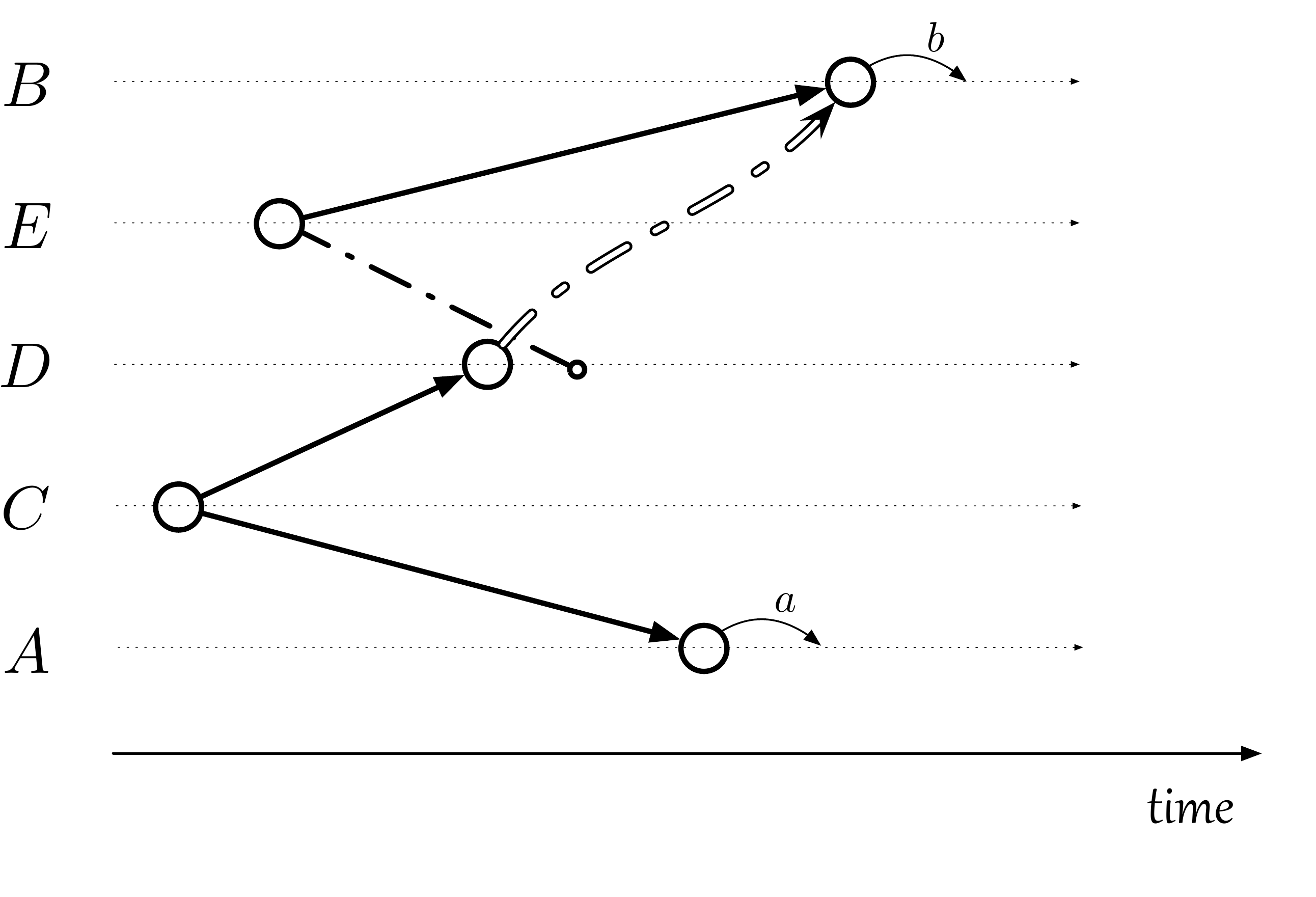}
  \caption{~Zigzag-based coordination\\$~$}
  \label{fig:3}
\end{subfigure}
\caption{A zigzag communication pattern}
\label{bigfig:2}
\end{figure*}

The reader may correctly suspect at this point that  the zigzag pattern of \Cref{fig:2}
can be extended by adding an arbitrary finite number of additional zigs and zags. Indeed, in that case a more elaborate condition in the style of \Cref{eq:bounds}, based on a longer derivation, will ensure that~$\sfa$ happens more than~$x$ time units before~$\sfb$. The first result of our analysis is a proof that this is tight. 
We will show that, in a precise sense, the existence of an appropriate zigzag pattern is a necessary condition for~$B$'s performing the action~$\sfb$ in $\late{\sfa}{\sfb}{x}$ or $\early{\sfb}{\sfa}{x}$.  

Since a zigzag pattern is necessary, we have by the {\em Knowledge of Preconditions principle} of \cite{TarKop} that~$B$ must {\em know} that a zigzag pattern exists when it performs~$\sfb$.
Interestingly, even if the processes follow a full-information protocol,%
\footnote{A {\em full-information protocol} (\fip) is one in which every message sent encodes the sender's complete history up to the point at which it is sent.} the existence of a zigzag pattern does not necessarily enable process~$B$ to correctly coordinate its action. 
In a run containing the pattern of~\Cref{fig:2}, for example, if~$B$ receives no messages from~$D$%
, then~$B$ would not be able to detect the existence of the zigzag pattern, because in the eyes of $B$ it may be possible that $C$ will send its messages only in the far future.

There are, of course, cases in which~$B$ can detect that an appropriate zigzag pattern exists. In such a case, $B$ can decide to perform its action~$\sfb$ and be sure that~$\sfa$ happens before~$\sfb$.  Consider \Cref{fig:3}. Suppose that the bounds satisfy the condition of \Cref{eq:bounds}. Moreover, let's assume that every message contains a header specifying who its intended recipients are. 
Once~$B$ receives~$E$'s message (with an indication that it was also sent to~$D$), and receives a message from~$D$ (denoted by a dashed line in \Cref{fig:3}) indicating that~$D$ heard from~$C$  before~$D$ heard from~$E$, then~$B$ can perform~$\sfb$ and be guaranteed to satsify $\late{\sfa}{\sfb}{x}$.
This is an instance of a {\em visible zigzag} pattern, which is a zigzag pattern that is extended by an appropriate set of message chains. 
Our analysis will identify particular visible zigzag patterns as necessary and sufficient for $B$'s action in instances of $\early{\sfb}{\sfa}{x}$ or $\late{\sfa}{\sfb}{x}$.

This paper is organized as follows: Section~\ref{sec:model} introduces the bounded communication model, protocols, and standard aspects of causality. It also presents two ways of describing a point on a local timeline: Since processes do not have access to clocks, one way is in terms of the local state that the process is in, and another way is as the point at which a message chain arrives from a point of the first type. Section~\ref{sec:zig-prec} introduces zigzag patterns and shows that they are necessary and sufficient for guaranteeing a precedence relation. In \Cref{sec:using} the notion of a visible zigzag pattern is introduced, and it is shown to be necessary and sufficient for optimal behavior in a coordination task. Section~\ref{sec:highlights} sketches the two variants of bounds graphs used in our technical analysis, and sketches the ideas underlying the proofs of our main theorems. Finally, Section~\ref{sec:conclusions} provides a discussion of our results, their implications, and direction for future work. Detailed proofs of our theorems and claims appear in \proofs.

\subsection{Related Work}
\label{sec:related}
Lamport's seminal work \cite{Lamclocks} on the ordering of events in asynchronous systems introduced the happened before relation, and initiated an orderly account of the role of causality in the ordering of events. Roughly speaking, his work shows that the only way to implement instances of $\late{\sfa}{\sfb}{0}$ in an asynchronous setting is by constructing a message chain from~$A$ to~$B$. Using the terminology of~\cite{Lamclocks}, one can consider the {\em causal past} of an event in an asynchronous setting to be the set of events from which it has received a message chain. In our analysis, this set also plays an important role. However, the bounds provide partial information on the  timing of events in the past, and, moreover, the past guarantees the occurrence of events that are not seen by the process (i.e., they do not have explicit message chains to the process).

Clocks are very useful tools for coordinating actions in distributed systems (see, e.g., \cite{LynchShavitME,corbett2013spanner,dickerson2010time,harris2008application,MorrisonTSO,G8271,WhiteRabbit}).
There is a vast literature on real-time systems and on time in multi-agent systems (see, e.g., \cite{kopetz2011real}). Clock synchronization based on bounds on message transmission times was studied extensively in the 70's and 80's \cite{Attiya1996,DHS,DHSS,HMM,LM,lundelius1984upper,PSR,SrikT}; see  \cite{simons1990overview} for an early survey. 
 
One important aspect that our work shares with the clock synchronization literature is the fact that bounds on the duration of events or on transmission times play an important role. Indeed, some of our technical analysis is based on bounds graphs that are strongly inspired by~\cite{PSR} and~\cite{MB}. In particular, the notion of timed precedence we use comes from~\cite{MB}. Our study diverges from the existing literature in the fact that no clocks or timers whatsoever are assumed, and the only timing information comes from the observed events and the guaranteed bounds. 
 
An early suggestion to use knowledge to study time and coordination appeared in~\cite{Moses:know92}. Knowledge theory has been used to investigate the protocols and communication patterns that can solve coordination problems in systems with global clocks or accurate timers in
\cite{BzMacm,BzM2011,ben2013agent,BzMshape,GMvectorial}. In such settings, these works provide tools for a wide variety of coordination tasks. 


\section{Model and Preliminary Definitions}
\label{sec:model}
\subsection{The  Bounded Communication Model}
\label{sec:cbm}
We focus on a simple setting of 
a communication network $\Net=(\Proc,\Chan)$ modeled by a directed graph whose nodes are the processes $\Proc=\{1,\ldots,n\}$ and whose edges are the communication channels among them. 
We identify time with the natural numbers, $\Nat$, where a single time step should be thought of as the minimal relevant unit of time. There are lower and upper bounds on message transmission times per channel, specified by a pair of functions \mbox{$\lb{},\ub{}\!:\!\Chan\to\Nat$}, that satisfy $1\le\lb{ij}\le\ub{ij}<\infty$ for all $(i,j)\in\Chan$.

Paths in~$\Net$ are specified by sequences of process names. 
We denote a singleton sequence~$[i]$ simply by~$i$, and the concatenation of sequences~$p$ and~$q$  by $\comp{p}{q}$. 
We define the composition of two sequences $p=[i_1,\ldots,i_k,j]$ and $q=[j,h_1,\ldots,h_m]$ in which the last element of~$p$ coincides with the first element of~$q$ by 
$\ocomp{p}{q}\eqdef [i_1,\ldots,i_k,j,h_1,\ldots,h_m]$. 
We extend the notation of upper and lower bounds on message transmission times to paths {$p=[i_1,\ldots,i_d]$} in the network graph by defining 
\[\lb{}(p)\eqdef\!\sum\limits_{i_k=1}^{d-1}\lb{i_ki_{k+1}}\mbox{\quad and\quad }\,\ub{}(p)\eqdef\!\sum\limits_{i_k=1}^{d-1}\ub{i_ki_{k+1}}.\]

{For ease of exposition, actions are assumed to be instantaneous}.%
\footnote{Our analysis will apply even in the case in which actions extend over time. 
Such an action will be modeled as a special channel from the process to itself, with lower and upper bounds for the channel. The invocation and completion of the action would each be instantaneous events.}
A global state (or a snapshot of the system) will have the form $\bm{g}=(\ell_e,\ell_1,\ldots,\ell_n)$, consisting of a state~$\ell_e$  for the environment~$e$, and one local state~$\ell_i$  for every process $i\in\Proc$. 
A tuple $\gamma=\big((\Net,\lb{},\ub{}),\Gz\big)$ whose first component is a time-bounded network as described above, and $\Gz$ is a set of possible initial global states, 
is called the \defemph{context} in which a protocol operates. 

A run~$r$ is an infinite sequence of global states. Thus, $r(m)$ is a global state for every $m\in\Nat$. We denote by $r_i(m)$ process~$i$'s local state in $r(m)$. 
Processes can perform (application-dependent) local actions and send messages along their outgoing edges. For simplicity, we will assume that the local state of a process consists of an initial state followed by the sequence of events (local actions, message sends and message receives)  that the process has observed. 

The bounds on message delivery are enforced by assuming the existence of a scheduler, which we call the {\em environment}. The environment's local state contains the current contents of all channels in~$\Chan$, and for every message in a channel it also records the time at which the message was sent. 
At any point in time, the environment can deliver messages to each of the processes. It can nondeterministically choose whether or not to deliver a message~$\mu$  in a channel $(i,j)\in \Chan$ at time~$t$ if the sending time~$t_\mu$  of~$\mu$ satisfies 
$\lb{ij}\le t-t_\mu<\ub{ij}$. The environment \defemph{must} deliver~$\mu$ to~$j$ at time~$t$ if $t-t_\mu=\ub{ij}$. 
We remark that if a message~$\mu$ is delivered to~$i$ at time~$t$ in the run~$r$, then~$i$'s local state at time~$t$, $r_i(t)$, will  record the fact that~$i$ received~$\mu$.

We assume a set~$\Ext$ of \defemph{external} messages, where the environment may nondeterministically choose at any point ($t>0$) whether to deliver messages from~$\Ext$ to an arbitrary process. Such delivery is spontaneous, and is independent of other (past or present) events in the run. 
For simplicity we assume that a particular external message of~$\Ext$ can be delivered to at most one process in a given run. 
Since processes  in the \bcm\ model have no clocks, we assume that their actions are event based. A process is scheduled to move only when it receives messages (either external or internal).%
\footnote{In particular, processes do not spontaneously perform actions at time~0.}
  It can then perform a finite sequence of actions. 

Recall that we assumed in \Cref{def:1} that~$C$'s decision to send a \Go\ message in an instance of $\early{\sfb}{\sfa}{x}$ or $\late{\sfa}{\sfb}{x}$ is spontaneous. Formally, we will assume that there is a message $\mu_{go}\in\Ext$ such that~$C$ will send the \Go\ message to~$A$ when it receives~$\mu_{go}$. 

Processes follow a protocol $\Prot =(\Prot _1,\ldots,\Prot _n)$, where $\Prot _i$, process~$i$'s protocol,  is a deterministic  function of~$i$'s local state. 
A specific class of protocols we use in this paper are what we call \defemph{flooding full-information protocol}s (\ffip).
An \ffip\ is a protocol in which each process that receives a message immediately sends a message, containing its entire local state, to all of its neighbors.
In a precise sense, \ffip's are general protocols for~\bcm: Just as with standard full-information protocols in the synchronous model (see, e.g., \cite{coan1986}), it is possible to simulate any given protocol in the \bcm\ model by one that communicates according to the \ffip.


Given a protocol $\Prot$ and a bounded context $\gamma=\big((\Net,\lb{},\ub{}),\Gz\big)$, we denote by $\Rrep=\Rrep(\Prot ,\gamma)$ the set of runs of~$\Prot $ in context~$\gamma$. 
We call it the system representing~$\Prot$ in~$\gamma$. 
A run~$r$ belongs to $\Rrep$ exactly if (1) $r(0)\in\Gz$, and (2) for all $m>0$, $r(m)$ is obtained from $r(m-1)$ following the rules described above. 
A more formal definition appears in \proofs.
Henceforth, whenever a system is mentioned, it is assumed to have this form. 
We say that a given protocol~$\Prot$  \defemph{implements} $\early{\sfb}{\sfa}{x}$ (resp.~$\late{\sfa}{\sfb}{x}$) if in all runs $r\in\Rrep(\Prot,\gamma)$ process~$A$ performs~$\sfa$ when it receives the \Go\ message, and process~$B$ performs~$\sfb$ in~$r$ only if~$\sfa$ is performed in~$r$, and only at a time that is consistent with the specification of~$\early{\sfb}{\sfa}{x}$ (resp.\ the specification of~$\late{\sfa}{\sfb}{x}$).

\subsection{Reasoning about \bcm\ Systems}
\label{sec:reasoning}
In the coordination problems $\early{\sfb}{\sfa}{x}$ and $\late{\sfa}{\sfb}{x}$ specified in definition~\ref{def:1}, process $B$ needs to decide whether and when to perform a particular action~$\sfb$.
In particular, it needs to estimate the relative time difference between points on different processes' timelines: It's current point, and the point at which~$A$ performs~$\sfa$. 
%
Because processes have no clocks, formally defining points on a timeline is somewhat subtle. 
Rather than distinguishing the points along the timeline of a given process according to the times at which they arise,  to which processes have no access, one useful way is to identify a local point with the local state of the process. 
We call  a pair $\sigma=\bnode{i,\ell}$ consisting of a process name and a local state for this process a {\em basic node}. 
In order to emphasize its site~$i$, we sometimes call such a node an $\bm{i}$-\defemph{node}.
We say that a basic node~$\bm{\sigma}=\bnode{i,\ell}$ \defemph{appears in~$\bm{r}$} if $r_i(m)=\ell$ holds for some time~$m$.
%

While the local state of a process in the FFIP protocol does not repeat twice in non-contiguous intervals of the same run, a local state can remain constant along some time interval. 
During such an interval  the process cannot observe the passage of time; it observes only the state transitions. 
For a basic node~$\sigma=\bnode{i,\ell}$ that appears in a run~$r$, 
we define $\bm{\timeof_r(\sigma)}$ to be the minimal~$m$ such that \mbox{$r_i(m)=\ell$}.
This allows us to treat a basic node as specifying a particular (externally observable) time in the run.%
\footnote{Since processes act in an event-driven fashion, 
 the time at which~$i$ acts in~$r$ when in local state~$\ell$  is precisely $\timeof_r(\sigma)$.}
  While a run~$r$ can be uniquely determined by the set of its basic nodes and their respective times, different runs can possess the same set of basic nodes, and differ in their timing.
 
 For a given site~$i$, an $i$-node $\sigma'$ is called a {\em successor} of another~$i$-node $\sigma$ in~$r$ if 
 $\timeof_r(\sigma)<\timeof_r(\sigma')$  and there is no $i$-node $\sigma''$ such that 
 $\timeof_r(\sigma)<\timeof_r(\sigma'')<\timeof_r(\sigma')$. 
 If $\sigma'$ is the successor of~$\sigma$, then we call~$\sigma$  the {\em predecessor} of $\sigma'$.
\begin{definition}
Given a run~$r$, we define Lamport's {\em happens-before} relation among basic nodes that appear in $r$, denoted by $\bm{\sigma'\lamr\sigma}$, to be the minimal  transitive relation that satisfies (i) Locality: If both $\sigma'$ and~$\sigma$ are $i$-nodes and $\timeof_r(\sigma')\le \timeof_r(\sigma)$, then $\sigma'\lamr\sigma$, and (ii) if a message is sent in the run~$r$ from~$\sigma'$ and delivered to~$\sigma$, then $\sigma'\lamr\sigma$. We say that $\sigma'$ is \defemph{in the past of~$\bm{\sigma}$} in~$r$ if $\sigma'\lamr\sigma$, and we define $\bm{\past(r,\sigma)}\eqdef\{\sigma' : \sigma'\lamr\sigma\}$.\footnote{
While we use a specific run $r$ in the definition of ``$\lamr$'', since we restrict attention to full-information protocols, the run does not play an essential role, as if $\sigma\lamr\sigma'$ then this relation hold w.r.t.\ run in which both nodes appear.}
\end{definition}

\paragraph{General Nodes}\quad 
 We view a process as having access to its local state, and hence to its current basic node, at any point. Indeed, since processes are assumed to be following a full-information protocol, it also has access to all basic nodes that appear in its past. But the points with which $B$ should coordinate its action (the points where $A$ performs $\sfa$) are often not in its past. So~$B$ is aware neither of the real time at which they occur, nor of the basic nodes, since it cannot identify the local state of process~$A$. Recall, however, that processes are assumed to follow an \ffip\ protocol, in which whenever a process receives a message or external input, it broadcasts this to all of its neighbors. So if $\sigma'\in\past(r,\sigma)$ and~$\sigma'$ is not an initial node (i.e.~not a node from time~0), then there are typically many message chains starting at~$\sigma'$. It is with such nodes that it may need to coordinate, and about whose timing we need to reason. We now define the class of general nodes, which can be defined as being at the end of a path in the network from a given basic node. We proceed as follows. 
 \begin{definition}
 \Dlabel{def:general}
 Let $\sigma$ be a basic $i$-node, and $p$ be a path in $\Net$ that begins at~$i$, then $\nodeF{\sigma, p}$ is a (general) node that describes the basic node that will receive the message chain that goes along $p$ starting at $\sigma$.
We say that $\bm{\tet=\nodeF{\sigma, p}}$ \defemph{appears in} a run $\bm{r}$ if both $\sigma$ appears in $r$, and~$p$ is a path in~$\Net$ (so that  there is a message chain in~$r$ that leaves $\sigma$ and goes along $p$).
\end{definition}
Note that if $p$ is a singleton (i.e. $p=[i]$), then $\sigma$ is an $i$-node and 
$\tet=\nodeF{\sigma, p}$ denotes~$\sigma$ itself. 
However, if $p$ is not a singleton, then~$\tet$ corresponds to a basic node whose identity depends on the run in question. The correspondence is defined as follows. 

\begin{definition} 
\label{def:gnode}
Let $\tet=\nodeF{\sigma, p}$ be a node that appears in the run $r\in\Rrep$.
The basic node that corresponds to $\tet$ in $r$, $\basic(\tet,r)$, is defined inductively as follows: 
\begin{itemize}
\item[(a)] If $\theta=\nodeF{\sigma,j}$ (so $p$ is a singleton), 
then $\basic(\tet,r)=\sigma$.
 \item[(b)] Le~$p$ be non singleton, $p=\Comp{p'}{j}$, and $\basic(\nodeF{\sigma,p'},r)=\sigma'$. 
If the message sent in~$r$ from~$\sigma'$ to process~$j$ is delivered at~$\sigma''$, then $\basic(\nodeF{\sigma,p},r)=\sigma''$.
 \end{itemize}
\end{definition}

General nodes will inherit properties from their corresponding basic nodes. Thus, we define $\timeof_r(\theta)\eqdef\timeof_r\big(\basic(\theta,r)\big)$, we write 
$\tet\lamr\tet'$ iff $\basic(\tet,r)\lamr\basic(\tet',r)$, and call~$\tet$ an $i$-node if $\basic(\tet,r)$ is an $i$-node. 
For a $j$-node $\theta=\node{\sigma,p}$ of~$r$ and a path~$q$ in~$\Net$, where~$q$ begins at process~$j$,
it will be convenient to write $\ocomp{\theta}{q}$ as shorthand for the node $\node{\sigma,\ocomp{p}{q}}$. 

Clearly, a node $\tet'=\nodeF{\sigma',p'}$ can appear in a run~$r$ only if $\sigma'$ appears in $r$. 
However, if~$\sigma$ appears in~$r$ and $\sigma'\not\lamr\sigma$, then~$\sigma$ might not be able to distinguish whether $\tet'=\nodeF{\sigma',p'}$ indeed appears in the current run.
We shall say that a general node $\tet'=\nodeF{\sigma',p'}$ is \defemph{$\bm{\sigma}$-recognized} iff $\sigma'\lamr\sigma$.
Note that in an \ffip\ protocol, $\sigma$ actually ``knows'' that every \sigAware\ node appears in the run.

\section{Zigzag Patterns and Timed Precedence}
\label{sec:zig-prec}
%
We adapt the notion of timed precedence from \cite{MB} to nodes in our setting. Formally, given a run $r\in\Rrep$, we say that 
a run $\bm{r}$~\defemph{satisfies} $\theta\atleast{x}\theta'$, and write $(R,r)\sat \theta\atleast{x}\theta'$, iff
both (i) the nodes~$\theta$ and~$\theta'$ appear in~$r$, and (ii)~\mbox{$\timeof_r(\theta)+x\,\le\,\timeof_r(\theta')$}.
(While the system~$\Rrep$ does not play a role in this definition, it is included here because it will play a role in our later analysis.) \\[-.6ex]

Our discussion in~\Cref{sec:intro} shows that communication as in \Cref{fig:1} ensures that $\sfa\AAtleast{\scriptscriptstyle{\lb{CB}-\ub{CA}}~}\sfb$, and similarly that a pattern as in~\Cref{fig:2} ensures a precedence as captured in \Cref{eq:bounds}. We now define general zigzag patterns and relate them to timed precedence. 
The basic building block is a {\em two-legged fork} (see~\Cref{fig:2legF}): 
\begin{definition}
 \Dlabel{def:two-leg}
A \defemph{two-legged fork} in~$r$ is a 
triple 
$F=\pair{\theta_0,\theta_1,\theta_2}$ 
of
nodes of~$r$,   such that $\theta_1=\ocomp{\theta_0}{p_1}$ and $\theta_2=\ocomp{\theta_0}{p_2}$, for process sequences~$p_1$ and~$p_2$. 
We denote $\base(F)=\theta_0$, $\head(F)=\theta_1$, and $\tail(F)=\theta_2$.  
\end{definition}

In a two-legged fork, there are direct message chains (possibly empty) from the base node to the head and to the tail of the fork. \Cref{fig:1} is an example of a two-legged fork in which the message chains consist of single messages, while \Cref{fig:2legF} depicts one with longer paths from the base node to head and tail nodes.  

\begin{figure}[h]
\label{fig:2legF}
\begin{center}
\includegraphics[height=2in]{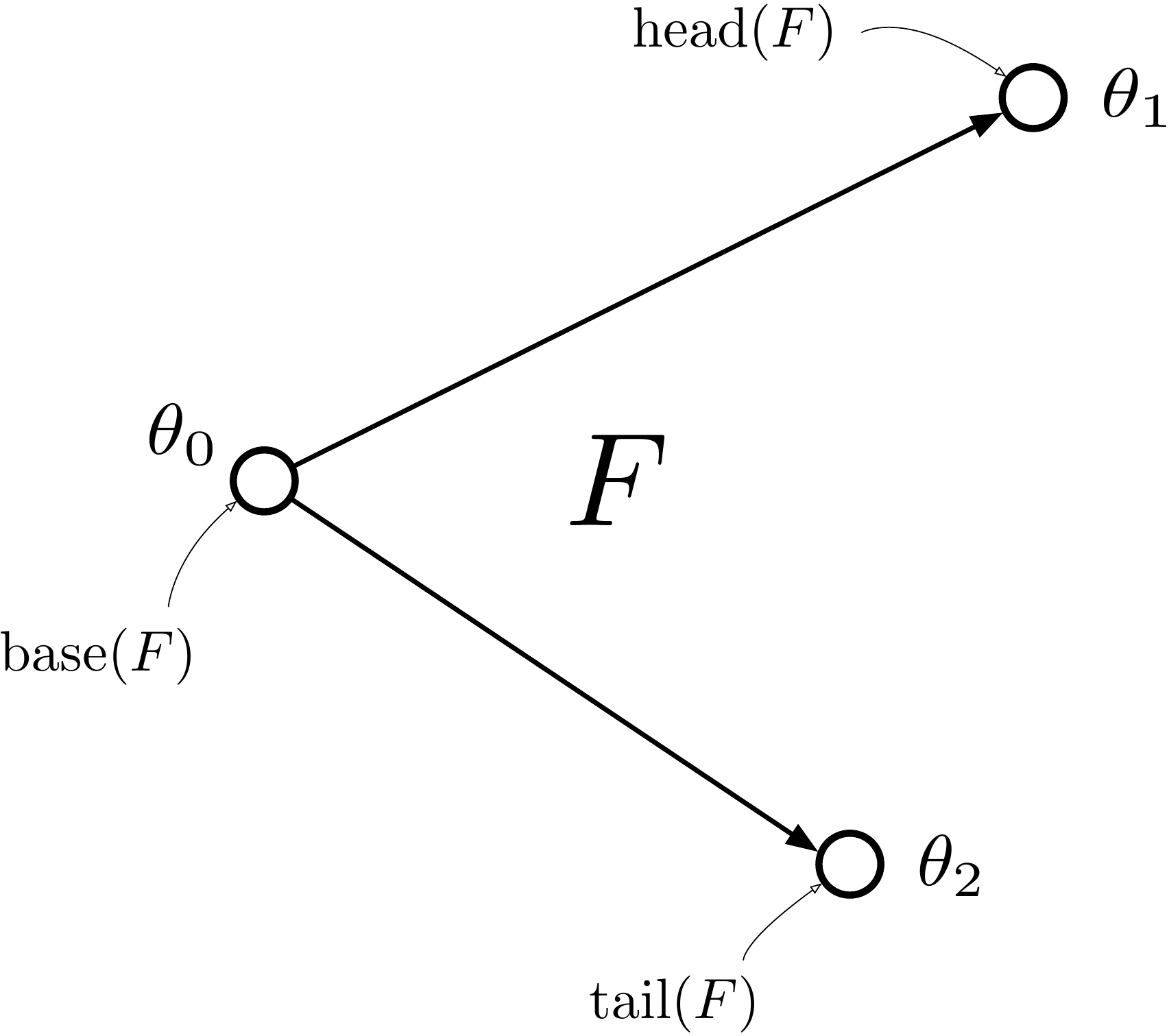}\\[2ex]
\caption{A two-legged fork~$F=\langle \theta_0,\theta_1,\theta_2\rangle$.}
\label{fig:2legF}
\end{center}
\end{figure}
Let $\theta$ be an $i$-node, and let $F=\pair{\theta,\ocomp{\theta}{p_1},\ocomp{\theta}{p_2}}$ be a two-legged fork in~$r$.
We define 
the \defemph{weight} of~$F$ to be 
 \[\weight(F)~~\eqdef~~\lb{}(p_1) - \ub{}(p_2).\]
In \Cref{fig:1}, for example, $\weight(F)=\lb{CB}-\ub{CA}$. 
The existence of a two-legged fork~$F$ in~$r$ with tail~$\theta_1$ and head~$\theta_2$ implies that \mbox{$(R,r)\sat\; \theta_1\Atleast{\weight(F)}\theta_2$}. 

A Zigzag pattern is made of a sequence of suitably composed two-legged forks. 
Roughly speaking, the head of each fork should be on the same timeline as, but appear no later than, the tail of the next fork in the sequence. If they coincide at the same basic node the forks are called {\em joined}. Otherwise, the tail will be at least one time unit later than the preceding head. 
More formally, we define
\begin{definition}
 \Dlabel{def:zig}
A \defemph{zigzag pattern} from node~$\theta$ to~$\theta'$ in the run~$r$ is a sequence $Z=(F_1,\ldots,F_c)$ of two-legged forks in~$r$, with $c\ge 1$, such that~$\tail(F_1)=\theta$ and $\head(F_c)=\theta'$. 
Moreover, if $c>1$ then for every $k=1,\ldots,c-1$ there is a process~$j$  such that 
both~$\head(F_k)$ and $\tail(F_{k+1})$ correspond to $j$-nodes, 
and $\timeof_r(\head(F_k))\le \timeof_r(\tail(F_{k+1}))$. 
\end{definition}
\Cref{fig:2} depicts a zigzag pattern consisting of $c=2$ forks, which are not joined, since the head of the lower fork and the tail of the upper one
correspond to distinct nodes on~$D$'s timeline.

The notion of weight extends to zigzag patterns. Consider a zigzag pattern  $Z=(F_1,\ldots,F_c)$, and denote by $\joins(Z)$ the number of forks $F_k\in Z$ that are not joined to their successor (i.e., $\head(F_k)$ strictly precedes $\tail(F_{k+1})$).
The \defemph{weight} of~$Z$ is defined by 
 \[\weight(Z)~\eqdef~~~\sum\limits_{k=1}^c\weight(F_k)~~+~~\joins(Z).\]
We can thus justify the claim that zigzag patterns are sufficient for establishing timed precedence in~\bcm\ systems: 
\begin{theorem}[Zigzag Sufficiency]
\label{thm:zzsimple}
Let $Z$ be a zigzag pattern from node~$\te{1}$ to $\te{2}$ in the run~$r\in\Rrep$. 
Then
\mbox{$(R,r)\sat\; 
\theta_1\Atleast{\weight(Z)}\theta_2$.}
\end{theorem}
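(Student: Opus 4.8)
The plan is to argue by induction on the number $c$ of two‑legged forks in the zigzag pattern $Z=(F_1,\dots,F_c)$, using the single‑fork estimate for the base case and peeling off the last fork $F_c$ in the inductive step. Recall that $(R,r)\sat\theta\atleast{x}\theta'$ unpacks into (i) $\theta$ and $\theta'$ appear in $r$ and (ii) $\timeof_r(\theta)+x\le\timeof_r(\theta')$. Requirement (i) is free at every stage: each $F_k$ is a two‑legged fork in $r$, so all of its nodes appear in $r$, and in particular $\theta_1=\tail(F_1)$ and $\theta_2=\head(F_c)$ appear in $r$. Hence the whole content is the numeric inequality (ii) for the endpoints.

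Base case $c=1$: here $Z=(F_1)$ with $\tail(F_1)=\theta_1$, $\head(F_1)=\theta_2$, and $\joins(Z)=0$ since the single fork has no successor, so $\weight(Z)=\weight(F_1)$. The claim is then precisely the estimate already recorded in the text just before the theorem, applied to $F_1$: the existence of a two‑legged fork with tail $\theta_1$ and head $\theta_2$ yields $(R,r)\sat\theta_1\atleast{\weight(F_1)}\theta_2$. (Concretely, writing $p_1,p_2$ for the legs of $F_1$, the head leg accumulates at least $\lb{}(p_1)$ and the tail leg at most $\ub{}(p_2)$ of transmission time out of the common base $\base(F_1)$, so $\timeof_r(\theta_2)-\timeof_r(\theta_1)\ge\lb{}(p_1)-\ub{}(p_2)=\weight(F_1)$.)

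Inductive step: let $c\ge 2$ and set $Z'\eqdef(F_1,\dots,F_{c-1})$, which is a zigzag pattern from $\theta_1$ to $\theta'\eqdef\head(F_{c-1})$; by the induction hypothesis $\timeof_r(\theta_1)+\weight(Z')\le\timeof_r(\theta')$. The gluing clause of \Cref{def:zig} supplies a process $j$ for which $\theta'=\head(F_{c-1})$ and $\tail(F_c)$ are both $j$‑nodes and $\timeof_r(\theta')\le\timeof_r(\tail(F_c))$. Since $\timeof_r$ is injective on the $j$‑nodes appearing in $r$ — at each time step $r_j(m)$ is a single local state, so distinct $j$‑nodes have distinct minimal times — this inequality is an equality exactly when $F_{c-1}$ is joined to $F_c$ (they are the same basic node), and otherwise ``strictly precedes'' together with $\Nat$‑valued time gives $\timeof_r(\theta')+1\le\timeof_r(\tail(F_c))$. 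Let $\varepsilon\in\{0,1\}$ be the indicator of ``$F_{c-1}$ not joined to $F_c$'', so $\timeof_r(\theta')+\varepsilon\le\timeof_r(\tail(F_c))$. Chaining this with the induction hypothesis and with the base‑case estimate applied to $F_c$ (namely $\timeof_r(\tail(F_c))+\weight(F_c)\le\timeof_r(\head(F_c))=\timeof_r(\theta_2)$) yields
\[
\timeof_r(\theta_1)+\weight(Z')+\varepsilon+\weight(F_c)\;\le\;\timeof_r(\theta_2).
\]
It remains to check the weight bookkeeping $\weight(Z')+\varepsilon+\weight(F_c)=\weight(Z)$: we have $\weight(Z')=\sum_{k=1}^{c-1}\weight(F_k)+\joins(Z')$, the forks counted by $\joins(Z)$ are exactly those counted by $\joins(Z')$ together with $F_{c-1}$ when $\varepsilon=1$, and $F_c$ contributes nothing to $\joins(Z)$; substituting gives $\timeof_r(\theta_1)+\weight(Z)\le\timeof_r(\theta_2)$, i.e. $(R,r)\sat\theta_1\Atleast{\weight(Z)}\theta_2$.

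The step I expect to require the most care is the $\joins$‑accounting when detaching the last fork: one must correctly match the index set of $\joins(Z')$ against that of $\joins(Z)$, and recognize that a non‑joined glue point contributes exactly the single extra time unit $\varepsilon=1$ (no more, no less) — this is where injectivity of $\timeof_r$ along a timeline and the discreteness of time are used. Everything else is a routine chain of inequalities.
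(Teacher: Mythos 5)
Your proof is correct and formalizes exactly the argument the paper sketches in the paragraph following the theorem (the appendix contains no separate detailed proof of this result): chain the per-fork estimate $\timeof_r(\tail(F_k))+\weight(F_k)\le\timeof_r(\head(F_k))$ across the glue points, adding one time unit for each non-joined pair. The induction on $c$, the treatment of the appearance condition, and the $\joins$-bookkeeping when peeling off $F_c$ are all right.
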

The intuition is that each fork implies a timed precedence between its tail and its head, and the concatenation of forks in the zigzag pattern introduce a simple timed precedence between the head of one fork and the tail of its successor. Recall that, by assumption, if the successive forks are not joined, then they are separated by at least one time unit.
%

What is perhaps more instructive than \Cref{thm:zzsimple} is that, in a precise sense, the only way to guarantee a timed precedence relation is via a zigzag pattern of this type. More formally, 
 we say that a system~$\Rrep$ \defemph{supports} the statement $\te{1}\atleast{x}\te{2}$
if, for all~$r\in\Rrep$, if one of the nodes~$\te{1}$ or $\te{2}$ appears in~$r$, then both nodes appear in~$r$, and 
$(\Rrep,r)\sat\te{1}\atleast{x}\te{2}$.
We can show:
\begin{theorem}[Zigzag Necessity] 
\label{thm:zz}
\label{THM:ZZ} 
Suppose that $\Rrep$ supports \mbox{$\te{1}\atleast{x}\te{2}$}. Moreover, assume that $\te{1}$ and $\te{2}$ both appear in a run~$r\in\Rrep$, with $\timeof_r(\te{1})>0$ and \mbox{$\timeof_r(\te{2})>0$}.
Then there is a zigzag pattern~$Z$ in $r$ from $\te{1}$ to~$\te{2}$ with $\weight(Z)\ge x$. 
\end{theorem}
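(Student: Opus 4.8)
The plan is to prove the contrapositive-flavored statement directly: assuming $\Rrep$ supports $\te{1}\atleast{x}\te{2}$, I will show that if no zigzag pattern of weight $\ge x$ exists in $r$ from $\te{1}$ to $\te{2}$, then I can build an alternative run $r'\in\Rrep$ in which both nodes still appear but $\timeof_{r'}(\te{2})<\timeof_{r'}(\te{1})+x$, contradicting support. The key object is a ``tightest shift'' of the run: I want to push $\te{2}$ as early as possible (equivalently, delay $\te{1}$) while keeping the system's scheduling constraints satisfiable, and read off from the obstruction to shifting exactly a zigzag pattern. This is the standard bounds-graph argument hinted at in \Cref{sec:highlights}: build a graph whose vertices are the basic nodes of $r$, with edges encoding the per-channel constraints $\lb{ij}\le t_{\sigma''}-t_{\sigma'}$ and $t_{\sigma''}-t_{\sigma'}\le\ub{ij}$ and the locality constraints $t_{\sigma}\le t_{\sigma'}$ along each timeline, and then extract a critical path witnessing the bound between $\te{2}$ and $\te{1}$.

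First I would set up the linear program / difference-constraint system: variables $t_\sigma$ for every basic node $\sigma$ appearing in $r$, constraints as above (these are exactly the conditions that make a timing assignment realize a legal run of the \ffip\ in the given context), plus the requirement that external messages are delivered and the spontaneous events occur. The original times $\timeof_r(\cdot)$ form a feasible solution. Minimizing $t_{\basic(\te{2},r)}-t_{\basic(\te{1},r)}$ over this polyhedron, the optimum is attained and, by LP duality / the standard shortest-path characterization of difference constraints, equals the length of a shortest path from $\basic(\te{2},r)$ to $\basic(\te{1},r)$ in the constraint graph where a lower-bound edge $\sigma'\to\sigma''$ has weight $-\lb{ij}$, an upper-bound edge $\sigma''\to\sigma'$ has weight $\ub{ij}$, and a locality edge has weight $0$ (in the backward direction) or is used to chain within a timeline. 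Because $\Rrep$ supports the statement, this optimum is $\ge x$; I must then argue that any such path decomposes into the shape of \Cref{def:zig}.

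The decomposition step is where the real work is. Reading the shortest path from $\basic(\te{2})$ back toward $\basic(\te{1})$, it alternates between ``going backward along a lower-bound chain'' and ``going forward along an upper-bound chain,'' possibly with zero-weight locality hops in between; each maximal up-then-down excursion is precisely a two-legged fork $F_k$ with $\head(F_k)$ and $\tail(F_k)$ identified correctly, $\weight(F_k)=\lb{}(p_1)-\ub{}(p_2)$, and a locality hop between consecutive forks contributes exactly the ``$\ge 1$ time unit'' that is counted in $\joins(Z)$ — the weights match because a genuine strict successor on a timeline is at least one step later, whereas a joined transition contributes $0$. Summing, $\weight(Z)$ equals (minus) the path length, hence $\ge x$. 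One has to be careful that the path can indeed be arranged into this canonical alternating form: a detour argument shows that any ``backward on upper-bound'' or ``forward on lower-bound'' edge, or any locality hop in the wrong direction, can be removed or re-routed without increasing length, and that a shortest path has no cycles (feasibility of the original run rules out negative cycles), so finitely many forks suffice. I also need the hypotheses $\timeof_r(\te{i})>0$: they ensure the relevant basic nodes are not initial, so each has genuine incoming/outgoing message chains to serve as fork legs and the path does not get ``stuck'' at a time-0 node with no predecessor.

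The main obstacle I expect is precisely this last normalization: proving that the optimal timing shift is witnessed by a path that literally has the zigzag shape of \Cref{def:zig}, rather than merely by some abstract dual certificate. Concretely, one must show (i) the constraint graph has no negative cycle reachable in the relevant component — immediate from feasibility of $\timeof_r$; (ii) the shortest path can be taken simple; and (iii) every simple shortest path, after a weight-preserving clean-up of locality edges, parses uniquely as $(F_1,\dots,F_c)$ with the head/tail adjacency and timing inequality of \Cref{def:zig}, and with $\sum\weight(F_k)+\joins(Z)$ equal to its negated length. Granting those three, \Cref{thm:zzsimple} is the converse and the two theorems together pin down zigzag patterns as the exact mechanism for timed precedence. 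I would carry the details of (iii) out by induction on the number of ``direction changes'' in the path, matching each change to a fork boundary.
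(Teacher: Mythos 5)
Your plan follows essentially the same route as the paper: build the bounds graph over the basic nodes of $r$, argue that the extremal (tightest) timing of $\basic(\te{2},r)$ relative to $\basic(\te{1},r)$ is governed by an optimal path in that graph, invoke ``supports'' to force that path's weight to be at least $x$, and decompose the path into two-legged forks. The decomposition step you flag as the main obstacle is in fact the easier half and matches the paper's \Cref{lem:zzisgb} (an induction on path length with a three-way case split on edge type; your accounting of $+1$ for a locality hop versus $0$ for a joined transition is exactly right). The genuine gap is elsewhere: you assert parenthetically that the difference constraints ``are exactly the conditions that make a timing assignment realize a legal run of the \ffip,'' and then let LP duality do the rest. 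But ``supports'' quantifies over runs of $\Rrep$, not over feasible points of a constraint polyhedron, so the optimal primal solution must actually be turned into a run $r'\in\Rrep$ in which both nodes appear (and in which $\te{2}$ still corresponds to the same basic node). This realization is a substantive lemma in the paper (\Cref{lem:runbytime}, plus \Cref{lem:thetasaresame} for general nodes): it requires restricting to a precedence-closed set of nodes, handling initial nodes (which are all pinned to time $0$ and whose relative timing is not captured by any graph edge), and checking inductively that no message is forced to be delivered into the past of a node that is supposed to remain unchanged. None of this is supplied by duality.

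A second concrete gap is the case in which the constraint graph contains \emph{no} path from $\basic(\te{1},r)$ to $\basic(\te{2},r)$: your claim that ``the optimum is attained'' fails there, and the conclusion you need is not that the difference can be made smaller than $x$ but rather that there is a run in which $\te{2}$ appears and $\te{1}$ does not, contradicting the ``both appear or neither appears'' clause of the definition of supports. The paper handles this by building the alternative run only over the nodes that have a path to $\basic(\te{2},r)$, so the disconnected node simply drops out. You would need to add both of these pieces for the argument to close.
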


Suppose that a protocol guarantees a particular time precedence constraint among a given pair of actions. Then it must ensure the existence of an appropriate zigzag pattern in the run. We remark that the requirement that $\timeof_r(\te{2})>0$ in the theorem ensures that the node~$\te{2}$ is not an initial node. In our model, protocols cannot perform actions at initial nodes, and precedence among initial nodes can be obtained without the existence of zigzags.


\section{Using Zigzag Causality for Coordination}
\label{sec:using}
%
Theorems~\ref{thm:zzsimple} and~\ref{thm:zz} show that zigzag patterns are necessary and sufficient for ensuring that a precedence relation between two nodes holds. 
It follows, for example,  that~$B$ can act in an instance of $\early{\sfb}{\sfa}{x}$ only at a node that is the tail of a zigzag pattern of weight~$x$ whose head is the node at which A performs~$\sfa$. (Similarly, the roles of head and tail need to be reversed for an instance of $\late{\sfa}{\sfb}{x}$.)
However, as discussed in the introduction, it is not guaranteed that a node at either end of the zigzag pattern is able to detect the existence of the pattern,
and such endpoint node might not know that the necessary precedence condition holds. 

We will show that in order to act in one of the two coordination tasks we are considering, $B$ must know that $\te{1}\atleast{x}\te{2}$ holds, for the two nodes at which~$A$ and~$B$ act. Next, we will characterize the communication patterns that give rise to such knowledge of a timed precedence, and thus are necessary for coordinating~$A$ and~$B$'s actions. We start by defining an appropriate notion of knowledge for the bcm model, which will allow us to formulate and prove these results.

\subsection{Reasoning About Coordination}
Our focus is on coordinating actions at different sites in a manner that satisfies temporal constraints. 
We use the notion of knowledge of \cite{FHMV} to reason about what a process knows about the relevant aspects of the timing of events. We now describe just enough of the logical framework to support our analysis. 

Two runs $r,r'\in\Rrep$ are said to be \defemph{indistinguishable} at the basic node~$\sigma$, which we denote by $r\sim_\sigma r'$,  if~$\sigma$ appears both in~$r$ and in~$r'$. Intuitively, if $\sigma=(i,\ell)$ appears in both runs, then when~$i$'s local state is~$\ell$, it cannot distinguish whether the run is~$r$ or~$r'$. Knowledge is the dual of indistinguishability. I.e., a fact is known at a node if it is true of all indistinguishable runs. In particular, in this paper, we focus on knowledge of precedence statements at basic nodes. 
 We write $(\Rrep,r)\sat \Ksub{\sigma}(\theta_1\atleast{x}\theta_2)$ to state that in the run~$r\in\Rrep$ the precedence statement is known at the basic node~$\sigma$.  It is formally defined as follows:%
\footnote{It will suffice to define knowledge at basic nodes, here, since our analysis does not concern knowledge about what is known at other nodes. 
For a more general treatment, it is possible to define $(\Rrep,r)\sat \Ksub{\theta}p$ to hold precisely if $(\Rrep,r)\sat \Ksub{\sigma}p$ holds at $\sigma=\basic(\theta,r)$.}  
 
 \begin{tabular}{lll}
 $(\Rrep,r)\sat \Ksub{\sigma}(\theta_1\atleast{x}\theta_2)$ & iff & $(\Rrep,r')\sat \theta_1\atleast{x}\theta_2$ \white{.}~~~holds \white{for} \\
 && \hspace{-3mm}for all~$r'\in\Rrep$ such that $r\sim_\sigma r'$.
 \end{tabular}
When performing the action~$\sfb$ in solving a coordination problem such as $\early{\sfb}{\sfa}{x}$ or $\late{\sfa}{\sfb}{x}$,  process~$B$ must know that its current basic node satisfies the required precedence condition with respect to the node~$\theta'$ at which~$A$ performs its action~$\sfa$. This is formalized as follows. 

\begin{theorem}
\label{thm:kop}
\label{THM:KOP} 
Suppose that~$C$ sends~$A$ a \Go\ message at basic node~$\sigC$ in run \mbox{$r\in\Rrep=\Rrep(\Prot,\gamma)$}, and that~$B$ performs~$\sfb$  at node~$\sigma$ in~$r$.
If~~$\Prot$ implements $\late{\sfa}{\sfb}{x}$  
then \mbox{$(\Rrep,r)\sat\Ksub{\sigma}(\Comp{\sigC}{{\scriptstyle{A}}}\atleast{x}\sigma)$}. 
Similarly, $(\Rrep,r)\!\sat\!\Ksub{\sigma}(\sigma\atleast{x}\comp{\sigC}{{\scriptstyle{A}}})$ ~if ~$\,\,\,\Prot$ implements \mbox{$\early{\sfb}{\sfa}{x}$}.
\end{theorem}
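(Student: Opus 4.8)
The plan is to derive the statement from the Knowledge of Preconditions principle of~\cite{TarKop}: because $\Prot$ implements $\late{\sfa}{\sfb}{x}$, the timed precedence $\comp{\sigC}{A}\atleast{x}\sigma$ is a \emph{precondition} for $B$'s performing $\sfb$ at the node $\sigma$, so $B$ must know it when it acts. Concretely I will unfold the semantic definition of $\Ksub{\sigma}$ and show directly that $(\Rrep,r')\sat\comp{\sigC}{A}\atleast{x}\sigma$ holds for every run $r'\in\Rrep$ with $r'\sim_\sigma r$.

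First I would use the elementary fact that a process's behaviour depends only on its local state. Since $\Prot_B$ is a deterministic function of $B$'s local state and the node $\sigma$ appears in $r'$ (which is what $r'\sim_\sigma r$ means), process $B$ executes the same actions on reaching $\sigma$ in $r'$ as it does in $r$; in particular $\sfb$ is performed in $r'$ at the node $\sigma$ (at time $\timeof_{r'}(\sigma)$, which may differ from $\timeof_r(\sigma)$). Invoking now that $\Prot$ implements $\late{\sfa}{\sfb}{x}$: since $\sfb$ is performed in $r'$, the action $\sfa$ must be performed in $r'$ as well, and moreover at a time consistent with the specification, i.e.\ at least $x$ time units before $\sfb$ is performed.

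The next --- and, I expect, only genuinely delicate --- step is to pin down \emph{where} $\sfa$ is performed in $r'$, namely to show that this node is exactly $\comp{\sigC}{A}=\node{\sigC,[C,A]}$. Since $\Prot$ implements the task, $A$ performs $\sfa$ precisely upon receiving the \Go\ message, and (by the modelling of the spontaneous trigger in \Cref{sec:cbm}) $C$ sends the \Go\ message along the edge $CA$ precisely when it receives the external message $\mu_{go}$, whose delivery is by assumption independent of all other events in the run; moreover $\mu_{go}$ is delivered at most once per run. Hence in any run in which $\sfa$ is performed --- in particular in $r'$ --- the \Go\ message is emitted by $C$ from the same basic node at which it is emitted in $r$, namely $\sigC$, so $\sigC$, and therefore $\comp{\sigC}{A}=\node{\sigC,[C,A]}$ (as $[C,A]$ is a channel of $\Net$), appears in $r'$, and $\sfa$ is performed at $\comp{\sigC}{A}$ there. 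I expect this identification to be the main obstacle: one must rule out that in $r'$ process $A$ acts on a \Go\ message emitted from some other basic node of $C$, and this is precisely where the stipulation that $C$'s decision is spontaneous --- it reacts only to the external $\mu_{go}$, not to anything it might otherwise have observed --- does the work.

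Combining the two steps, for every $r'\sim_\sigma r$ the nodes $\comp{\sigC}{A}$ and $\sigma$ both appear in $r'$ and satisfy $\timeof_{r'}(\comp{\sigC}{A})+x\le\timeof_{r'}(\sigma)$, i.e.\ $(\Rrep,r')\sat\comp{\sigC}{A}\atleast{x}\sigma$; since $r'$ was an arbitrary run indistinguishable from $r$ at $\sigma$, this is exactly $(\Rrep,r)\sat\Ksub{\sigma}(\comp{\sigC}{A}\atleast{x}\sigma)$. The case of $\early{\sfb}{\sfa}{x}$ is entirely symmetric: the only change is that its specification forces $\sfb$ to be performed at least $x$ time units \emph{before} $\sfa$, which turns the final inequality into $\timeof_{r'}(\sigma)+x\le\timeof_{r'}(\comp{\sigC}{A})$ and hence yields $(\Rrep,r)\sat\Ksub{\sigma}(\sigma\atleast{x}\comp{\sigC}{A})$.
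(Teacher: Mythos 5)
Your overall skeleton---unfold the definition of $\Ksub{\sigma}$, use determinism of $\Prot_B$ to conclude that $\sfb$ is performed at $\sigma$ in every $r'\sim_\sigma r$, then invoke the implementation specification---matches the paper's proof. But the step you yourself flag as delicate, identifying the node at which $\sfa$ is performed in $r'$ with $\comp{\sigC}{{\scriptstyle{A}}}$, is where the argument breaks, and the spontaneity of $C$'s trigger does not do the work you assign to it. A basic node is a pair (process, local state), and $C$'s local state when it receives $\mu_{go}$ depends on $C$'s entire prior history. The relation $r'\sim_\sigma r$ constrains only $B$'s local state at $\sigma$; it does not constrain when, or in what local state, $C$ receives $\mu_{go}$ in $r'$. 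Indeed, spontaneity means the environment is free to deliver $\mu_{go}$ to $C$ at any time in $r'$, so $C$ may emit the \Go\ message from a basic node $\sigma_C'\ne\sigC$, in which case $\comp{\sigC}{{\scriptstyle{A}}}$ need not even appear in $r'$ and $(\Rrep,r')\sat\comp{\sigC}{{\scriptstyle{A}}}\atleast{x}\sigma$ fails. Spontaneity makes indistinguishable runs \emph{more} variable at $C$, not less.

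What is missing is the content of the paper's \Cref{lem:past}: one must first prove that $\sigC\lamr\sigma$, i.e., that the node at which $C$ sends the \Go\ message lies in the causal past of $\sigma$ in $r$. The paper shows this by a run construction: build $r''\sim_\sigma r$ in which external inputs are delivered only to nodes of $\past(r,\sigma)$ and all other messages are maximally delayed; if $\sigC\notin\past(r,\sigma)$, then in $r''$ the trigger $\mu_{go}$ never arrives, so $\sfa$ is never performed, yet $\sfb$ still is---contradicting the implementation. (Note that spontaneity is used here in the direction opposite to yours: it licenses \emph{suppressing} the trigger outside $\sigma$'s past.) Once $\sigC\lamr\sigma$ is established, \Cref{lem:sim}---which relies essentially on $\Prot$ being a full-information protocol, a hypothesis your proposal never invokes---guarantees that $\sigC$ appears in every $r'\sim_\sigma r$, hence the \Go\ message is sent at $\sigC$ in $r'$ and $\sfa$ is performed at $\comp{\sigC}{{\scriptstyle{A}}}$ there. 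With these two lemmas in place, the remainder of your argument goes through as written.
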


By \Cref{thm:kop}, $B$ cannot perform~$\sfb$ in a protocol solving one of the coordination tasks of \Cref{def:1} unless it knows that it is at a node satisfying an appropriate temporal precedence to the one at which~$A$ performs~$\sfa$. Since such knowledge is also a sufficient condition for~$B$'s action, we can obtain an optimal solution for the coordination tasks by characterizing when the corresponding knowledge statements hold. 
So, an optimal protocol for $B$ when performing the coordination tasks of \Cref{def:1}, is: 
\begin{protocol}
\label[protocol]{pro:1}
In local state~$\ell$, denoting $\sigma\eqdef\nodeL{B,\ell}$: If  $B$ has not performed $\sfb$ yet, and $C$ sends a \Go\ message at a basic node $\sigC\lamr\sigma$, then: 
\begin{itemize}
\item For $\late{\sfa}{\sfb}{x}$: If ${(\Rrep,r)\sat\Ksub{\sigma}(\Comp{\sigC}{{\scriptstyle{A}}}\atleast{x}\sigma)}$, then perform $\sfb$.
\item For $\early{\sfb}{\sfa}{x}$: If ${(\Rrep,r)\sat\Ksub{\sigma}(\sigma\atleast{x}\Comp{\sigC}{{\scriptstyle{A}}})}$, then perform $\sfb$.
\end{itemize}
\end{protocol}

This description of the optimal protocols is made in terms of~$B$'s knowledge about timed precedence between nodes. Our goal is to translate this into a more concrete description, in terms of the communication pattern that is recorded in~$B$'s local state. We will do so at once for both problems by solving a more general problem. I.e., we will characterize the communication patterns that determine when 
\mbox{$(\Rrep,r)\sat\Ksub{\sigma}(\theta_1\atleast{x}\theta_2)$} holds, 
for general  nodes $\theta_1$ and~$\theta_2$. 

It can be shown (and will follow from our results) that in order to know that $\te{1}\atleast{x}\te{2}$, a node~$\sigma$ must know that a zigzag pattern of weight at least~$x$ connects these two nodes.
%
%
%
%
In contrast to the case of message chains in asynchronous systems, 
information does not flow along a zigzag pattern. Indeed, it does not pass from the tail of a fork to its head, or vice-versa. 
The shape and existence of a zigzag pattern depends on whether or not the head of one fork occurs before the tail of its successor (e.g., at node~$D$ in \Cref{fig:2}).  Thus, roughly speaking, the only way in which~$\sigma$ can observe that a zigzag pattern exists is by being informed of the ordering among adjacent forks. Moreover, if~$\sigma$ does not belong to the top fork in the pattern,  then it must also be informed of the existence of this fork. 
We thus define: 

\begin{definition}[Visible Zigzag]
 \Dlabel{def:visible-zigzag}
Let $\sigma$ be a node of run~$r\in\Rrep$, and let $Z=(F_1,\ldots,F_c)$ be a zigzag pattern in $r$.
Then~$\bm{Z}$ is called \defemph{$\bm{\sigma}$-visible in} $\bm{r}$ if both (i) $\head(F_k)\lam_r\sigma$ for all $1\leq k\leq c-1$, and (ii) $\base(F_c)=\node{\sigma',p'}$ 
for a node $\sigma'\lamr\sigma$.
\end{definition}

\begin{figure}[h]
\begin{center}
  \includegraphics[width=.65\linewidth]{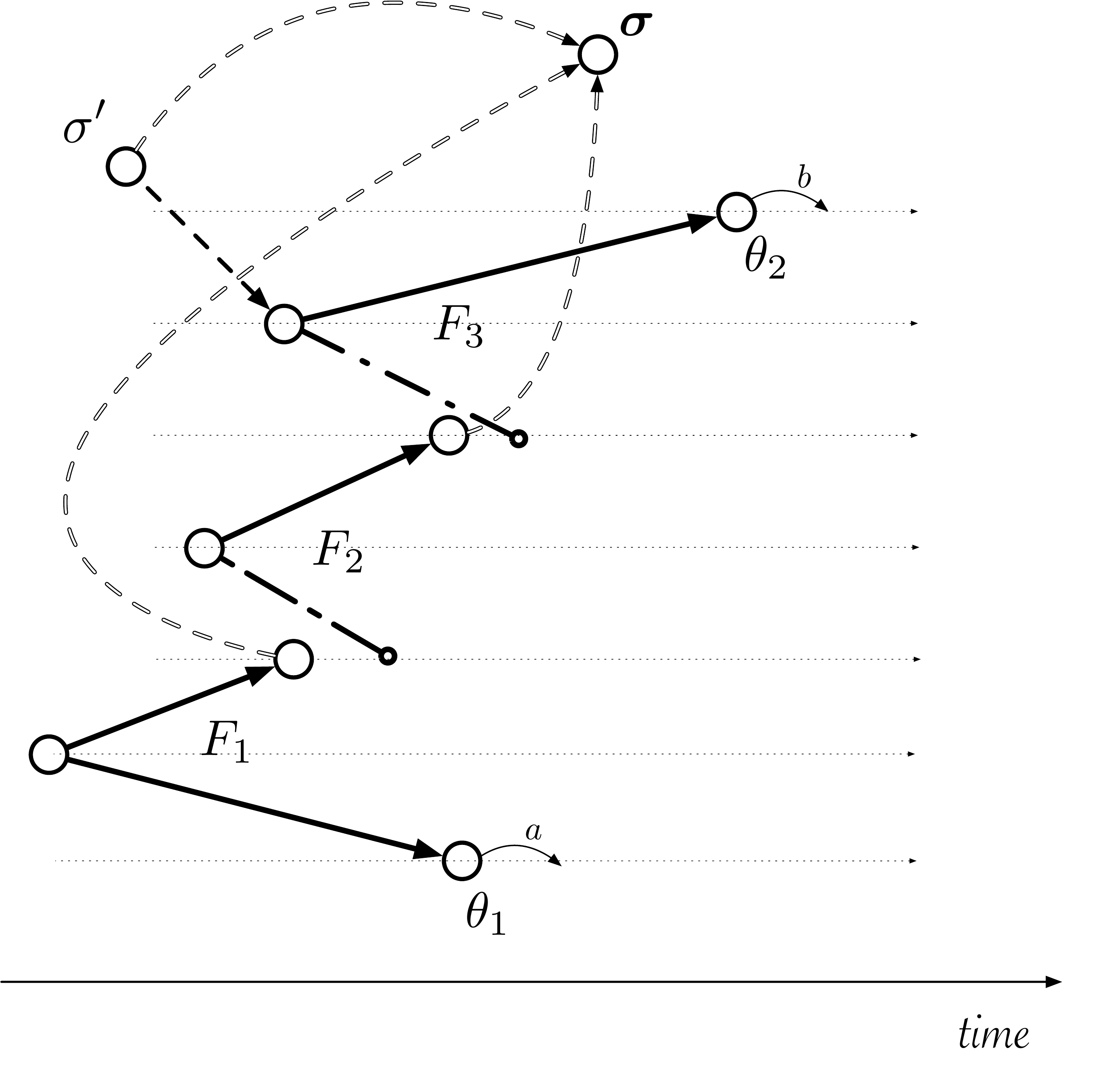}
  \caption{A $\sigma$-visible zigzag pattern from \te{1} to \te{2}} 
  \label{fig:vzz}
\end{center}
\vspace{-3mm}
\end{figure}  


In \Cref{fig:vzz} we see a $\sigma$-visible zigzag $Z=(F_1,F_2,F_3)$ from~\te{1} to~\te{2}. Note that $\head(F_1)\lam_r\sigma$ and $\head(F_2)\lam_r\sigma$, and so $\sigma$ knows that $\tail(F_2)$ doesn't appear before $\head(F_1)$, and $\tail(F_3)$ doesn't appear before $\head(F_2)$.
We remark that the definition of  a $\sigma$-visible zigzag does not require a path from the base of other forks to~$\sigma$, because for all forks except the top one, there is a path consisting of a message from the base to the head and, by \mbox{condition (i)}, a path from the fork's head to~$\sigma$. 
We can now show: 


\begin{theorem}[Visible Zigzag Theorem] 
\label{thm:vzz}
\label{THM:VZZ} 
Let~$\Rrep=\Rrep(\Prot,\gamma)$ and suppose that~$\Prot$ is an \ffip.
Moreover, let $\sigma$ be a basic node of $r\in\Rrep$, 
and let \te{1} and \te{2} be $\sigAware$ nodes in~$r$, such that both ${\timeof_r(\te{1})>0}$ and ${\timeof_r(\te{2})>0}$. 
Then \mbox{$(\Rrep,r)\models\; K_{\sigma}(\te{1}\atleast{x}\te{2})$} ~iff 
there exists a $\sigma$-visible zigzag pattern~$Z$  from~\te{1} to~\te{2} in~$r$ with 
${\weight(Z)\ge x}$. 
\end{theorem}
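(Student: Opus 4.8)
The proof splits into the two directions, and both rely on \Cref{thm:zzsimple} and \Cref{thm:zz} (Zigzag Sufficiency and Necessity) together with the event-driven, full-information nature of \ffip\ protocols.

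For the ``if'' direction, suppose $Z=(F_1,\ldots,F_c)$ is a $\sigma$-visible zigzag pattern from~$\te{1}$ to~$\te{2}$ with $\weight(Z)\ge x$. I would show $K_\sigma(\te{1}\atleast{x}\te{2})$ by taking an arbitrary $r'\in\Rrep$ with $r\sim_\sigma r'$ and arguing that the \emph{same} zigzag pattern (as a syntactic object built from basic nodes and paths) appears in~$r'$ and still has weight $\ge x$. The key observation is that weight depends only on the path bounds $\lb{}(p_k),\ub{}(p_k)$ — which are fixed by the context $\gamma$, not the run — and on $\joins(Z)$, the count of non-joined forks. So it suffices to show that in $r'$: (a) each fork $F_k$ still appears (i.e., its base node appears and the two legs are genuine message chains), and (b) the ``$\head(F_k)$ precedes or equals $\tail(F_{k+1})$'' condition still holds, and moreover a fork that was non-joined in~$r$ is still non-joined (or at least still satisfies $\timeof\le\timeof$) in~$r'$, so $\joins(Z)$ does not drop. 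For (a): condition (ii) of $\sigma$-visibility gives $\base(F_c)=\node{\sigma',p'}$ with $\sigma'\lamr\sigma$, so $\sigma'$ appears in $r'$; and condition (i) gives $\head(F_k)\lamr\sigma$ for $k<c$, so each $\head(F_k)$'s basic node is in $\past(r,\sigma)$. Since $\sigma$ appears in $r'$ and the protocol is \ffip, $\sigma$'s local state records these nodes and the relevant message chains, so they appear in $r'$ as well; the remaining legs of the forks are paths in $\Net$, hence present by \Cref{def:general}. For (b): the ordering $\timeof_r(\head(F_k))\le\timeof_r(\tail(F_{k+1}))$ holds at a $j$-node $D$ in $\past(r,\sigma)$ (this is exactly why visibility is defined the way it is), and $\sigma$ sees which of the two arrived first — in an \ffip, a process's local state records the order of the messages it received, and a chain from $D$ to $\sigma$ carries that order — so the same strict/weak ordering holds in $r'$. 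Hence $Z$ appears in $r'$ with $\weight(Z)\ge x$, and by \Cref{thm:zzsimple} applied in $r'$ we get $(\Rrep,r')\sat\te{1}\atleast{x}\te{2}$, for all such $r'$, which is $K_\sigma(\te{1}\atleast{x}\te{2})$.

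For the ``only if'' direction, assume $K_\sigma(\te{1}\atleast{x}\te{2})$; in particular $(\Rrep,r)\sat\te{1}\atleast{x}\te{2}$, and both nodes have positive time, so \Cref{thm:zz} yields \emph{some} zigzag pattern $Z$ in $r$ from $\te{1}$ to $\te{2}$ with $\weight(Z)\ge x$ — but a priori $Z$ need not be $\sigma$-visible. The work is to promote $Z$ to a $\sigma$-visible one. The strategy is the standard ``adversary run'' argument: if some fork $F_k$ ($k<c$) had $\head(F_k)\not\lamr\sigma$, or the top fork's base were not $\sigma$-recognized, then I would construct $r'\sim_\sigma r'$... rather, $r'$ with $r\sim_\sigma r'$ in which the ordering witnessing $Z$ fails — e.g.\ by pushing the sending time of the offending sub-pattern (the $C$-side of the fork, as in the Figure~\ref{fig:2} discussion where $B$ sees no message from $D$) far into the future, so that $\head(F_k)$ no longer precedes $\tail(F_{k+1})$, or the node $\te{1}$ (resp.\ $\te{2}$) shifts so that $\te{1}\atleast{x}\te{2}$ is violated — contradicting $K_\sigma$. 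Concretely, for each fork whose head is not below $\sigma$, the environment can delay all events not in $\past(r,\sigma)$ enough to destroy that fork's contribution; since $\sigma$ cannot distinguish this run, knowledge fails. Doing this simultaneously/iteratively over all non-visible forks leaves a sub-collection of forks all of whose heads are below $\sigma$ and whose top base is $\sigma$-recognized; one then has to check this sub-collection still assembles into a zigzag from $\te{1}$ to $\te{2}$ of weight $\ge x$ — essentially because removing the ``unprotected'' part can only have been providing weight that the adversary could cancel, so the protected part alone must already guarantee the bound. This last bookkeeping — showing the surgery yields a genuine $\sigma$-visible zigzag from $\te{1}$ to $\te{2}$ and not merely a disconnected set of forks — is where I expect the main difficulty, and it is presumably handled in \proofs\ via the ``bounds graph'' machinery of \Cref{sec:highlights} rather than by bare-hands run surgery: one reads off the visible zigzag from a shortest path in the appropriate $\GE{r,\sigma}$-type graph, whose edges by construction only use $\sigma$-recognized information.

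The main obstacle, then, is the necessity direction: converting ``no $\sigma$-visible zigzag of weight $\ge x$'' into an explicit indistinguishable run violating $\te{1}\atleast{x}\te{2}$, while respecting all channel bounds $\lb{},\ub{}$ and the determinism of the \ffip. The natural tool is a bounds-graph whose vertices are $\sigma$-recognized nodes (plus the two target nodes) and whose edge weights encode the forced inequalities; $K_\sigma(\te{1}\atleast{x}\te{2})$ should correspond exactly to the existence of a path of weight $\ge x$ in that graph, and such a path unwinds into a $\sigma$-visible zigzag. I would structure the full proof around establishing that correspondence.
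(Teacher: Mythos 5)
Your sufficiency direction is essentially the paper's argument and is fine: one shows that the same pattern $Z$ appears in every $r'\sim_\sigma r$ (the forks survive because their heads, respectively the top fork's base, lie in $\past(r,\sigma)$ and the protocol is an \ffip, and the head/tail orderings are recorded at nodes in $\sigma$'s past), and then invokes \Cref{thm:zzsimple} in $r'$.

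The necessity direction as you primarily propose it has a genuine gap at its very first step. \Cref{thm:zz} requires that $\Rrep$ \emph{supports} $\te{1}\atleast{x}\te{2}$, i.e.\ that the precedence holds in \emph{every} run of $\Rrep$ in which either node appears; knowledge at $\sigma$ only gives you the precedence on the $\sim_\sigma$-class of $r$, which is in general a much smaller set. So you cannot invoke \Cref{thm:zz} to extract an initial zigzag in $r$, and no zigzag from $\te{1}$ to $\te{2}$ of weight $\ge x$ need exist ``for free'' before the visibility analysis begins. The subsequent surgery step is also unsubstantiated: deleting the forks whose heads are not in $\past(r,\sigma)$ leaves fragments that need not connect $\te{1}$ to $\te{2}$ at all, and ``the protected part alone must already guarantee the bound'' is precisely the statement to be proved. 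The paper does not route through \Cref{thm:zz} here. It builds the \emph{extended} local bounds graph $\GE{r,\sigma}$: the subgraph of $\GB{r}$ induced by $\past(r,\sigma)$, augmented with one auxiliary node $\psi_i$ per process and edges encoding that any message sent from $\sigma$'s past but not seen delivered must arrive after the receiver's boundary node. These auxiliary nodes capture constraints such as $\sigma_j\atleast{1-\ub{ij}}\sigma_i$ that are invisible in the induced subgraph alone, and which the graph on $\sigma$-recognized nodes sketched in your last paragraph would also miss. The paper then (a) classifies the paths of $\GE{r,\sigma}$ into types and shows each yields a $\sigma$-visible zigzag of equal weight, and (b) constructs, for a parameter $\gamma$, an explicit ``fast run'' $r'\sim_\sigma r$ that realizes the longest-path constraints with equality, splitting on whether $\sigma_2$ is reachable from $\sigma_1$ in $\GE{r,\sigma}$; if it is not, $\gamma$ is chosen large enough that $r'$ violates the precedence, contradicting knowledge. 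Your closing paragraph points toward this architecture, but without the auxiliary-node construction and the explicit indistinguishable-run construction the argument is incomplete.
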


The Visible Zigzag Theorem 
provides a precise characterization of the pattern of communication that is necessary and sufficient for knowledge at~$\sigma$ of precedence among timepoints at distinct sites of the system. This is a fundamental aspect of information flow in \bcm\ systems. The fact that a $\sigma$-visible pattern is sufficient for such knowledge appears reasonable given our analysis so far. The main technical challenge is to prove the converse: that such a pattern is also necessary.

We can now rephrase the optimal protocol defined before (\Cref{pro:1}), in terms of concrete communication patterns:

\begin{protocol}
\label{pro:vzz}
In local state~$\ell$, denoting $\sigma\eqdef\nodeL{B,\ell}$: If  $B$ has not performed $\sfb$ yet, 
and $C$ sends a \Go\ message at a basic node $\sigC\lamr\sigma$, then:
\begin{itemize}
\item For $\late{\sfa}{\sfb}{x}$: If there is a $\sigma$-visible zigzag pattern~$Z$ in $r$ from \Comp{\sigC}{{\scriptstyle{A}}} to $\sigma$ with $\weight(Z)\ge x$, then perform $\sfb$.
\item For $\early{\sfb}{\sfa}{x}$: If there is a $\sigma$-visible zigzag pattern~$Z$ in $r$ from $\sigma$ to \Comp{\sigC}{{\scriptstyle{A}}} with $\weight(Z)\ge x$, then perform $\sfb$.
\end{itemize}
\end{protocol}
The visible zigzag patterns of~\Cref{pro:vzz} are instances of \Cref{fig:vzz}, in which one of the endpoints of the pattern is~$\sigma$ in itself. The pattern for  $\late{\sfa}{\sfb}{x}$ is illustrated in \Cref{fig:head}.  Note that there is no need for a separate message chain from~$\base(F_c)$ to~$\sigma$ in this pattern, because $\base(F_c)\lam_r\sigma=\head(F_c)$ holds the \tlf\ $F_c$, and so condition~(ii) of \Cref{def:visible-zigzag} is trivially guaranteed. In the pattern for 
the case of~$\early{\sfb}{\sfa}{x}$ we have that~$\sigma=\tail(F_1)=\theta_1$. It contains all of the message chains depicted in~\Cref{fig:vzz}.

\begin{figure}[h]
\begin{center}
  \includegraphics[width=.6\linewidth]{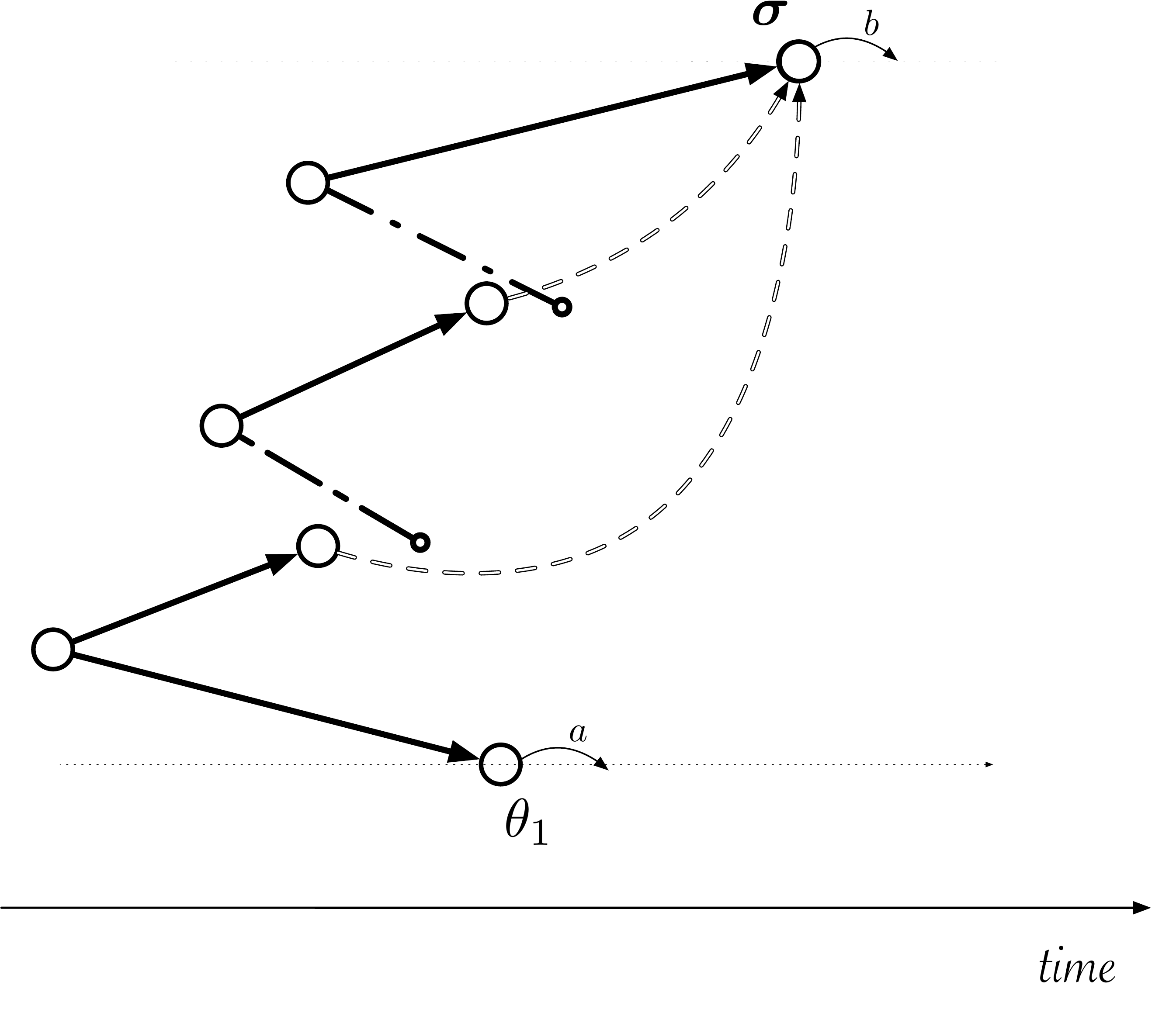}
  \caption{A visible zigzag pattern for $\late{\sfa}{\sfb}{x}$}
  \label{fig:head}
\end{center}
\vspace{-3mm}
\end{figure}  

While the conditions described above for the optimal protocol are more figurative (communication patterns), there are actually simple algorithms to check for their truth (using a structure that is described next), but this is beyond the scope of this paper.

\section{Highlights of the Analysis}
\label{sec:highlights}
In this section we 
survey the general approach used for proving our main results. Of course, the essence of the analysis has to do with extracting knowledge about timing from the actual communication in a run, given the {\em a priori} bounds on message transmission times. This has been considered in the literature, for example, in the work on clock synchronization \cite{Attiya1996,DHS,DHSS,HMM,LM,lundelius1984upper,MB,PSR,SrikT}.

\begin{figure}[h]
\begin{center}
 \includegraphics[width=.6\linewidth]{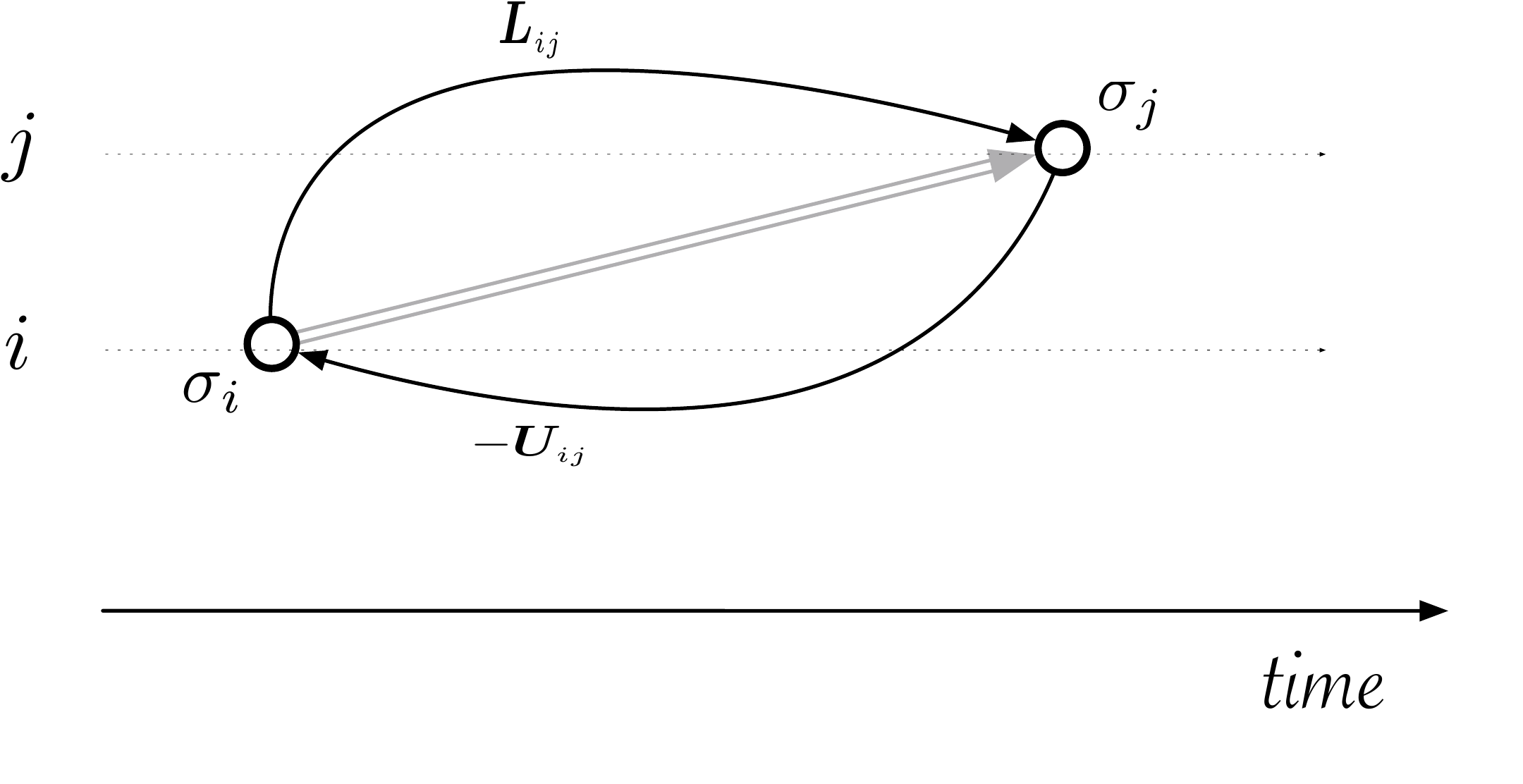}
  \caption{The bound edges created by a direct message} 
  \label{fig:Bounds2}
\end{center}
\vspace{-3mm}
\end{figure}  

Inspired by \cite{MB,PSR}, we use a weighted graph to capture the timing guarantees provided by the system, and to reason about time differences between local timepoints in a given run~$r$.
\begin{definition}
\Dlabel{def:gb}
Given a run $r\in\Rrep$, the \defemph{basic bounds graph for~$\bm{r}$} is a graph $\GB{r}=(\VB,\EB,w)$, where $\VB$ are the basic nodes that appear in~$r$. The edges of~$\EB$ are defined as follows: 
(a) If $\sigma$ and~$\sigma'$ are $i$-nodes (for the same process~$i$) and $\sigma'$ is the successor of~$\sigma$, then $(\sigma,\sigma')\in\EB$, and $w(\sigma,\sigma')=1$. 
(b) If some message sent at an $i$-node $\sigma_i$ in~$r$ is received at a $j$-node $\sigma_j$, then  both $(\sigma_i,\sigma_j)\in\EB$ and \mbox{$(\sigma_j,\sigma_i)\in\EB$}, with $w(\sigma_i,\sigma_j)=\lb{ij}$ and $w(\sigma_j,\sigma_i)=-\ub{ij}$.
\end{definition}


For an illustration of clause (b), see Figure~\ref{fig:Bounds2}.
The basic bounds graph captures timed precedence information about the temporal relation among basic nodes: (a) is justified by the fact that successive nodes are at least one time step apart, while (b) embodies the upper and lower bounds on message transmission times. \Cref{fig:Bounds2} illustrates the edges of~$\Gb$ that are induced according to case (b) by a single message delivery.
It is straightforward to check (see, e.g., \cite{MB}) that 
\begin{lemma}
\Llabel{lem:basicbounds}
Let~$p$ be a path connecting nodes~$\sigma$ and~$\sigma'$  in $\GB{r}$. 
If $w(p)=x$, then $(\Rrep,r)\models\sigma\atleast{x}\sigma'$. 
\end{lemma}
\Cref{fig:Zigbounds} highlights a path in the bounds graph that captures the timing guarantees implied by the zigzag pattern of \Cref{fig:2} via \Cref{lem:basicbounds}.

\begin{figure}[h]
\begin{center}
  \includegraphics[width=.8\linewidth]{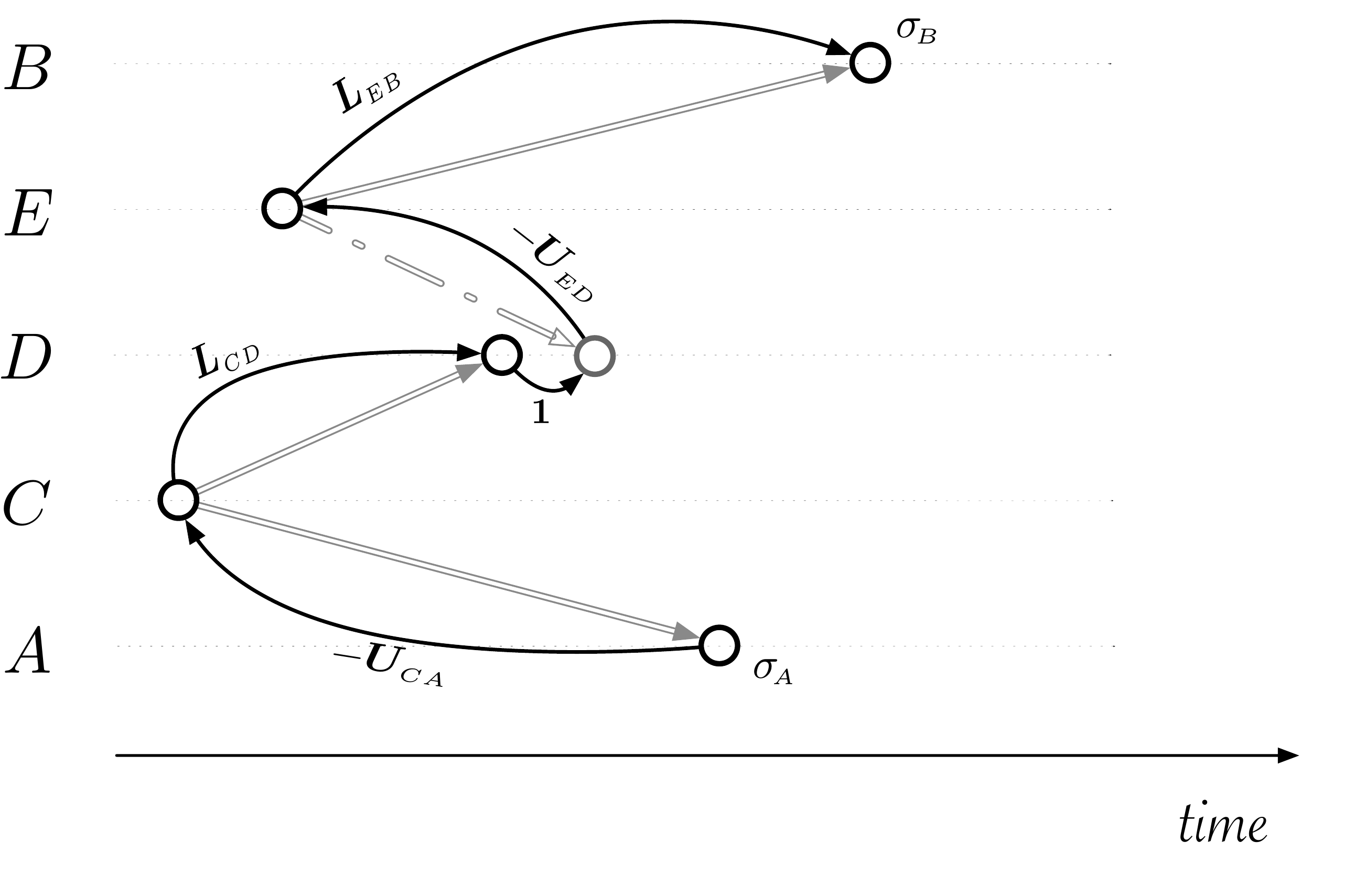} 
  \caption{A path in the bounds graph justifying \Cref{eq:bounds}\\$~$}
  \label{fig:Zigbounds}
\end{center}
\end{figure}

In addition to imposing a precedence constraint, a path in \GB{r} induces a zigzag pattern in the run~$r$.
More precisely:
\begin{lemma}
\Llabel{lem:Hzzexists}
If~$p$ is a path connecting nodes~$\sigma$ and~$\sigma'$  in $\GB{r}$, then  there exists a zigzag pattern $Z$ in $r$, from $\sigma$ to $\sigma'$, 
with \mbox{$\weight(Z)=\weight(p)$}.
\end{lemma}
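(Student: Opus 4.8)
The plan is to induct on the number of edges of $p$, assembling the zigzag pattern one edge at a time. I maintain the invariant: if $p^{(k)}$ is the prefix of $p$ consisting of its first $k$ edges and $\tau_k$ is the endpoint of $p^{(k)}$ (so $\tau_0=\sigma$), then there is a zigzag pattern $Z^{(k)}$ in $r$ from $\sigma$ to $\tau_k$ with $\weight(Z^{(k)})=\weight(p^{(k)})$. Applied with $k$ equal to the full length of $p$, for which $\tau_k=\sigma'$, this yields the lemma. For the base case $k=0$ the path is empty, $\tau_0=\sigma$, and I take $Z^{(0)}=(\langle\sigma,\sigma,\sigma\rangle)$, the single two-legged fork with empty legs ($p_1=p_2=[i]$, where $\sigma$ is an $i$-node), which has weight $0$. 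Note that a zigzag pattern ending at $\tau_k$ has, by \Cref{def:zig}, $\head(F_c)=\tau_k$ for its last fork $F_c$, so the endpoint is always exposed as the head of the last fork, ready to be extended.

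For the inductive step, let $Z^{(k)}=(F_1,\dots,F_c)$ be given with $\head(F_c)=\tau_k$, and let $e$ be the $(k{+}1)$-st edge, leading from $\tau_k$ to $\tau_{k+1}$; by \Cref{def:gb} it is of one of three kinds, handled as follows. If $e$ is a local edge (so $\tau_{k+1}$ is the successor of $\tau_k$ and both are $i$-nodes), I append the trivial fork $F_{c+1}=\langle\tau_{k+1},\tau_{k+1},\tau_{k+1}\rangle$; the only new junction joins the $i$-nodes $\head(F_c)=\tau_k$ and $\tail(F_{c+1})=\tau_{k+1}$, respects $\timeof_r(\tau_k)<\timeof_r(\tau_{k+1})$, and being \emph{not} joined contributes $+1$ to $\joins$, matching the edge weight $1$. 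If $e$ is a forward message edge from a message sent at $\tau_k$ (process $i$) and received at $\tau_{k+1}$ (process $j$), I write $F_c=\langle\theta_0,\ocomp{\theta_0}{q_1},\ocomp{\theta_0}{q_2}\rangle$ with $\ocomp{\theta_0}{q_1}$ corresponding to $\tau_k$ (hence $q_1$ ends at $i$) and replace $F_c$ by $\langle\theta_0,\ocomp{\theta_0}{(\ocomp{q_1}{[i,j]})},\ocomp{\theta_0}{q_2}\rangle$; by \Cref{def:gnode} its head now corresponds to $\tau_{k+1}$, its tail and every earlier junction are untouched, and its weight grows by $\lb{ij}=w(e)$. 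Finally, if $e$ is a backward message edge from a message sent at $\tau_{k+1}$ (process $j$) and received at $\tau_k$ (process $i$), I append $F_{c+1}=\langle\tau_{k+1},\tau_{k+1},\ocomp{\tau_{k+1}}{[j,i]}\rangle$, whose head-leg is empty and whose tail corresponds, by \Cref{def:gnode}, to $\tau_k$; the new junction between $\head(F_c)=\tau_k$ and $\tail(F_{c+1})$ is joined and contributes $0$, while $\weight(F_{c+1})=-\ub{ji}=w(e)$. In each case the last fork of the extended pattern has head $\tau_{k+1}$, and $\weight(Z^{(k+1)})=\weight(Z^{(k)})+w(e)=\weight(p^{(k+1)})$, restoring the invariant.

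Most of the write-up is the routine verification that every object produced is a legal two-legged fork and that the sequence stays a legal zigzag pattern — i.e.\ bookkeeping with the composition $\odot$ of network paths and with clause~(b) of \Cref{def:gnode}, using that a forward (resp.\ backward) message edge of $\GB{r}$ is \emph{precisely} the assertion that the corresponding message is sent and delivered in $r$, so that all the extended message chains genuinely appear in $r$. I expect the one genuinely delicate point — and the reason a direct ``read a zigzag off the picture of $p$'' argument is not quite enough — to be the weight accounting over stretches of consecutive local edges: a junction of a zigzag contributes only $+1$ no matter how large a real-time gap it spans, so one must \emph{not} collapse several local edges of $p$ into one junction; breaking them apart with weight-$0$ trivial forks, as above, is exactly what forces $\weight(Z)=\weight(p)$ on the nose rather than merely $\weight(Z)\le\weight(p)$. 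A secondary subtlety is the backward-message case: one cannot lengthen the current fork's head-leg ``backward in time'' since head-legs are forward message chains, so the right move is to splice in a fresh fork joined to its predecessor.
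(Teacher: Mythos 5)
Your proof is correct and takes essentially the same approach as the paper's (which actually proves the slight generalization Lemma~\ref{lem:zzisgb}): induction on the number of edges of $p$, with one fork construction matched to each of the three edge types of \Cref{def:gb}. The only structural difference is direction --- you grow the zigzag from the $\sigma$-end, appending to or extending the last fork $F_c$, while the paper grows from the $\sigma'$-end, prepending to $F_1$; the two are mirror images, and your delicacy notes (not collapsing consecutive local edges into one junction, and splicing a fresh fork for backward edges) match the considerations the paper's construction has to respect.

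One small point worth patching. After a forward-message step you correctly observe that $\head(F_c)$ becomes a general node $\theta$ with $\basic(\theta,r)=\tau_{k+1}$; but it is then $\ocomp{\theta_0}{q}$ for a non-singleton $q$, hence not \emph{literally} the basic node $\tau_{k+1}$, whereas \Cref{def:zig} requires $\head(F_c)=\theta'$ on the nose for a ``zigzag from $\theta$ to $\theta'$''. (The internal junction conditions of \Cref{def:zig} only refer to corresponding $j$-nodes and $\timeof_r$, so the constructed sequence always stays a legal zigzag pattern; the issue is only the identity of the final endpoint.) The paper avoids the symmetric problem at \emph{its} growing end by always prepending a fresh trivial fork $F_0$ with $\tail(F_0)=\te{1}$ rather than touching the tail of $F_1$. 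Do the symmetric thing: finish by appending a joined trivial fork $\langle\sigma',\sigma',\sigma'\rangle$, which changes neither $\weight$ nor $\joins$ and pins $\head$ of the last fork to $\sigma'$ exactly. With that patch your argument is complete.
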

The faint lines in \Cref{fig:Zigbounds} show the zigzag communication pattern underlying  the path in~$\GB{r}$, which is depicted by the solid lines.

We now give a sketch for the proof of \Cref{thm:zz}.
Assume that $\Rrep$ supports $\sigma_1\atleast{x}\sigma_2$, for two basic nodes $\sigma_1$ and $\sigma_2$ that appear in~$r$.
By \Cref{lem:basicbounds}, each path between $\sigma_1$ and $\sigma_2$ in \GB{r} defines a constraint on the difference of their times.
I.e., if $p$ is a path between $\sigma_1$ and $\sigma_2$, 
then $(\Rrep,r)\models\sigma_1\atleast{\weight(p)}\sigma_2$.
The longer the path is, the stronger the constraint. 
Thus, we are interested in finding the longest path from $\sigma_1$ to~$\sigma_2$.

Assume that $p$ is the longest path (between $\sigma_1$ and $\sigma_2$).
Our main claim is that there exists a run~$r'\in\Rrep$ such that  \mbox{$\GB{r}=\GB{r'}$}, both $\sigma_1$ and $\sigma_2$ appear in~$r'$, and 
\mbox{$\timeof_{r'}(\sigma_2)=\timeof_{r'}(\sigma_1)+\weight(p)$}.
This means that the constraint dictated by the longest path~$p$ is tight (a similar argument appears in \cite{PSR}). 
By definition of ``supports'', we obtain that $\timeof_{r'}(\sigma_2)\geq\timeof_{r'}(\sigma_1)+x$ and so $\weight(p)\geq x$.
By \Cref{lem:Hzzexists} there exists in~$r$ a zigzag pattern~$Z$ from~$\sigma_1$ to~$\sigma_2$, with $\weight(Z)=\weight(p)\geq x$, just as stated in \Cref{thm:zz}.

But what if $\GB{r}$ does not contain a path from~$\sigma_1$ to~$\sigma_2$? In such case we can show that there is a run $r''\in\Rrep$ containing~$\sigma_2$, in which~$\sigma_1$ doesn't appear. This contradicts the assumption that~$\Rrep$ supports $\sigma_1\atleast{x}\sigma_2$, since by definition of ``supports''   $\sigma_1$ and $\sigma_2$ must either both appear in~$r''$, or neither should appear. 

The proof shows that, for every node~$\sigma_2$ in~$\GB{r}$, there is a single run, $r'$, in which, intuitively, every node of~$\GB{r}$ is ``delayed'' as much as possible, relative to $\timeof_{r'}(\sigma_2)$. In other words, for every node $\sigma'$ that has a path to $\sigma_2$ in $\GB{r}$ we will have that $\timeof_{r'}(\sigma_2)=\timeof_{r'}(\sigma')+\weight(p')$, where $p'$ is the longest path from~$\sigma'$ to $\sigma_2$. 
Moreover, every node that doesn't have a path to $\sigma_2$ will not appear in $r'$.
This proves \Cref{thm:zz}.

\subsection{The Extended Bounds Graph}
The proof of \Cref{thm:vzz} is similar in its nature to the proof of \Cref{thm:zz}, but is much more complex.
While \Cref{thm:zz} states the existence of a zigzag pattern following a general run property (\textit{supports}), \Cref{thm:vzz} deals with the knowledge of a specific node.
In the previous proof we used \GB{r}.
Essentially everything that can be deduced about the timing of events in a run~$r$ based on the combined information in all processes' histories is captured by~$\GB{r}$. Figure~\ref{fig:Zigbounds}, for example, presents a path in the bounds graph that justifies the analysis leading to \Cref{eq:bounds}.
However, \GB{r} is defined by the entire run, and 
a process at a given basic node $\sigma=\node{i,\ell}$  observes only a portion of this information that is generated by the nodes in $\past(r,\sigma)$, which we denote by $\GB{r,\sigma}$.
%
This subgraph of \GB{r} does not completely capture the timing information available to~$\sigma$, however. 
For example, assume that an $i$-node $\sigma_i$ and a $j$-node $\sigma_j$ are both in $\past(r,\sigma)$, and assume that  a message sent at~$\sigma_i$ to process~$j$ isn't received at any node in $\past(r,\sigma)$.
We know that the node at which this message will be received, i.e.~$\nodeF{\sigma_i,[i,j]}$, must appear in~\Gb{r} later than $\sigma_j$.
From the upper bounds requirement, we also know that $\timeof_{r}(\sigma_i)+\upb{ij}\geq\timeof_{r}(\nodeF{\sigma_i,[i,j]})$.
Combining this with the requirement that $\timeof_{r}(\nodeF{\sigma_i,[i,j]})\geq\timeof_{r}(\sigma_j)+1$ we have that $\timeof_{r}(\sigma_i)-\timeof_{r}(\sigma_j)\geq1-\upb{ij}$, and thus $(\Rrep,r)\models\sigma_j\atleast{1-\upb{ij}}\sigma_i$.
In our setting processes follow an \ffip, and so the contents of $\past(r,\sigma)$ depend only on~$\sigma$ and not on~$r$. So this precedence holds for any run~$r'$ containing~$\sigma$. Such a run satisfies $r'\sim_\sigma r$, and we thus obtain that $(\Rrep,r)\models\Ksub{\sigma}(\sigma_j\atleast{1-\upb{ij}}\sigma_i)$.
This time precedence does not correspond to a path in \GB{r,\sigma}, 
and so \GB{r,\sigma} misses important information.

In order to fully capture the information  available to a node~$\sigma$  based on its partial view of the run, we define an extended bounds graph based on the nodes of $\past(r,\sigma)$, to which we add~$n$ auxiliary nodes $\{\psi_1,\ldots,\psi_n\}$, one per process timeline. Intuitively, each node~$\psi_j$ represents the earliest among the nodes on~$j$'s timeline at which messages will be delivered, that are beyond view (intuitively ``over the horizon'')  for~$\sigma$. This extended graph is denoted by $\bm{\GE{r,\sigma}}$. Three sets of edges $E'$, $E''$ and $E'''$ are added to the induced subgraph \GB{r,\sigma} of $\GB{r}$ to obtain the extended graph: (a) $E'$ consists of edges $(\sigma_i,\psi_i)$ from the latest $i$-node in $\past(r,\sigma)$ to~$i$'s auxiliary node, with weight~$w(\sigma_i,\psi_i)=1$; 
(b) if a message was sent from an $i$-node $\sigma_i$ in $\past(r,\sigma)$ to process~$j$ and not delivered to a node in $\past(r,\sigma)$, then an edge $(\psi_j,\sigma_i)$ is added to~$E''$  with weight $-\ub{ij}$ as in the basic bounds graph. Finally, \mbox{(c) the set~$E'''$} consists of edges $(\psi_j,\psi_i)$ with weight $-\ub{ij}$ that are added for every channel $(i,j)\in\Chan$. (Intuitively, these edges are justified by the fact that the processes follow an \ffip, and so when a message will be delivered at a node beyond the view of~$\sigma$, it will be sent to all neighbors in the $\Net$ graph.) 
 \Cref{fig:ge} illustrates the extended bounds graph $\GE{r,\sigma}$, for an $i$-node~$\sigma$. It highlights the three processes~$i$, $j$ and~$k$, and the four types of edges that appear in $\GE{r,\sigma}$. The shaded area depicts the $\past(r,\sigma)$ region.
 On the right are the auxiliary nodes $\psi_i$, $\psi_j$ and~$\psi_j$, one per process. 
 Note that the bound edges to and from auxiliary nodes handle upper bounds only. 
\begin{figure}[h]
\begin{center}
  \includegraphics[width=.9\linewidth]{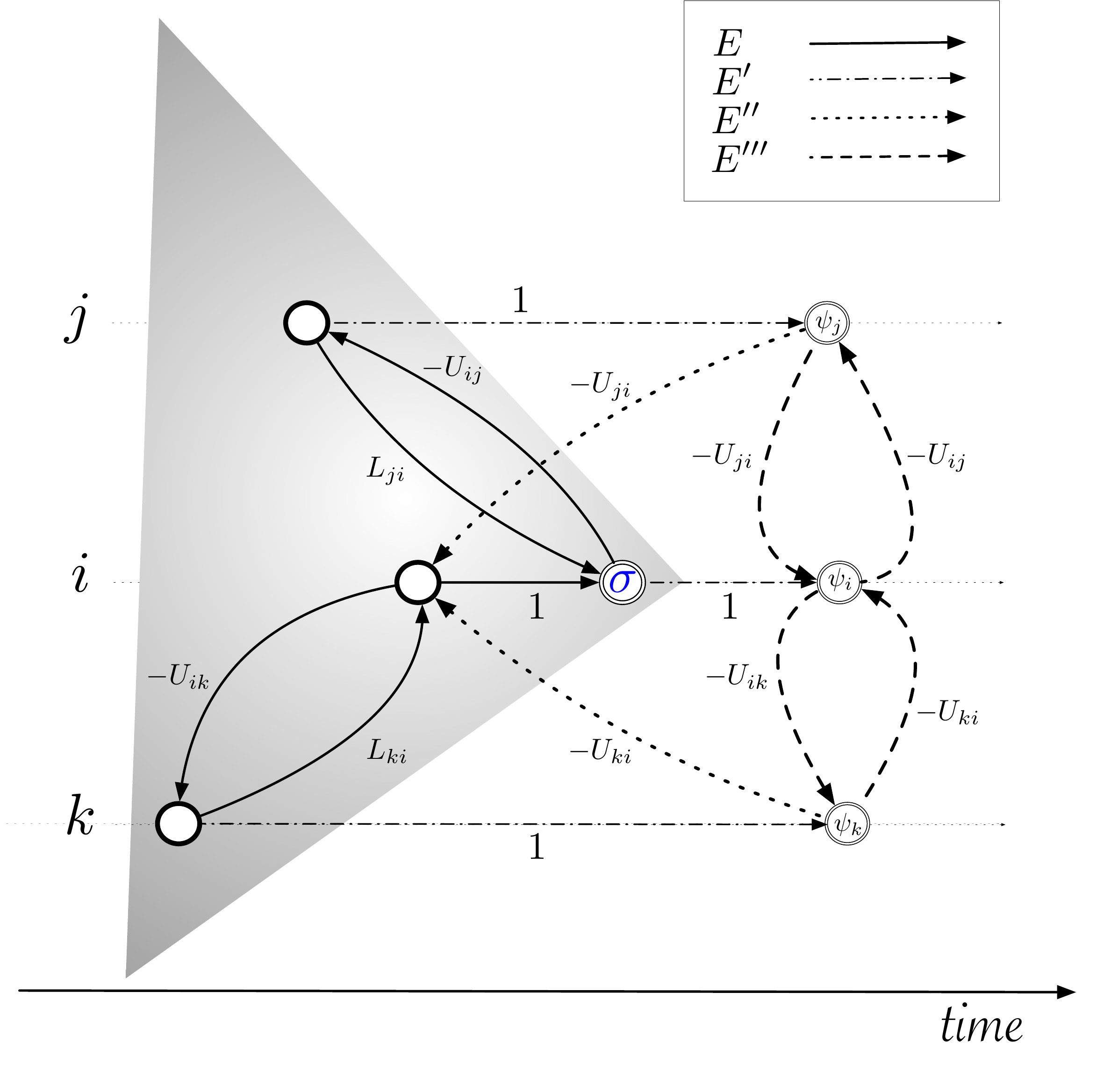}
  \caption{An illustration of the extended bounds graph~$\GE{r,\sigma}$} 
  \label{fig:ge}
\end{center}
\end{figure}  

Now, the graph \GE{r,\sigma} plays a similar role in the proof of \Cref{thm:vzz} to the role of~\GB{r} in the proof of \Cref{thm:zz}.
Indeed, \GE{r,\sigma} exhibits similar (albeit more complex) features to those of \GB{r}. 
For example, any path in $\GE{r,\sigma}$ whose  endpoints are both basic nodes from $\past(r,\sigma)$ (and not auxiliary nodes), still defines a constraint between its endpoints (in any run $r'\sim_\sigma r$). It also defines a $\sigma$-visible zigzag in $r$ with the same weight.
Note the small differences: (1) The constraint here holds in any run $r'\sim_\sigma r$ (as for any such run, $\GE{r,\sigma}=\GE{r',\sigma}$), instead of any run $r'$ with the same complete bounds graph (i.e $\GB{r'}=\GB{r}$), and (2) the zigzag pattern is a {\em $\sigma$-visible} zigzag.
Paths that start at, or end in, or auxiliary nodes also exhibit important features, which are essential for the proof.
The full details are beyond the scope of our presentation here and are available in \proofs.

Crucially, the extended bounds graph, \GE{r,\sigma}, can be used to construct valid runs of~$\Rrep$ with desirable properties. This is based on a careful assignment of times to nodes of $\GE{r,\sigma}$, in the following manner:
\begin{definition}
Let~$\sigma$ be a basic node appearing in~$r\in\Rrep$. 
 A valid timing function for $\GE{r,\sigma}=(\VE,\EE,w)$ is a function $T:\VE\to\Nat$, such that $T(\theta_1)+w(\theta_1,\theta_2)\leq T(\theta_2)$ holds for each $(\theta_1,\theta_2)\in\EE$. 
\end{definition}

Based on a valid timing function~$T$ for $\GE{r,\sigma}=(\VE,\EE,w)$, we can define a run~$r'\sim_\sigma r$ of~$\Rrep$ in which the nodes of \mbox{$\past(r,\sigma)$}  appear at the prescribed times, and all the other nodes appear no earlier than the time of the auxiliary node~$\psi_j$ that belongs to their timeline~$j$. 
This result is achieved by the fact that the bounds associated with auxiliary nodes make sure that nodes outside  $\past(r,\sigma)$ won't appear too early relative to nodes from $\past(r,\sigma)$. (That, in turn, could force a message sent outside $\past(r,\sigma)$ to be received inside $\past(r,\sigma)$, which would modify $\sigma$'s past and cause $r'\not\sim_\sigma r$).

\section{Discussion}
\label{sec:conclusions}
\label{sec:discussion}

%
The principles underlying coordination in purely asynchronous systems are by now fairly well understood, based on~\cite{Lamclocks} and the four decades since it was published. Message chains play a central role in determining the ordering of events and coordinating their timing. More recently, the study of coordination in systems with global clocks was initiated by~\cite{BzMacm}. The current paper considers yet another timing model, the \bcm\ model, in which there are no built-in timers and clock. Nevertheless, timing information can be gleaned from observed events, because there are upper and lower bounds on the  message transmission times among processes. A direct use of bounds in such a model is the one illustrated in \Cref{fig:1}: Given two message chains that start from the same point, if the sum of lower bounds on one is greater than the sum of upper bounds on the other, then the first message chain is guaranteed to end later than the second one. Indeed, the bounds can be used to provide a quantitative estimate of the time difference between these two events. 
We introduced the the notion of a zigzag message pattern and showed that it provides another way to deduce the time precedence between events. The existence of an appropriate zigzag pattern was shown to be necessary and sufficient for the message pattern of an execution of the system to ensure that a given timed precedence among events is satisfied. 

Interestingly, whereas it is possible to ensure that the existence of a message chain will be observed by the process receiving the chain's final message, this is not the case with general zigzag patterns. Information about the pattern's existence is distributed among the processes. In order to use a zigzag pattern in coordination, it is necessary for its relevant endpoints to obtain information about the order in which pivotal intermediate messages were delivered. Only then can a process know that the pattern exists, and hence to know that the precedence that the zigzag pattern implies is satisfied. Our analysis provides a characterization of when a precedence statement is known by a process at a given local state. This requires a {\em visible zigzag}, consisting of an appropriate zigzag pattern, as well as message chains informing the node about the pivotal parts of the zigzag pattern. A corollary of this is a characterization of patterns that allow coordinating actions according to $\Early$ and $\Late$ specifications. 

The main mathematical structures underlying our analysis are the basic bounds graph and the extended bounds graphs presented in \Cref{sec:highlights}. In these, the start and end points of events are nodes, and the  bounds are represented by weighted edges among these nodes. While the basic bounds graph has appeared in the analysis of clock synchronization (see, e.g., \cite{PSR}, in which it is used to capture synchronization even in the presence of clock drift), the extended bounds graph seems to be novel. It allows an analysis of the timing information at a node based on its subjective view of the computation. Events in its direct causal past, as well as the fact that events do not appear there, provide information on the timing and ordering of events. 

As remarked in the Introduction, the~\bcm\  model can easily be adapted to capture bounds on the duration of other events as well. A natural setting that fits the \bcm\ model is that of asynchronous, or self-timed, VLSI circuits, which are circuits that operate without clocks. In such settings, time bounds are often used to coordinate actions and ensure correctness of the computation. The typical way to do so is by using a simple fork as in \Cref{fig:1}. Such forks are also the basis for correct operation in synchronous circuits, where extreme care is taken to ensure that clock inputs to different flip-flops are arranged to have very similar delays from a common source for the implementation of sequencing~\cite{WesteHarris}. To the best of our knowledge, it is an open problem whether zigzag causality and our characterization of solutions to the $\Early$ and $\Late$ coordination problems may facilitate the design of new circuits.

%

\bibliographystyle{abbrv}
\bibliography{tal,z,z2}

\appendix

\section{Proving \Cref{thm:kop}}
%
The proof of \Cref{thm:kop} depends on two properties, which we now state formally  and prove in \Cref{lem:past,lem:sim}.
The first captures the fact that, intuitively, in an fip, if $\sigma'\lamr\sigma$ then the fact that~$\sigma'$ happens before~$\sigma$ is known at~$\sigma$. Namely, it will be true at every run~$r'$ that is indistinguishable to~$r$ in the eyes of~$\sigma$.

\begin{lemma}
\Llabel{lem:sim}
Let~$\Prot$ be an fip and let  $r,r'\in\Rrep(\Prot,\gamma)$. 
If $\sigma'\lamr\sigma$ and $r'\sim_\sigma r$, then $\sigma'\lam_{r'}\sigma$.
\end{lemma}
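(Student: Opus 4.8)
The plan is to prove the stronger fact that, in an fip, the whole causal past of a node is determined by the node alone: for any basic node $\sigma=\bnode{i,\ell}$ and any run $r$ in which $\sigma$ appears, the set $\past(r,\sigma)$ together with the restriction of $\lamr$ to it depends only on $\ell$. Granting this, the lemma is immediate. If $\sigma'\lamr\sigma$ then $\sigma'\in\past(r,\sigma)$; the hypothesis $r'\sim_\sigma r$ says exactly that $\sigma$ appears in $r'$ with the same local state $\ell$, so $\past(r',\sigma)=\past(r,\sigma)$ and the two happens-before relations agree on this set, whence $\sigma'\lam_{r'}\sigma$.

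To prove the stronger fact I would induct on the (finite) number of nodes in $\past(r,\sigma)$. For the inductive step, isolate the finite set of \emph{generators} of $\sigma$: its timeline-predecessor $\sigma^-$, if $\sigma$ is not an initial $i$-node, together with, for each message receipt recorded in the last move of $\sigma$, the node at which that message was sent. By the definition of $\lamr$ as the least transitive relation closed under locality and message delivery, $\past(r,\sigma)=\{\sigma\}\cup\bigcup_\tau\past(r,\tau)$ with $\tau$ ranging over the generators, and $\lamr$ restricted to $\past(r,\sigma)$ is the transitive closure of the union of the relations $\lamr$ on the $\past(r,\tau)$ together with the edges $\tau\lamr\sigma$. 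Each generator $\tau$ satisfies $\timeof_r(\tau)<\timeof_r(\sigma)$ (a predecessor is strictly earlier, and a message takes at least one time unit), hence $\sigma\notin\past(r,\tau)$ and $\past(r,\tau)\subsetneq\past(r,\sigma)$, so the induction hypothesis applies to $\tau$. Now invoke the fip hypothesis to see that each generator is pinned down by $\ell$ alone: the predecessor is $\bnode{i,\ell^-}$ where $\ell^-$ is the state just before the last move, and every message recorded as received in $\ell$ carries, being an fip message, the sender's complete local state, which fixes the sending node as a (process, local-state) pair. Consequently every generator also appears in any run in which $\sigma$ appears, and combining the per-generator equalities supplied by the induction hypothesis with the run-independent recipe for assembling $\past(r,\sigma)$ from them yields the claim for $\sigma$. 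The base case $\past(r,\sigma)=\{\sigma\}$ is clear, since then $\ell$ is $i$'s initial state, which is the same in every run in which $\sigma$ appears.

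The main obstacle — and the place where the fip hypothesis actually does the work — is verifying that this recipe is genuinely run-independent: that the generator set, and with it the node set and the causal edges of $\past(r,\sigma)$, are functions of $\ell$ and not of incidental features of $r$ such as the order or batching of message deliveries along a process's timeline. This reduces to the observation that a full-information local state records not merely the events a process has observed, but, recursively, the complete local state of every node it causally depends on, so that passing to an indistinguishable run can neither add nor remove a node or an edge below $\sigma$. Once this bookkeeping is made precise, the induction above and the deduction of the lemma are routine.
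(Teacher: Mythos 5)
Your proof is correct but takes a genuinely different, more ambitious route. The paper's argument is leaner: it fixes a single $\lamr$-chain $\sigma'=\sigma_0,\sigma_1,\ldots,\sigma_k=\sigma$ witnessing $\sigma'\lamr\sigma$ in $r$, and propagates the predicate ``appears in $r'$'' backward one link at a time, using exactly the fip mechanism you identify (each $\sigma_m$'s local state encodes $\sigma_{m-1}$'s, either as its predecessor or as the content of a received message, so $\sigma_m$ cannot appear in any run without $\sigma_{m-1}$ preceding it there). You instead prove the strictly stronger invariant that all of $\past(r,\sigma)$, with $\lamr$ restricted to it, is a function of $\sigma$'s local state alone. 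That is a useful uniformization; it is implicitly what underwrites the footnote after the definition of $\lamr$ and the later run-independence of $\GB{r,\sigma}$ and of the $\sigma$-recognized nodes. The cost is a more delicate induction: your measure $\lvert\past(r,\sigma)\rvert$ is a priori run-dependent --- that is part of what you are proving --- so the hypothesis must be quantified over pairs $(r,\sigma)$ rather than over $\sigma$ alone, and you must separately verify that $\sigma$'s generator set agrees between $r$ and $r'$ before the per-generator equalities can be chained. You flag this bookkeeping yourself, and it does go through (an alternative that avoids the bootstrap is to induct on the nesting depth of the full-information state $\ell$, which is manifestly run-independent). The paper's one-chain version sidesteps the issue entirely; both arguments rest on the same underlying fip fact.
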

\begin{proof}
By the definition of $\lamr$, since $\sigma'\lamr\sigma$ there is a sequence of basic nodes in $r$, $(\sigma_0,\sigma_1,\ldots,\sigma_k)$, where $\sigma_0=\sigma'$, $\sigma_k=\sigma$ and $\sigma_m$, for $0<m\leq k$, is either the successor of $\sigma_{m-1}$, or receives a message sent from $\sigma_{m-1}$ (in run~$r$).
We claim that if $\sigma_m$ appears in $r'$, then so does $\sigma_{m-1}$, as in both cases the state of~$\sigma_m$ is affected by the state of~$\sigma_{m-1}$, being either its predecessor or the source of a message it receives, thus $\sigma_m$ cannot appear without $\sigma_{m-1}$, and further $\sigma_{m-1}\lam_{r'}\sigma_m$.
By induction, as $\sigma_k=\sigma$ appears in $r'$, then $\sigma_0=\sigma'$ appears in $r'$ as well and $\sigma'\lam_{r'}\sigma$.
\end{proof}

The second lemma states that in an implementation of $\late{\sfa}{\sfb}{x}$ or $\early{\sfb}{\sfa}{x}$, process~$B$ can perform~$\sfb$ only if there is a message chain to~$B$ from a node at which~$C$ sends a \Go\ message to~$A$.

\begin{lemma}
\Llabel{lem:past}
Let~$\Rrep=\Rrep(\Prot,\gamma)$ and suppose that~$\Prot$ implements either $\late{\sfa}{\sfb}{x}$ or $\early{\sfb}{\sfa}{x}$.
If $B$ performs $\sfb$ at basic node~$\sigma$ in a run $r\in\Rrep$, then there is a basic node~$\sigC$ in $r$ at which $C$ sends a \Go\ message to~$A$, such that $\sigC\lamr\sigma$.
\end{lemma}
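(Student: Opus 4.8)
The plan is to argue by contradiction. Suppose $B$ performs $\sfb$ at basic node $\sigma$ in a run $r \in \Rrep$, but there is no basic node $\sigC$ in $r$ at which $C$ sends a \Go\ message to $A$ with $\sigC \lamr \sigma$. I will produce a second run $r' \in \Rrep$ with $r' \sim_\sigma r$ in which $A$ never performs $\sfa$. Since $B$'s action is a deterministic function of its local state, $B$ also performs $\sfb$ in $r'$; but $\sfa$ never occurs in $r'$, which violates the requirement in \Cref{def:1} (embedded in ``implements'') that $\sfb$ be performed only if $\sfa$ is. This contradiction establishes the lemma.

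The construction of $r'$ is the heart of the argument. First I would observe that, by the assumption that $C$'s decision to send \Go\ is spontaneous, $C$ sends a \Go\ message in $r$ only upon receiving the external message $\mu_{go} \in \Ext$; and external messages are delivered nondeterministically and independently of all other events. So I would take $r'$ to agree with $r$ on all initial states and on all environment choices concerning internal messages, but with the environment choosing never to deliver $\mu_{go}$ to $C$ (and, more conservatively, never delivering any external message that would cause $C$ to send \Go). I then need two things: (i) $r'$ is a legal run of $\Prot$ in $\gamma$ — this holds because withholding an external message is always a permitted environment choice, and no delivery deadline forces $\mu_{go}$; and (ii) $r' \sim_\sigma r$, i.e.\ the node $\sigma$ still appears in $r'$ with the same local state. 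For (ii), the key point is that $\past(r,\sigma)$ does not contain any node at which $C$ sends \Go\ — and, more to the point, does not contain any node affected by $C$'s receipt of $\mu_{go}$ — so removing that delivery does not alter any node in $\sigma$'s causal past. Under a full-information protocol, $r_i(m)$ for $\sigma = \bnode{i, \ell}$ is determined entirely by $\past(r,\sigma)$, so $\sigma$ appears unchanged in $r'$.

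The main obstacle I anticipate is making precise the claim that withholding $\mu_{go}$ does not disturb $\past(r,\sigma)$. One must rule out that some node in $\past(r,\sigma)$ depends, via a chain of messages, on $C$'s having sent the \Go\ message (or on any downstream effect of $C$ receiving $\mu_{go}$). The hypothesis only directly says no \Go-sending node of $C$ is in $\past(r,\sigma)$; I would need the stronger consequence that no node causally below such a sending event lies in $\past(r,\sigma)$ either — but this follows because $\lamr$ is transitive, so if a node below $C$'s \Go-send were in $\past(r,\sigma)$, then by transitivity the \Go-send itself would be too. A secondary subtlety is that $C$ might be poised to send \Go\ in the future of $r$ at a node not yet below $\sigma$; this is harmless, since $r'$ simply never triggers that send, and nothing in $\past(r,\sigma)$ records whether it will or will not happen. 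I would also invoke \Cref{lem:sim} in the contrapositive direction only implicitly: the construction guarantees $r' \sim_\sigma r$ directly, so $B$'s state at $\sigma$, and hence its action $\sfb$, is identical in $r'$ — completing the contradiction with the correctness of $\Prot$.
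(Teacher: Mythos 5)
Your approach is correct in outline but differs from the paper's in two respects: the paper argues directly, not by contradiction, and it constructs a single alternative run $r'$ that withholds \emph{every} external message outside $\past(r,\sigma)$ rather than targeting $\mu_{go}$ specifically. The paper then observes that since $\sigma$ is unchanged, $B$ still acts in $r'$, hence by correctness $C$ must still send \Go\ in $r'$; and since all external messages in $r'$ land inside $\past(r,\sigma)$, the \Go-sending node must lie there. Your contradiction framing buys a cleaner appeal to \Validity\ (``$\sfb$ only if $\sfa$''), whereas the paper's version reaches the conclusion affirmatively; both are legitimate, and you correctly identify the transitivity-of-$\lamr$ argument that keeps $\past(r,\sigma)$ untouched when $\mu_{go}$ is dropped.

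The genuine gap is in the construction of $r'$, which you leave at the level of ``agree with $r$ on all environment choices concerning internal messages.'' Once $\mu_{go}$ is withheld, $C$'s local state diverges from $r$ at time $\timeof_r(\sigC)$ onward; in an \ffip, $C$ then sends \emph{different} messages, so the multiset of in-transit messages in $r'$ is not that of $r$, and ``the same environment choices'' is ill-defined. You must still produce a legal schedule for those divergent messages, and, crucially, show that none of them (once they hit their mandatory upper-bound delivery deadlines) is forced into a node of $\past(r,\sigma)$ --- which would silently alter $\sigma$'s view and break $r' \sim_\sigma r$. The paper resolves this with an explicit inductive construction: inside $\past(r,\sigma)$, deliveries mimic $r$; outside, every in-transit message is delivered only at its upper bound; and the induction step proves (via Locality and transitivity of $\lamr$) that an upper-bound-forced delivery can never land at a node of $\past(r,\sigma)$. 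Your sketch acknowledges the obstacle but does not supply this argument, and without it you have not established that $r'$ is both legal and indistinguishable at $\sigma$.
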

\begin{proof}
Let~$\Rrep=\Rrep(\Prot,\gamma)$ and suppose that~$\Prot$ implements either $\late{\sfa}{\sfb}{x}$ or $\early{\sfb}{\sfa}{x}$.
Fix a run $r\in\Rrep$, and a basic node~$\sigma$ that appears in $r$, and such that $B$ performs $\sfb$ at $\sigma$.
We want to prove that there is a node~$\sigC$ of~$C$ in~$r$ that sends a \Go\ message, such that $\sigC\lamr\sigma$.

We construct a new run, $r'$, by inductively describing its global states.
At $m=0$, define $r'(0)=r(0)$. 
For $m>0$, assume we have the global states of $r'$ up to time $m-1$, and we define it for time $m$.
We do so by defining which messages will be delivered.
Let $\mu$ be a message sent at the $i$-node $\sigma_i$ to agent~$j$, that is in transit in $r'$ before time $m$.
\begin{itemize}
\item If $\sigma_i$ appears in $r$, $\timeof_{r'}(\sigma_i)=\timeof_{r}(\sigma_i)$, $\mu$ is delivered in $r$ at $m$ and $\nodeL{j,r_j(m)}\lamr\sigma$, then deliver~$\mu$ to~$j$.
\item Otherwise, deliver $\mu$ only if $\timeof_{r'}(\sigma_i)+\upb{ij}=m$.
\end{itemize}
Moreover, for any $i\in\Proc$, deliver in $r'$ an external message to agent $i$ at time $m$ if and only if the same external message is received by~$i$ at $m$ in $r$, and $\nodeL{i,r_i(m)}\lamr\sigma$.\\

Note that $r'\in\Rrep$.
Moreover, we claim that the following holds for $r'$:
\begin{enumerate}
\item $r'\sim_\sigma r$; and
\item The only external messages delivered in $r'$ are delivered to nodes from $\past(r,\sigma)$.
\end{enumerate}
Following the above claim, as $r'\sim_\sigma r$ , we have that $\sigma$ also appears in $r'$, and thus $B$ performs $\sfb$ in $r'$ as well.
As $\Prot$ implements either $\late{\sfa}{\sfb}{x}$ or $\early{\sfb}{\sfa}{x}$, and as $B$ performs $\sfb$ in $r'$, then there is in~$r'$ a node of $C$, $\sigC$, which sends a \Go\ message, and as it must do so spontaneously, he must have received an external message.
But, by the above claim the only nodes that receive external messages are from $\past(r,\sigma)$.
Thus, $\sigC\in\past(r,\sigma)$, and $\sigC\lamr\sigma$ as required.\\

To prove the above claims, we prove by induction on the time $m$ that for all $i\in\Proc$, if ${\nodeL{i,r_i(m)}\lamr\sigma}$, then $r'_i(m)=r_i(m)$.
At $m=0$, this is trivially true as $r'(0)=r(0)$.
Let $m>0$, and assume the claim is true up to time $m-1$.
Note that by the induction we can conclude that if $\sigma'$ is a basic node that appears in $r$, such that $\sigma'\lamr\sigma$ and $\timeof_r(\sigma')<m$, then $\sigma'$ also appears in $r'$, and $\timeof_{r'}(\sigma')=\timeof_r(\sigma')$.

Assume that $\nodeL{i,r_i(m)}\lamr\sigma$ for some $i\in\Proc$.
Clearly $\nodeL{i,r_i(m-1)}\lamr\sigma$, thus $r'_i(m-1)=r_i(m-1)$.
We claim that exactly the same messages are delivered to~$i$ at~$m$ in~$r$ and~$r'$, and so $r'_i(m)=r_i(m)$.
Let $\mu$ be a message that is sent in $r$ from agent $j$ at time $t_\mu$, and is delivered in~$r$ to~$i$ at~$m$.
As $\nodeL{j,r_j(t_\mu)}\lamr\nodeL{i,r_i(m)}\lamr\sigma$, then $\nodeL{j,r_j(t_\mu)}\lamr\sigma$, and as $t_\mu<m$, we have that \nodeL{j,r_j(t_\mu)} appears also in $r'$, at the same time as in $r$, and so $\mu$ is sent in $r'$ at the same time.
According to the definition of $r'$, $\mu$ cannot be delivered in $r'$ before time $m$, and it will be delivered at $m$.
Now, let $\mu$ be a message that is sent in $r'$ from agent $j$ at time $t_\mu$, and is delivered to~$i$ at~$m$, and assume that $\mu$ isn't delivered in $r$ at $m$.
In this case it must be that $t_\mu+\upb{ji}=m$. Denote $\sigma_j=\nodeL{j,r_j(t_\mu)}$.
If $\sigma_j\lamr\sigma$, then $r_j(t_\mu)=r'_j(t_\mu)$, and then $\mu$ is sent in $r$ as well, at the same time. As it is not delivered in $r$ at $m$, and as $t_\mu+\upb{ji}=m$, then it must be delivered before time $m$. But then it will be delivered at the same time in $r'$, in contradiction to it being in transit in $r'$ before time $m$.
So, it must be that $\sigma_j\not\lamr\sigma$. Note that $\timeof_r(\sigma_j)\leq t_\mu$, and so it sends to $i$ a message in $r$ that must be delivered no later than time $m$, so it must be that $\sigma_j\lamr\nodeL{i,r_i(m)}$.
But, as $\nodeL{i,r_i(m)}\lamr\sigma$, we get that $\sigma_j\lamr\sigma$ in contradiction with the last assumption, and thus proving the induction.\\

The previous two claims follow directly from the claim we have just proved, and the truth of \Cref{lem:past} follows.	
\end{proof}
We shall now prove \Cref{thm:kop}:
\begin{proof}
We prove the first claim; the proof of the second claim is analogous. 
Let~$\Rrep=\Rrep(\Prot,\gamma)$ and suppose that~$\Prot$ implements $\late{\sfa}{\sfb}{x}$.
Fix~$r\in\Rrep$ and assume that~$C$ sends a \Go\ message to~$A$ at basic node~$\sigC$ in run $r$, and that $B$ performs $\sfb$ at basic node~$\sigma$.
Let $r'\sim_\sigma r$ be a run of~$\Rrep$. 
By the definition of indistinguishability, $\sigma$ appears in $r'$.
Since $\Prot$ is a deterministic protocol, $B$ also performs~$\sfb$ at~$\sigma$ in $r'$. 
By \Cref{lem:past}, 
we have that $\sigC\lamr\sigma$.
By \Cref{lem:sim}, the fact that~$\Prot$ is a full-information protocol implies that 
every basic node~$\sigma'$ that satisfies $\sigma'\lamr\sigma$ will appear in $r'\sim_\sigma r$ as well.
Thus, in particular, $\sigC$ appears in $r'$ as well, and  a \Go\ message is sent to~$A$ at $\sigma_C$ in~$r'$.
Since $\Prot$ implements $\late{\sfa}{\sfb}{x}$, we obtain both that $A$ performs~$\sfa$ at $ \comp{\sigC}{\scriptstyle{A}}$ in~$r'$, and that 
$(R,r')\sat \comp{\!\sigC}{{\!\scriptstyle{A}}}\atleast{x}\sigma$. The claim follows. 
%
%
\end{proof}

\section{The Proof of \Cref{thm:zz}}
The \textit{basic bounds graph}, \GB{r}, defined in \Cref{def:gb}, describes the time constraints on the basic nodes that appear in run $r$. 
These are imposed by the context~$\gamma$, based on the events that occur in the run.  
We have three types of such constraints: For every message sent, the context imposes an upper bound and a lower bound on its transmission times. Moreover, events that occur at a given site are linearly ordered. (Indeed, we assume that they are separated by at least one time unit.) 
Each constraint is represented in $\GB{r}$ by a corresponding weighted edge.

While each edge in the basic bounds graph describes a time constraint between two adjacent nodes, by transitivity we have that each \textbf{path} in the basic bounds graph describes a time constraint between the source and destination nodes.
Note that there are no positive cycles in  a basic bounds graph, since such a cycle would impose the absurd constraint that a node must occur strictly later than when it does. 

%
The basic bounds graph and the zigzag pattern are closely related. 
The following lemma
 shows that every path in the bounds graph defines a zigzag pattern  whose weight equals the path's length. 
\begin{lemma}
\Llabel{lem:zzisgb}
Let $r\in\Rrep$, $x\in\Int$, let $\sigma_1$ and $\sigma_2$ be basic nodes in $r$, and let \te{1} and \te{2} be two nodes of $r$ such that $\basic(\te{1},r)=\sigma_1$ and $\basic(\te{2},r)=\sigma_2$.
For every path $P$ from $\sigma_1$ to $\sigma_2$ in \GB{r} with weight $x$, 
there is a zigzag pattern~$Z$  from \te{1} to \te{2} in~$r$ with $\weight(Z)=x$.
\end{lemma}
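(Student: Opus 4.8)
The statement to prove (Lemma~\ref{lem:zzisgb}) asserts that every path $P$ in $\GB{r}$ from $\sigma_1$ to $\sigma_2$ induces a zigzag pattern $Z$ from $\te{1}$ to $\te{2}$ with $\weight(Z)=\weight(P)$. The plan is to proceed by induction on the length (number of edges) of the path $P$, decomposing $P$ into maximal "zig" and "zag" segments and reading off a two-legged fork for each such segment.

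First I would classify the edges of $\GB{r}$ according to \Cref{def:gb}: each edge is either (a) a \emph{local} edge $(\sigma,\sigma')$ of weight $1$ between a node and its successor on the same timeline, (b) a \emph{forward message} edge $(\sigma_i,\sigma_j)$ of weight $\lb{ij}$, or (c) a \emph{backward message} edge $(\sigma_j,\sigma_i)$ of weight $-\ub{ij}$. The key observation is that a path of the form "some backward message edges, then some forward message edges, possibly with local edges interspersed at the turning point" is exactly the bounds-graph footprint of a single two-legged fork: the sequence of backward edges traced in reverse is a message chain $p_2$ from a base node $\theta_0$ down to $\tail$-side, the local steps at the bottom move forward in time along one timeline, and the forward edges form the chain $p_1$ up to the $\head$-side. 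So I would parse $P$ greedily into blocks, each block ending right before the path switches from "going up via forward edges" back to "going down via backward edges"; at each such switch point the path sits at a node that will serve as $\head(F_k)$ and, after zero or more local (weight-$1$) forward steps, as $\tail(F_{k+1})$ — precisely matching the joining condition of \Cref{def:zig}, where $\joins(Z)$ counts exactly the local steps taken between consecutive forks, each contributing weight $1$.

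The main technical content is a base-case/building-block lemma: a path that goes purely "down then up" (backward edges, optional local forward steps, forward edges) from $\sigma$ to $\sigma'$ corresponds to a two-legged fork $F$ with $\tail(F)$, $\head(F)$ the endpoints (or their images under the extra local steps), and $\weight(F)=\lb{}(p_1)-\ub{}(p_2)$ equal to the sum of the forward-edge weights minus the sum of the absolute values of the backward-edge weights. This requires checking that the message chains actually exist in $r$: a backward edge $(\sigma_j,\sigma_i)$ exists in $\GB{r}$ \emph{because} a message was sent at $\sigma_i$ and received at $\sigma_j$ in $r$, so reversing the block of backward edges yields a genuine message chain in $r$ from the top node down; similarly the forward edges give a genuine chain upward. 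Concatenating these along the decomposition of $P$, and using the general-node composition notation $\ocomp{\theta}{q}$, produces the forks $F_1,\dots,F_c$; I then verify $\tail(F_1)=\te{1}$, $\head(F_c)=\te{2}$ (using $\basic(\te{1},r)=\sigma_1$, $\basic(\te{2},r)=\sigma_2$), and that $\sum_k\weight(F_k)+\joins(Z)$ telescopes to $\weight(P)$.

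The step I expect to be the main obstacle is handling the endpoints and the turning points cleanly: a path need not begin with a backward edge (it may start by going "up"), so the first block may be a degenerate fork whose $\tail$ equals its base, and similarly at $\sigma_2$; and I must make sure that when the path makes several local steps in a row at a turning point, they are all oriented forward in time (which follows because $\GB{r}$ has no positive cycles, cf.\ the remark in the excerpt, but needs to be invoked). Care is also needed so that consecutive forks share a timeline at the join — this is automatic since a turning point is a single basic node $\sigma_i$ and the local edges leaving it stay on process $i$'s timeline. Once the block decomposition and the orientation facts are pinned down, assembling $Z$ and summing the weights is routine bookkeeping.
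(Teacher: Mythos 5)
Your strategy---reading two-legged forks off segments of the path $P$---is the same underlying idea as the paper's, but the paper executes it one edge at a time, by induction on the number of edges of $P$: depending on whether the first edge is a forward message edge, a backward message edge, or a local (weight-$1$) edge, it prepends a fork whose head leg carries that message, extends the tail leg of the next fork by that message, or inserts a degenerate fork $(\te{1},\te{1},\te{1})$ that is \emph{not joined} to its successor. Your block decomposition, as described, has two concrete errors that this edge-at-a-time treatment avoids. First, the weight bookkeeping for local edges is wrong: you assert that ``$\joins(Z)$ counts exactly the local steps taken between consecutive forks, each contributing weight $1$,'' but by definition $\joins(Z)$ counts the number of consecutive fork pairs that are not joined, so a non-joined pair contributes exactly $+1$ to $\weight(Z)$ regardless of how many weight-$1$ local edges separate $\head(F_k)$ from $\tail(F_{k+1})$. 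If a turning point of $P$ contains $\ell\ge 2$ consecutive local edges, your $Z$ is short by $\ell-1$ relative to $\weight(P)$; the fix is to insert one degenerate fork per local edge, each contributing $0$ to the sum of fork weights and $+1$ via $\joins(Z)$.

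Second, your building-block claim is false as stated: a segment of backward edges, then local steps ``at the bottom,'' then forward edges is not the footprint of a single two-legged fork, because both legs of a fork must emanate from the \emph{same} base node. If the backward chain ends at $\sigma_i$ and the forward chain starts at a later $i$-node $\sigma_i'$, you need (at least) two forks---one whose head coincides with its base at $\sigma_i$, then non-joined transitions, then one whose tail coincides with its base---again via degenerate forks. Relatedly, a path in $\GB{r}$ may have a local edge between two forward edges or between two backward edges; your parsing only allows local edges ``at the turning point,'' yet such an interior local edge also forces a break into two non-joined forks, since each leg must be a single message chain from the base. Once the decomposition is repaired so that every local edge of $P$ is charged to its own non-joined fork transition, the rest of your outline (the message chains exist in $r$ by the definition of $\GB{r}$'s edges, the endpoints are anchored at $\te{1}$ and $\te{2}$, and the weights telescope) is sound and matches the paper's computation.
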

\begin{proof}
Let $x\in\Int$,  let $\basic(\te{1},r)=\sigma_1$ and $\basic(\te{2},r)=\sigma_2$, and assume there is a path $P$  from $\sigma_1$ to $\sigma_2$ in \GB{r} with weight $x$.
We prove by induction on $m$, the number of edges in the path $P$, that 
 there is a zigzag pattern $Z$ in $r$ from \te{1} to \te{2} with $\weight(Z)=x$.
If $m=0$, then $\basic(\te{1},r)=\basic(\te{2},r)$, and we define 
two \tlf s, $F_0\eqdef(\te{1},\te{1},\te{1})$ and $F_1\eqdef(\te{2},\te{2},\te{2})$ and a zigzag $Z\eqdef(F_0,F_1)$ and we are done (note that $F_0$ and $F_1$ are joined, and so $\weight(Z)=0$).
Let $m>0$, and assume the claim is true for any path with $m-1$ edges.
Assume that $\sigma_1$ is followed in $P$ by $\sigma'_1=\nodeL{k,l'}$.
Looking at the suffix of $P$ starting from $\sigma'_1$, we have by the assumption 
a zigzag pattern~$Z'$ in $r$ from the node \nodeF{\sigma'_1,k} corresponding to basic node~$\sigma'_1$ to \te{2}, say 
with $\weight(Z')=x-w(\sigma_1,\sigma'_1)$. (Here $w(\sigma_1,\sigma'_1)$ is the weight of the edge $(\sigma_1,\sigma'_1)$ in \GB{r}.)
Denoting $Z'=(F_1,\ldots,F_c)$, note that $\timeof_r(\tail(F_1))=\timeof_r(\sigma'_1)$. 
We shall construct the desired zigzag pattern~$Z$, whose tail will be the node~$\sigma_1$, which will be connected to~$Z'$ in a manner that depends on the edge that connects~$\sigma_1$ to the tail~$\sigma'_1$ of~$Z'$. We distinguish three cases: 
\begin{enumerate}
\item If $(\sigma_1,\sigma_1')$ is an edge with weight of $\lb{ik}$, corresponding to a message sent from $\sigma_1$ to $\sigma'_1$ in~$r$, then we define $Z=(F_0,F_1,\ldots,F_c)$, where $F_0\eqdef(\te{1},\te{1}\concat k,\te{1})$.  Note that $\basic(\te{1}\concat k,r)=\sigma'_1$ corresponds to $\tail(F_1)$, and so in this case $F_0$ and $F_1$ are joined. 
By construction, $\weight(F_0)=w(\sigma_1,\sigma'_1)$ and $\weight(Z)=\weight(F_0)+\weight(Z')$. Thus, $\weight(Z)=x$, as desired.
\item  If $(\sigma_1,\sigma_1')$ is an edge with weight of $-\ub{ki}$, corresponding to a message sent from $\sigma'_1$ to $\sigma_1$ in~$r$, then we define $Z=(F_0,F_1',\ldots,F_c)$, where $F_0=(\te{1},\te{1},\te{1})$ and $F'_1=(\base(F_1),\head(F_1),\tail(F_1)\concat i)$. 
In this case $F_0$ and $F'_1$ are joined. By construction, $\weight(F_0)=0$ and $\weight(F'_1)=\weight(F_1)+w(\sigma_1,\sigma'_1)$. Thus, again, $\weight(Z)=\weight(Z')+w(\sigma_1,\sigma'_1)=x$, as desired.
\item  If $(\sigma_1,\sigma_1')$ is an edge with weight of~$1$, as $i=k$ and $\sigma'_1$ is the successor of $\sigma_1$, 
then we define $Z=(F_0,F_1,\ldots,F_c)$ where $F_0=(\te{1},\te{1},\te{1})$. In this case the forks~$F_0$ and~$F_1$ are not joined. By construction, $\weight(F_0)=0$ and since $F_0$ and~$F_1$ are not joined we obtain that $\weight(Z)=\weight(Z')+1=x$, and we are done. 
\end{enumerate}
In all cases above, $Z$ is a zigzag pattern in $r$ from $\te{1}$ to $\te{2}$ with $\weight(Z)=x$ as required.
\end{proof}
We remark that the above lemma can be extended to be an if and only if statement (with some small technical changes).\\

Observing \GB{r}, we have the set of basic nodes that appear in $r$.
In particular, \GB{r} contains the dynamics between the nodes, that created their actual local states.
The only thing missing in \GB{r}, are the times at which those basic nodes occur (though these times are well defined in $r$, observing \GB{r} without knowing that it came from $r$, we can't know the times).
In fact, there are different runs whose basic bounds graphs are completely identical.
The difference between two such runs is just in the time in which the nodes of \GB{r} appear.
Moreover, given the graph \GB{r}, we can construct actual runs that produce such graph ($r$, for example).
For that task we have to assign times to the nodes of \GB{r}, i.e. the times at which those nodes will appear in the run.
However, not any such assignment of times is legal.  
If we want to keep the local states as they are, then the order of the nodes of each agent must remain the same as in \GB{r}, and all the respective messages must be sent and received by the same nodes. In such case, we must assure that the messages transmission time doesn't exceed bounds.
More formally, we define a \textit{valid timing function} of a basic bounds graph:
\begin{definition}
\Dlabel{def:vtf}
Let $r\in\Rrep$, and let $V'$ be a subset of the nodes of \GB{r}. A \defemph{valid timing function} for~$V'$ w.r.t.\ \GB{r} is a function $T:V'\to\Nat$ 
such that $T(\sigma_1)+w(\sigma_1,\sigma_2)\leq T(\sigma_2)$ holds for each edge $(\sigma_1,\sigma_2)$ in \GB{r} with $\sigma_1,\sigma_2\in V'$.
\end{definition}

As we will see, a valid timing function~$T$ defined for the set~$V$ of all nodes of \GB{r} can be used to produce a legal run~$r[T]$ such that $\GB{r[T]}=\GB{r}$, in which the timing of the nodes is according to~$T$.

Let $\sigma$ and $\sigma'$ be basic nodes in $r$.
Recall that any path in \GB{r} from $\sigma'$ to $\sigma$ defines a constraint on how far $\sigma'$ can appear before $\sigma$ in $r$ (this is proved here indirectly by combining \Cref{thm:zzsimple} and \Cref{lem:zzisgb}). 
Equivalently, such a path constrains the time difference between~$\sigma'$ and~$\sigma$ in any other run $r'\in\Rrep$ for which $\GB{r'}=\GB{r}$.
The longer (i.e., ``heavier'') the path from $\sigma'$ to~$\sigma$ is, the stronger the constraint. 
We will prove that the longest such path gives a tight constraint.
I.e., if the longest path from~$\sigma'$ to~$\sigma$ in \GB{r} is, say, $P$, then there is a run $r'$ where $\timeof_{r'}(\sigma)-\timeof_{r'}(\sigma')=\weight(P)$.
We do so by proving the existence of a valid timing function $T$ such that $T(\sigma)-T(\sigma')=\weight(P)$, 
and then proceed to show that there is such run $r[T]$
 for which $\GB{r[T]}=\GB{r}$, 
and where the timing of the nodes are according to $T$.
We will actually prove something stronger, which is that there exists a valid timing function $T'$ in which for \defemph{all} nodes~$\sigma'$ that are connected to~$\sigma$ by a path in \GB{r}, 
$T'(\sigma)-T'(\sigma')=\weight(P')$ holds, where $P'$ is the longest path between $\sigma'$ and $\sigma$ in \GB{r}. 
Furthermore, there is a corresponding run $r[T']$ where the time of the nodes are according to $T'$.

By the definition of \textit{supports}, if \Rrep\ supports $\sigma'\atleast{x}\sigma$, then $\timeof_{r'}(\sigma)-\timeof_{r'}(\sigma')\geq x$ holds for every run $r'\in\Rrep$ that contains both nodes~$\sigma$ and~$\sigma'$.
In $r[T']$ we have that $\timeof_{r[T']}(\sigma)-\timeof_{r[T']}(\sigma')=\weight(P)$ (where $P$ is the longest path in \GB{r} between $\sigma'$ and $\sigma$) and so it must be that $\weight(P)\geq x$.
Observing the path~$P$, we have by \Cref{lem:zzisgb} a zigzag pattern $Z$ in $r$ from $\sigma'$ to $\sigma$ with $\weight(Z)=\weight(P)\geq x$.
this result proves \Cref{thm:zz} (at least for basic nodes). 

There is however another complexity, as not all of the basic nodes that appear in $r$ have a path to $\sigma$ in \GB{r}. 
For this reason, we will have to use a slightly stronger result than the above-mentioned claims. 
We show that even if $T$ is a valid timing function only on a \textbf{subset} of nodes (of \GB{r}), then yet a run $r[T]$  that contains only that subset of nodes (and some other initial nodes, i.e. nodes from time $0$), that appear at times according to $T$, can be constructed. This will be true, however, only for subsets of nodes that satisfy a particular closure property: 
\begin{definition}  
Let $r\in\Rrep$, let $V$ be the set of nodes of \GB{r}, and let $V'\subseteq V$ be a subset of $V$.
We say that~$V'$ is \defemph{precedence-closed} w.r.t. $\GB{r}$ (or p-closed for short) if for every edge ($\sigma_1,\sigma_2$) of \GB{r}, if 
$\sigma_2\in V'$, then $\sigma_1\in V'$ as well. 
\end{definition}

Intuitively, the fact that~$V'$ is p-closed means that the constraints imposed by $\GB{r}$ based on nodes 
outside of~$V'$ do not affect the timing of nodes in~$V'$.
We will be interested in a particular p-closed set of nodes of \GB{r}: The nodes from which there is a path to a specific  node~$\sigma$: 
\begin{definition}  
Let $r\in\Rrep$, let $\GB{r}=(V,E)$ and let $\sigma\in V$. 
We define the \defemph{$\bm{\sigma}$-precedence set} in \GB{r}, denoted~$V_\sigma$, by: 
$V_\sigma\eqdef\{\sigma'\in V : $ there is a path in \GB{r} from $\sigma'$ to $\sigma\}$.
\end{definition}

We remark that $V_\sigma$ is the minimal set of nodes that contains~$\sigma$ and is p-closed (as shall be proved next).
Recall that there are three types of edges in \GB{r}: (i) Between two successive nodes of the same agent, (ii) from a node at which a message is sent to the node at which it is received, and in the opposite direction (iii) from the node at which a message is received to the node at which it was sent.
Thus, if a message sent at an $i$-node~$\sigma$ is received at a $j$-node $\sigma'$, then $\sigma'\in V_\sigma$, as there is an edge (with weight $-\ub{ij}$) that goes ``back'' from $\sigma'$ to $\sigma$ (see \Cref{fig:Bounds2}).
If the protocol is an \ffip, and $(j,i)\in\Chan$, then the protocol ensures that~$j$ also sends a message at~$\sigma'$ back to agent $i$, which is received at the node $\sigma''=\basic(\nodeF{\sigma', [j,i]},r)$.
As there is an edge from $\sigma''$ to $\sigma'$ (the edge in the opposite direction to the send-receive), and there is an edge from $\sigma'$ to $\sigma$, we get that also $\sigma''\in V_\sigma$.
Clearly (in an \ffip) process~$i$ will also send a message at~$\sigma''$ back to agent $j$, which is received at a node that will be also in $V_\sigma$, and so on.
Consequently, $V_\sigma$ is typically infinite, and contains nodes that appear in the far future of~$\sigma$ (at least in the case of an \ffip).
Intuitively, the timing of each one of these nodes imposes a constraint on the timing of~$\sigma$. 
On the other hand, for example,  the successor of $\sigma$ on~$i$'s timeline might not appear in~$V_\sigma$, in which case it might be delayed arbitrarily relatively to $\sigma$.

We first formally prove that $V_\sigma$ is in fact a p-closed set:
\begin{lemma} 
\Llabel{lem:closure}
Let $r\in\Rrep$, $\sigma$ be a basic node in $r$ and let $V$ be the set of nodes of \GB{r}.
Then $V_\sigma$ is p-closed.
\end{lemma}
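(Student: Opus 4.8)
The plan is to argue directly from the definitions, using nothing more than the fact that, in a directed graph, the set of vertices that can reach a fixed vertex is closed under incoming edges. First I would unfold the two relevant definitions: a node $\sigma'$ lies in $V_\sigma$ exactly when there is a directed path in $\GB{r}$ from $\sigma'$ to $\sigma$, and $V_\sigma$ being p-closed means precisely that whenever $(\sigma_1,\sigma_2)$ is an edge of $\GB{r}$ and $\sigma_2\in V_\sigma$, also $\sigma_1\in V_\sigma$.

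So I would fix an arbitrary edge $(\sigma_1,\sigma_2)$ of $\GB{r}$ and assume $\sigma_2\in V_\sigma$. By definition of $V_\sigma$ there is a path $P=(\sigma_2=\tau_0,\tau_1,\ldots,\tau_k=\sigma)$ in $\GB{r}$ (including the degenerate case $k=0$, where $\sigma_2=\sigma$). Prepending the given edge yields the sequence $(\sigma_1,\tau_0,\tau_1,\ldots,\tau_k)$, and each consecutive pair in this sequence is an edge of $\GB{r}$: the pair $(\sigma_1,\tau_0)=(\sigma_1,\sigma_2)$ by hypothesis, and the pairs $(\tau_{m},\tau_{m+1})$ because they already occur along $P$. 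Hence this sequence is a path in $\GB{r}$ from $\sigma_1$ to $\sigma$, so $\sigma_1\in V_\sigma$, which is exactly the condition in the definition of p-closed. Since the edge $(\sigma_1,\sigma_2)$ was arbitrary, $V_\sigma$ is p-closed.

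I do not expect a genuine obstacle here; the content of the lemma is the elementary closure property just described, and the proof is essentially one line of path concatenation. The only points that merit a moment's care are the degenerate path $k=0$ (so that the argument also covers the case $\sigma_2=\sigma$) and the convention that ``path'' in the definition of $V_\sigma$ permits the length-zero path, which gives $\sigma\in V_\sigma$ and makes the later claim that $V_\sigma$ is the \emph{minimal} p-closed set containing $\sigma$ meaningful. Neither affects the argument above.
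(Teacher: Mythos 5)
Your proof is correct and follows the same argument as the paper: prepend the edge $(\sigma_1,\sigma_2)$ to a path witnessing $\sigma_2\in V_\sigma$ to obtain a path from $\sigma_1$ to $\sigma$. The remark about the degenerate length-zero path is a minor point of care the paper leaves implicit, but the approach is identical.
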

\begin{proof}
Let $\sigma'\in V_\sigma$, and let $\sigma''\in V$ such that $(\sigma'',\sigma')$ is an edge in \GB{r}.
$\sigma'\in V_\sigma$, and so there is a path from $\sigma'$ to $\sigma$ in \GB{r}.
As $(\sigma'',\sigma')$ is an edge in \GB{r}, we can go in \GB{r} from $\sigma''$ to $\sigma'$, and then from the path that $\sigma'$ has to $\sigma$, and so to reach from $\sigma''$ to $\sigma$. 
Thus, there is a path in \GB{r} from $\sigma''$ to $\sigma$, and so $\sigma''\in V_\sigma$. It follows that~$V_\sigma$ is p-closed.
\end{proof}

As we used in \Cref{thm:zz} the general nodes notation (instead of using basic nodes), we shall define \mbox{$V_\tet\eqdef V_{\basic(\tet,r)}$}.
For a node~$\theta$ such that $\basic(\tet,r)=\sigma$, we now define the promised valid timing function on $V_\tet$ ($=V_\sigma$) in which $T(\sigma)-T(\sigma')=d(\sigma')$ holds for every node $\sigma'\in V_\sigma$, where $d(\sigma')$ is the weight of the longest path in \GB{r} from $\sigma'$ to $\sigma$.
%
\begin{definition} [Slow-Timing] 
\Dlabel{def:slow}
Let $r\in\Rrep$, and let \tet\ be a node of $r$ so that $\basic(\tet,r)=\sigma$. 
Moreover, let~$D$ denote the weight of the longest path in \GB{r} that ends in the node~$\sigma$. 
The \textit{slow timing function} of $\tet$ in $r$, $\Tst:V_\tet\to\Nat$, is defined as follows:
For each $\sigma'\in V_\tet$, define $d(\sigma')$ to be the weight of the longest path from $\sigma'$ to $\sigma$ in \GB{r}.
We define $\Tst(\sigma')=D-d(\sigma')$.
\end{definition}
We remark that computing longest paths in the bounds graph (and in particular, computing~$D$) is an easy task, because there are no positive cycles in the graph. (E.g., Bellman-Ford can be used, if we work with $w'=-w$). 
Clearly, every path of \GB{r} that ends in~$\sigma$ starts at some node~$\sigma'$. Moreover, both $\sigma$ and~$\sigma'$ appear at finite times in~$r$. Since~$\sigma'\atleast{\scriptscriptstyle{D}}\sigma$ in~$r$, if follows that $D$ must be finite. 
By definition of~$D$, it follows that $\Tst$ assigns non negative times to all nodes in~$V_\tet$. 
We now turn to show that $\Tst$ is a valid timing function.
\begin{lemma}  
\Llabel{lem:slowrun}
Let $r\in\Rrep$, and let \tet\ be a node of $r$.
Then $\Tst$ is valid timing function for $V_\theta$ w.r.t.\ \GB{r}.
\end{lemma}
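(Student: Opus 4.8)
The plan is to verify the defining inequality of a valid timing function directly from the Bellman-type optimality property of longest paths in $\GB{r}$. Writing $\sigma=\basic(\tet,r)$, I need to show that for every edge $(\sigma_1,\sigma_2)$ of $\GB{r}$ with $\sigma_1,\sigma_2\in V_\tet=V_\sigma$ we have $\Tst(\sigma_1)+w(\sigma_1,\sigma_2)\le\Tst(\sigma_2)$. Substituting the definition $\Tst(\sigma')=D-d(\sigma')$, where $d(\sigma')$ is the weight of the longest path from $\sigma'$ to $\sigma$ in $\GB{r}$, this is equivalent to $d(\sigma_2)+w(\sigma_1,\sigma_2)\le d(\sigma_1)$. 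That $\Tst$ takes values in $\Nat$ has already been argued just before the lemma (from $D$ finite together with $0\le d(\sigma')\le D$ for $\sigma'\in V_\sigma$), so only this one inequality remains.

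First I would record that $d$ is well defined on $V_\sigma$: by definition of $V_\sigma$ there is at least one path from $\sigma'$ to $\sigma$ for each $\sigma'\in V_\sigma$, and since $\GB{r}$ has no positive cycles (a positive cycle would impose that some node occurs strictly later than it does), the supremum of path weights from $\sigma'$ to $\sigma$ is attained and finite — indeed bounded above by $D$. Then, given an edge $(\sigma_1,\sigma_2)$ with $\sigma_2\in V_\sigma$, I take a longest path $P_2$ from $\sigma_2$ to $\sigma$, of weight $d(\sigma_2)$, and prepend the edge $(\sigma_1,\sigma_2)$ to obtain a path from $\sigma_1$ to $\sigma$ of weight $w(\sigma_1,\sigma_2)+d(\sigma_2)$. (This also re-confirms $\sigma_1\in V_\sigma$, consistently with \Cref{lem:closure}.) By maximality of $d(\sigma_1)$ over all $\sigma_1$-to-$\sigma$ paths, $d(\sigma_1)\ge w(\sigma_1,\sigma_2)+d(\sigma_2)$, which is exactly what is needed. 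Unwinding, $\Tst(\sigma_1)+w(\sigma_1,\sigma_2)=D-d(\sigma_1)+w(\sigma_1,\sigma_2)\le D-d(\sigma_2)=\Tst(\sigma_2)$, so $\Tst$ is a valid timing function for $V_\tet$ w.r.t.\ $\GB{r}$.

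I expect no genuine obstacle here; the only points requiring care are that ``longest path'' is legitimate — i.e.\ the absence of positive cycles in $\GB{r}$, already observed — and that appending a single edge to a path again yields a path (or, if one insists on simple paths, that deleting the non-positive cycles that might arise does not decrease the weight). Both are immediate, so this lemma is essentially the bookkeeping step that sets up the construction of the slow run built from $\Tst$ and used in the proof of \Cref{thm:zz}.
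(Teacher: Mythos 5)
Your argument is correct and is essentially the paper's own proof: both reduce the required inequality to $d(\sigma_1)\ge w(\sigma_1,\sigma_2)+d(\sigma_2)$ and establish it by prepending the edge $(\sigma_1,\sigma_2)$ to a longest path from $\sigma_2$ to $\sigma$ (the paper merely phrases this as a contradiction). Your added remarks on the absence of positive cycles and the non-negativity of $\Tst$ match what the paper records just before the lemma.
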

\begin{proof}
Assume that $\basic(\tet,r)=\sigma$, and let $\sigma_1,\sigma_2\in V_\sigma$ such that $(\sigma_1,\sigma_2)$ is an edge in \GB{r}.
Assume by way of contradiction that $\Tst(\sigma_2)<\Tst(\sigma_1)+w(\sigma_1,\sigma_2)$.
According to \Cref{def:slow} we have that $\Tst(\sigma_2)-\Tst(\sigma_1)=d(\sigma_1)-d(\sigma_2)<w(\sigma_1,\sigma_2)$.
Recall that $d(\sigma_1)$ is the weight of the longest path from $\sigma_1$ to $\sigma$.
Let us observe the following alternative path from $\sigma_1$ to $\sigma$: 
We begin by walking on the edge $(\sigma_1,\sigma_2)$  and from $\sigma_2$ on its longest path to $\sigma$, whose weight is $d(\sigma_2)$. 
We thus obtain
a path from $\sigma_1$ to $\sigma$ with weight of $w(\sigma_1,\sigma_2)+d(\sigma_2)>d(\sigma_1)$, contradicting the fact that $d(\sigma_1)$ is the weight of the longest path from $\sigma_1$ to~$\sigma$.
\end{proof}

Now comes the ``heart'' of the claim, that each valid timing function on a p-closed set describes a run that contains (almost) exactly the nodes of this set, and where the times of those nodes are exactly according to the timing function.
Note that in every run $r$, every process~$i$ has at least one basic node, which is \nodeL{i,r_i(0)}. This is~$i$'s initial node (the node with $i$'s initial local state).
If we want to find a run that contains all the nodes from some p-closed set, this run must still contain at least one basic node for each agent.
It will be convenient to mark the set of initial basic nodes in a run.
Thus,
we define the set of initial basic nodes of $r$ by \[\VrInit=\{\sigma' : \mbox{$\sigma'$ appears in $r$ and }\timeof_r(\sigma')=0\}.\]
Essentially all of the information about the relative timing of nodes in a run~$r$ is captured by paths in the bounds graph \GB{r}. This is not true, however, for nodes of $\VrInit$, because $\timeof_r(\sigma)=0$ for all nodes $\sigma\in\VrInit$. It follows that the  relative timing of initial nodes is tightly correlated. This does not cause special problems, because in our model, a process performs an action only when it receives a message --- either an external input or a message from one of its neighbors in the network. Nevertheless, the initial nodes need to be handled in a slightly different manner when we use the bounds graph \GB{r} to construct runs that conform with particular timing functions. We proceed as follows.

\begin{lemma}[Run by timing] 
\Llabel{lem:runbytime}
Let $r\in\Rrep$, let $V$ be the set of nodes of \GB{r}, and let $V'\subseteq V$ be a p-closed subset of $V$.
Moreover, let $T:V'\to\Nat$ be a valid timing function. 
Let us denote $\GB{r}\!\!\downarrow_{V'}=(V',E')$ the subgraph of $\GB{r}$ that is induced by $V'$ . 
Then there exists a legal run in~$\Rrep$, denoted by $r[T]$, such that \mbox{(i) ~$\GB{r[T]}=(V'\cup\VrInit, E')$}, 
and (ii)  $\timeof_{r[T]}(\sigma')=T(\sigma')$ holds for each $\sigma'\in \big(V'\setminus \VrInit\big)$.
\end{lemma}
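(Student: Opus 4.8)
The plan is to build $r[T]$ explicitly, time step by time step. First set $r[T](0)=r(0)$. At each later time $m$, for every process $i$ that owns a node $\sigma'\in V'$ with $T(\sigma')=m$ --- such a node is unique, since successor edges of $\GB{r}$ have weight $1$ and $T$ is valid, so $T$ is strictly increasing along each timeline --- I would have the environment deliver to $i$ at time $m$ exactly those internal and external messages that $i$ receives at $\sigma'$ in the original run $r$; to every process owning no node with $T$-value $m$, deliver nothing at time $m$. Since $\Prot$ is deterministic, having $i$'s local state just before time $m$ be the state of the predecessor of $\sigma'$ and then feeding it precisely $\sigma'$'s incoming messages forces $i$ into the local state of $\sigma'$ at time $m$. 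A process that never owns a node of $V'$ then receives nothing throughout and stays at its initial local state forever.

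\textbf{Legality and node set.} Two facts carry the proof: validity of $T$ and p-closedness of $V'$. If an internal message $\mu$ is sent at a node $\sigma_k$ along a channel $(k,i)$ and received at an $i$-node $\tau\in V'$ in $r$, then p-closure applied to the edge $(\sigma_k,\tau)$ puts $\sigma_k$ in $V'$, so both $(\sigma_k,\tau)$ of weight $\lb{ki}$ and the backward edge $(\tau,\sigma_k)$ of weight $-\ub{ki}$ lie in the induced graph; validity of $T$ then yields $\lb{ki}\le T(\tau)-T(\sigma_k)\le\ub{ki}$. Hence, once $\sigma_k$'s state has been realized at time $T(\sigma_k)$ (by the induction below), delivering $\mu$ at time $T(\tau)$ respects the channel bounds, and $\mu$ is never \emph{forced} to arrive before $T(\tau)$, so the prescribed stalling is admissible. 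In the other direction, p-closure applied to the edge $(\tau,\sigma_k)$ shows that every message emitted by a $V'$-node is, in $r$, received at a $V'$-node; consequently no $V'$-node ever sends to a process lacking a $V'$-node, and since initial nodes send nothing, such a process is indeed inert in $r[T]$. Likewise p-closure through the successor edges makes $V'$ downward closed on each timeline it touches, so the processes with $V'$-nodes realize a prefix of their $r$-timeline while the others contribute only their initial node: the node set of $\GB{r[T]}$ is exactly $V'\cup\VrInit$.

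\textbf{Finishing.} The main step is an induction on $m$ proving that $r[T]_i(m)$ is the local state of the latest $V'$-node of $i$ with $T$-value at most $m$ (or $r_i(0)$ if there is none). The inductive step uses that the internal messages slated for delivery at time $m$ are in transit --- their $r$-recipients are precisely the nodes being realized at time $m$, and they cannot have been forced out earlier, by the bound above --- that external messages are deliverable at any positive time, and that $\Prot$ is deterministic. This gives $\timeof_{r[T]}(\sigma')=T(\sigma')$ for every $\sigma'\in V'\setminus\VrInit$, which is claim~(ii). For claim~(i), the node set is $V'\cup\VrInit$ as shown, and since $r[T]$ reproduces among the nodes of $V'$ the same successor relation and the same send/receive relation as $r$ (while the isolated initial nodes carry no edges), clauses (a)--(b) of \Cref{def:gb} yield precisely the induced edge set $E'$. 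Finally $r[T]\in\Rrep$ because $r[T](0)=r(0)\in\Gz$ and every step obeys the scheduler rules, as just verified.

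\textbf{Main obstacle.} I expect the delicate point to be the ``no extra nodes'' half of claim~(i): one must argue that across \emph{all} time steps the environment can consistently stall every in-transit message until the time of its original recipient without ever being compelled to deliver it early, which would spawn a node outside $V'\cup\VrInit$. This hinges precisely on the backward (upper-bound) edges of $\GB{r}$ landing inside the induced graph --- exactly what p-closure provides --- together with the inequality $T(\tau)-T(\sigma_k)\le\ub{ki}$ from validity of $T$; organizing this bookkeeping uniformly in $m$ is where the care lies.
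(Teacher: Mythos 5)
Your construction mirrors the paper's proof almost line for line: both set $r[T](0)=r(0)$, both have the environment deliver to each process at time $m$ exactly the $r$-messages of its (unique) $V'$-node with $T$-value $m$ and nothing else, and both close with an induction on $m$ that uses p-closedness to keep senders and receivers of messages inside $V'$ and validity of $T$ to respect the channel bounds (including the ``never forced early'' direction via the backward $-\ub{}$ edges, which the paper handles in exactly the same way). The proposal is correct and takes essentially the same approach as the paper.
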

\begin{proof}
For ease of exposition, we shall denote $r[T]$ by~$r'$. This run is defined as follows:
\begin{itemize}
\item  $r'(0)=r(0)$.
\item Let $i\in\Proc$ and $\sigma_i\in V'$ be an $i$-node. If $\sigma_i$ receives external inputs in $r$, then process~$i$ receives the same inputs at time $T(\sigma_i)$ in $r'$.
\item Let~$\mu$ be a message that is sent at an $i$-node $\sigma_i\in V'$ and is received in $r$ at the $j$-node~$\sigma_j$.
Note that in such case $\sigma_j\in V'$, by the p-closedness of $V'$ (since there is an edge $\sigma_j\Atleast{{-\scriptscriptstyle{\ub{ij}}}}\sigma_i$ in~$\GB{r}$).  
If $\mu$ is in transit in the run $r'$ just before time $T(\sigma_j)$, then it is delivered to~$j$ at time $T(\sigma_j)$ in~$r'$. 
\item No other messages are delivered in the run~$r'$. 
\end{itemize}
We claim the following by induction on $m$:
\begin{enumerate}
\item The prefix of $r'$ up to time $m$ is a prefix of a legal run in $\Rrep$;
\item All the basic nodes in $r'$ up to time $m$ are either from $V'$ or $\VrInit$, and
\item For all $j\in\Proc$:
	\begin{enumerate}
	\item If $m=T(\sigma_j)$ for some $j$-node $\sigma_j\in\big(V' \setminus \VrInit\big)$, then $\nodeL{j,r'_j(m)}=\sigma_j$.
	\item For all $j$-nodes $\sigma'_j\in \big(V' \setminus \VrInit\big)$, if $m<T(\sigma'_j)$, then $\nodeL{j,r'_j(m)}\ne \sigma'_j$.
	\end{enumerate} 
\end{enumerate}
At $m=0$, the claims are true as (1) $r'(0)=r(0)\in G_0$, thus it is a prefix of legal run, 
(2) 
the set of nodes of $r'$ at time $0$ are exactly $\VrInit$,
and (3) if $T(\sigma_j)=0$, then $\sigma_j\in\VrInit$. Prove: if $\sigma_j$ has a predecessor, say $\sigma'_j$, then there will be an edge $\sigma'_j\atleast{1}\sigma_j$ in $\GB{r}$. From the p-closedness of $V'$ it must be that $\sigma'_j\in V'$, and from the validity of $T$ it must be that $T(\sigma'_j)\leq T(\sigma_j)-1<T(\sigma_j)=0$, contradicting the fact that $T\geq0$. Thus, $\sigma_j$ has no predecessor, and thus $\sigma_j\in\VrInit$.
The last two claims follows directly (at $m=0$).

Let $m>0$, and assume that the claims are true up to $m-1$. 
To show that the prefix of~$r'$ up to time $m$ is a prefix of a legal run in $\Rrep$, we must make sure that no message is delivered before its lower bound, or  remains in transit after its upper bound has expired. 
Let $\mu$ be a message that is sent in $r'$ by the $i$-node~$\sigma_i$ at time~$t_\mu<m$ to agent~$j$, and that is still in transit just before time $m$.
By the induction, as all the nodes before time $m$ are either from $V'$ or $\VrInit$, and as nodes from $\VrInit$ don't send messages (as they are also the initial nodes in $r'$), then $\sigma_i\in V'$.
By the induction, clearly $t_\mu=T(\sigma_i)$.
Assume that $\mu$ is received in $r$ by a node $\sigma_j$, then it will be delivered, according to the definition of $r'$, at $m'=T(\sigma_j)$ (only if $m'>t_\mu$).
As $T$ is a valid timing function, $\lb{ij}\leq m'-t_\mu\leq\ub{ij}$, thus it will be delivered in bounds (and $m'>t_\mu$), and in particular if $T(\sigma_j)=m$ then it will be delivered at $m$.
This proves the first claim.

Assume that $m=T(\sigma_j)$ for some $j$-node $\sigma_j\in \big(V'\setminus \VrInit\big)$, and let~$\sigma'_j$ be~$\sigma_j$'s predecessor.
Note that $\sigma'_j\in V'$.
Denote $T(\sigma'_j)=m'$.
From the validity of $T$ it follows that $m'<m$, and that there is no other $j$-node $\sigma''_j\in V'$ for which  $m'<T(\sigma''_j)<m$.
According to the definition of $r'$, no messages will be delivered to~$j$ between times $m'+1$ and $m-1$, so its local state will remain unchanged, i.e.,  $\nodeL{j,r'_j(m-1)}=\nodeL{j,r'_j(m')}=\sigma'_j$.
We must prove that the messages that are delivered in $r'$ at $m$ to $j$ are exactly the messages that are delivered to $\sigma_j$ in~$r$. This will imply that  $\nodeL{j,r'_j(m)}=\sigma_j$.
Observing all the nodes that send the messages to $\sigma_j$, it follows from the validity of the timing function that their timings are before $m$, and so by the induction they appear in $r'$, and as we have seen above these messages will be delivered at $m$.
clearly no other messages are delivered in $r'$ to $j$ at $m$, according to the definition of $r'$.

If $m<T(\sigma_j)$ for some $j$-node $\sigma_j\in \big(V'\setminus \VrInit\big)$, then we have by the induction that $\nodeL{j,r'_j(m-1)}\neq\sigma_j$. If $j$ doesn't receive messages at time $m$, then $\nodeL{j,r'_j(m)}=\nodeL{j,r'_j(m-1)}\neq\sigma_j$.
Otherwise, it must be that $m=T(\sigma'_j)$ for some $j$-node $\sigma'_j\in V'$, and then $\nodeL{j,r'_j(m)}=\sigma'_j\neq\sigma_j$.

By the induction, all the nodes before time~$m$ are from $V'$ or $\VrInit$.
As any message that is delivered at $m$ in $r'$ is delivered there to a node from $V'$, it follows that this property still holds at time $m$.

Thus, we proved the induction, which establishes that $r'\in\Rrep$,
 and that $r'$ contains exactly the nodes from $V'\cup\ \VrInit$, and with the timing according to $T$, as claimed.
\end{proof}

The last thing we need before we can prove \Cref{thm:zz} is the following technical claim:
\begin{lemma}  
\Llabel{lem:thetasaresame}
Let $r\in\Rrep$, let $V$ be the set of nodes of \GB{r}, and let $V'\subseteq V$ be a p-closed subset of $V$.
Let~\tet\ be a node of~$r$, and let $\sigma=\basic(\tet,r)$.
Moreover, let $T:V'\to\Nat$ be a valid timing function, and let $r[T]$ be a run satisfying the conditions guaranteed by \Cref{lem:runbytime}.  
Then either both \tet\ and $\sigma$ appear in $r[T]$ and $\sigma=\basic(\tet,r[T])$, or 
neither \tet\ nor $\sigma$ appears in $r[T]$.
\end{lemma}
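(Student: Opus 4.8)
The plan is to induct on the length of the path $p$ in the representation $\tet=\nodeF{\sigma_0,p}$, mirroring the inductive definition of $\basic(\cdot,r)$ in \Cref{def:gnode}. Two features of $r[T]$ will be used throughout, both obtained from \Cref{lem:runbytime} and its proof: (i) the basic nodes appearing in $r[T]$ are exactly $V'\cup\VrInit$; and (ii) every $i$-node $\tau\in V'\setminus\VrInit$ appears in $r[T]$ at time $T(\tau)$ and receives there exactly the messages it receives at $\tau$ in $r$ --- hence, under the \ffip, it re-sends to each neighbor the same message as in $r$. I will also use the edge structure of \Cref{def:gb}(b): a message sent in $r$ from an $i$-node $\tau$ to a $j$-node $\tau'$ creates both edges $(\tau,\tau')$ (weight $\lb{ij}$) and $(\tau',\tau)$ (weight $-\ub{ij}$), so by p-closedness of $V'$ the sender and receiver of any message lie in $V'$ together (either both do, or neither does), while validity of $T$ gives $\lb{ij}\le T(\tau')-T(\tau)\le\ub{ij}$ whenever $\tau,\tau'\in V'$.

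The base case is $p=[i]$: then \Cref{def:gnode}(a) gives $\sigma=\sigma_0$, and since a singleton is trivially a path in $\Net$, $\tet$ appears in $r[T]$ iff $\sigma_0=\sigma$ does, with $\basic(\tet,r[T])=\sigma_0=\sigma$ in that case. For the inductive step write $p=\Comp{p'}{j}$ with $p'$ ending at a process $i$, put $\tet'=\nodeF{\sigma_0,p'}$ and $\sigma'=\basic(\tet',r)$; by \Cref{def:gnode}(b), $\sigma$ is the $j$-node receiving, in $r$, the message that $\sigma'$ sends to $j$. Since $\tet$ is a node of $r$ and $p$ is non-singleton, this last message exists, so $\sigma'$ sends a message and therefore $\sigma'\notin\VrInit$; and $\timeof_r(\sigma)\ge\timeof_r(\sigma')+1$, so $\sigma\notin\VrInit$ too. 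I then apply the induction hypothesis to $\tet'$, which puts us in one of two cases.

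If neither $\tet'$ nor $\sigma'$ appears in $r[T]$: then $\sigma_0$ does not appear in $r[T]$ (it appears iff $\tet'$ does), so $\tet$ does not appear; and since $\sigma'\notin V'\cup\VrInit$ we have in particular $\sigma'\notin V'$, so the receiver $\sigma$ of $\sigma'$'s message is also not in $V'$, hence $\sigma\notin V'\cup\VrInit$ and $\sigma$ does not appear in $r[T]$ --- the ``neither'' alternative. If instead both $\tet'$ and $\sigma'$ appear in $r[T]$ with $\basic(\tet',r[T])=\sigma'$: then $\sigma_0$, and with it $\tet$, appears in $r[T]$; and $\sigma'\in V'\cup\VrInit$ with $\sigma'\notin\VrInit$ gives $\sigma'\in V'$, so the receiver $\sigma\in V'$, and since $\sigma\notin\VrInit$ it appears in $r[T]$ at time $T(\sigma)$. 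It remains to check that $\basic(\tet,r[T])=\sigma$. By (ii), $\sigma'$ receives the same messages in $r[T]$ as at $\sigma'$ in $r$, so under the \ffip\ it sends the same message $\mu$ to $j$; by the delivery schedule of $r[T]$ from the proof of \Cref{lem:runbytime}, $\mu$ --- sent at the $i$-node $\sigma'\in V'$, received in $r$ at the $j$-node $\sigma\in V'$ --- is delivered in $r[T]$ at time $T(\sigma)$, which is a legal transmission time because $\lb{ij}\le T(\sigma)-T(\sigma')\le\ub{ij}$ by validity of $T$; and $j$'s node in $r[T]$ at that time is $\sigma$. Hence $\basic(\tet,r[T])=\sigma$, giving the ``both'' alternative and closing the induction.

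The edge-and-membership bookkeeping is routine. The one step I expect to be the real obstacle is the very last one: re-certifying that the final message of the chain is sent and received exactly as in $r$, which cannot be read off the statement of \Cref{lem:runbytime} alone --- it needs the explicit message-delivery schedule used in that lemma's proof together with the fact that, under an \ffip, a node's outgoing messages are a function of the messages it received.
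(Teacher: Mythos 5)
Your proposal is correct and follows essentially the same route as the paper's proof: both rest on \Cref{lem:runbytime}'s guarantees that the basic nodes of $r[T]$ are exactly $V'\cup\VrInit$, appear at the prescribed times, and receive exactly the messages they receive in $r$, together with the determinism of the \ffip. The only difference is presentational --- you make explicit, as an induction on the length of $p$ with the p-closedness bookkeeping spelled out, what the paper argues directly about the whole message chain.
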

\begin{proof}
Assume that $\tet=\nodeF{\sigma',p'}$.
As $\basic(\tet,r)=\sigma$, the node~$\sigma$ (by its definition) receives a direct message chain that goes from~$\sigma'$ through $p'$.
Thus, if $\sigma$ appears in $r[T]$ (or in any other run from $\Rrep$), then it must receive this message chain from $\sigma'$, and so $\sigma'$ must also appear in $r[T]$, and it must send that message chain through $p'$. Thus, $\tet$ also appears in $r[T]$, and $\sigma=\basic(\tet,r[T])$.

Now, assume that $\tet$ appears in $r[T]$, and $\basic(\tet,r[T])=\sigma''$. We want to prove that $\sigma''=\sigma$.
By \Cref{lem:runbytime}, as $\sigma''$ appears in $r[T]$, either $\sigma''\in V'$ or $\sigma''\in\VrInit$.
If $\sigma''\in\VrInit$, then $\tet$ receives no message in $r[T]$, which is possible only if $p'$ is a singleton, and then $\sigma=\sigma'=\sigma''$.
If $\sigma''\in V'$, then it appears in $r$, and it receives the same message chain that goes from~$\sigma'$ through $p'$, just as $\sigma$ does (in $r$).
It is possible only if $\sigma=\sigma''$.
\end{proof}

{\bf We are now ready to prove \Cref{thm:zz}:}
\begin{proof}[Proof of \Cref{thm:zz}]
Let $x\in\Int$ and assume
that \Rrep\ supports $\te{1}\atleast{x}\te{2}$. 
Let ${r\in\Rrep}$ be some run such that~\te{1} and \te{2} are nodes of $r$, and where both $\timeof_r(\te{1})>0$ and $\timeof_r(\te{2})>0$.
Assume that ${\basic(\te{1},r)=\sigma_1}$ and ${\basic(\te{2},r)=\sigma_2}$.
We denote $r[\Ts{\te{2}}]$ by~$r'$.
By \Cref{lem:runbytime}, $\sigma_2$ appears in $r'$.
Thus, by \Cref{lem:thetasaresame}, the node~\te{2} also appears in $r'$, and $\basic(\te{2},r')=\sigma_2$.
As \te{2} appears in $r'$ and as \Rrep\ supports $\te{1}\atleast{x}\te{2}$, and according to the definition of \textit{supports}, we have that \te{1} also appears in $r'$ and $\timeof_{r'}(\te{1})+x\leq\timeof_{r'}(\te{2})$.
Once again, by \Cref{lem:thetasaresame}, $\sigma_1$ appears in $r'$, and $\basic(\te{1},r')=\sigma_1$.
By the definition of $r'$, for every node $\sigma'$ of $r'$ that is not an initial node in $r$ (i.e., $\timeof_r{\sigma'}>0$) there is a path $P'$ (the longest path) in \GB{r} from $\sigma'$ to $\sigma_2$ with $\weight(P')=\timeof_{r'}(\sigma_2)-\timeof_{r'}(\sigma)$.
Thus, as $\timeof_r(\sigma_1)=\timeof_r(\te{1})>0$, there is a path $P$ (the longest path) in \GB{r} from $\sigma_1$ to $\sigma_2$ with $\weight(P)=\timeof_{r'}(\sigma_2)-\timeof_{r'}(\sigma_1)=\timeof_{r'}(\te{2})-\timeof_{r'}(\te{1})$.
As $\timeof_{r'}(\te{2})-\timeof_{r'}(\te{1})\geq x$, we have that $\weight(P)\geq x$.
Now, by \Cref{lem:zzisgb}, there is a zigzag pattern $Z$ in $r$ from \te{1} to \te{2} with $\weight(Z)=\weight(P)\geq x$.
\end{proof}



\section{Proof of \cref{thm:vzz}} 
%
Let $r\in\Rrep(\Prot,\gamma)$ for an \ffip\ $\Prot$, and let $\sigma$ be a basic node of $r$.
The first (and easy) direction in \Cref{thm:vzz} claims that if there is in $r$ a $\sigma$-visible zigzag $Z$ relating two \sigAware\ nodes, say \te{1} and \te{2}, with $\weight(Z)=x$, then 
\mbox{$(\Rrep,r)\models\Ksub{\sigma}(\te{1}\atleast{x}\te{2})$}.

Let $r,r'\in\Rrep$ such that $r'\sim_{\sigma}r$, and assume that $Z$ is a $\sigma$-visible zigzag in $r$ from \te{1} to \te{2} with $\weight(Z)=x$. 
We will soon show that~$Z$ is also a zigzag pattern in~$r'$. Since $\weight(Z)$ depends only on the pattern and not on the run~$r$, we have by \Cref{thm:zzsimple}  that $(\Rrep,r')\sat\te{1}\atleast{x}\te{2}$, and by definition of knowledge we have that \mbox{$(\Rrep,r)\models\Ksub{\sigma}(\te{1}\atleast{x}\te{2})$}, as desired. 

To complete the argument, we now show that the pattern~$Z$ appears in the run~$r'$. Roughly speaking, this amounts to showing that, since~$Z$ is $\sigma$-visible in~$r$ (i) each one of the two-legged forks in~$Z$ is guaranteed to appear in~$r'$, and (ii) these two-legged forks form a zigzag pattern in~$r'$, i.e., the head of each fork appears no later than the tail of its successor.  
%
%
Recall that $Z=(F_1,F_2,\ldots,F_c)$ is a $\sigma$-visible zigzag from \te{1} to \te{2} in $r$. By definition of visible zigzags, it follows that 
 $\base(F_c)=\nodeF{\sigma',p'}$, for some node $\sigma'$ satisfying $\sigma'\lam_r\sigma$.
Moreover, assume that $\tail(F_c)=\base(F_c)\oCompose p''$ and $\head(F_c)=\base(F_c)\oCompose p'''$.
As $\sigma'\lam_r\sigma$ and $\Rrep=\Rrep(\Prot,\gamma)$ for an \ffip~$\Prot$, the three nodes $\base(F_c)$, $\tail(F_c)$ and $\head(F_c)$ appear in $r'$, and thus~$F_c$ is also a \tlf\ in~$r'$.
Let $m<c$. Since~$Z$ is a visible zigzag we have that $\head(F_m)\lam_r\sigma$, it follows by \Cref{lem:sim} and the definition of two-legged forks (and by the fact that the protocol is an \ffip) that $\tail(F_m)$, $\base(F_m)$ and $\head(F_m)$ also appear in~$r'$, and so $F_m$ is also a \tlf\ in $r'$.
Moreover, $\basic(\head(F_m),r')=\basic(\head(F_m),r)$, and as $\tail(F_{m+1})$ appears in $r$ no earlier than $\head(F_m)$ does, the same occurs in~$r'$.
Thus, $Z$ is also a zigzag in~$r'$, as claimed 
%
$\hfill\square$ \\


We now turn to proving the other direction of \Cref{thm:vzz}, which claims that the existence of a $\sigma$-visible zigzag is also necessary for~$\sigma$'s  knowledge of the timed precedence. The argument will extend and generalize the proof of \Cref{thm:zz}. In that proof, a run $r'=r[Ts_r^\sigma]$ satisfying the constraints imposed by $\GB{r}$ was constructed, in which roughly speaking every node was moved as far in the future possible with respect to~$\sigma$. A path in the constraints graph~$\GB{r}$ then determined the timing of every node in~$r'$, and such a path implied the existence of a corresponding zigzag pattern. The basic nodes of~$r'$ all appeared in~$r$ as well. 
For the current theorem, we need a more subtle definition of an alternative run in which a visible zigzag appears. In this run, only the basic nodes that are in the past of $\sigma$ must be kept the same as in $r$, while other basic nodes are not necessarily the same. Moreover, its construction will make use of $\GE{r,\sigma}$, the extended bounds graph of $\sigma$ in~$r$. 

In \Cref{sec:highlights} we defined $\GE{r,\sigma}$, the extended bounds graph of $\sigma$, whose purpose is to capture all of the timing information that is available to~$\sigma$. 
We define it here once again, more formally. 
We begin by defining a subgraph of \GB{r} that $\sigma$ can construct from its local state (at least under the assumption of FIP protcol):
\begin{definition}
Let~$\sigma$ be a node of~$r$.
We define the \defemph{local bounds graph} of $\sigma$ in $r$ to be the weighted graph $\GB{r,\sigma}$ that is induced from \GB{r} by the nodes $V'=\{\sigma' : \sigma'\lam_r\sigma\}$.
\end{definition}
The following is a simple definition that will be useful next:
\begin{definition}
Let $\sigma$ and $\sigma'$ be nodes of~$r$, and let~$\sigma'$ be an $i$-node.
We say that~$\sigma'$ is a \defemph{boundary node} of $\sigma$, if (i) $\sigma'\lam_r\sigma$ and (ii) $\sigma''\lam_r\sigma'$ holds for each $i$-node $\sigma''\lam_r\sigma$.
\end{definition}
The boundary nodes are simply the last nodes, one for each agent, that appear in the past of $\sigma$.
 
Can we repeat \Cref{lem:runbytime} while using $\GB{r,\sigma}$ as the set of nodes? I.e., can we define a valid timing function over the nodes of \GB{r,\sigma} and get a legal run?
Clearly, such a timing function will ensure the order of nodes from the past of $\sigma$, and guarantee that messages sent among such nodes are delivered within the stated bounds. However, there might still be messages that are sent by nodes from the past of $\sigma$ that are not received in~$r$ inside the past of $\sigma$, and \GB{r,\sigma} doesn't treat such messages.
The result is that a 
valid timing function for \GB{r,\sigma} might cause such message to be received before a boundary node~$\sigma'$ of $\sigma$. This is, of course, inconsistent with
node~$\sigma$'s view in~$r$, since~$\sigma'\lamr\sigma$ and thus~$\sigma$ knows that the message was not delivered before~$\sigma'$. 
In the proof of \Cref{thm:zz} we avoided this 
problem by considering only p-closed sets of nodes of \GB{r}. 
However, the set of nodes of \GB{r,\sigma} is typically not  p-closed, as even $\sigma$ in itself probably sends messages to other nodes, which are clearly not in its past. Recall that \GB{r} contains an edge from a node at which a message is received to one at which the message was sent (with bound $-\ub{ji}$). Consequently, there are paths in \GB{r} from nodes that are not in \GB{r,\sigma} to nodes inside \GB{r,\sigma}.

To overcome this issue, we add special nodes to $\GB{r,\sigma}$, and extend it to a new graph, denoted $\GE{r,\sigma}$.
As described in \Cref{sec:highlights}, we add to this graph a new node for each agent (we call it an \textit{auxiliary node}), which stands for the successor of this agent's boundary node.
By adding an edge with weight $1$ from a boundary node to the auxiliary node of the same agent, we make sure that the auxiliary node must come after the boundary node.
In addition, we add edges that make sure that messages that go outside the past of~$\sigma$ will not be delivered earlier than at the relevant auxiliary node.
\begin{definition} [Extended local  bounds graph]
Let $r\in\Rrep$, let $\sigma$ be a node of~$r$ and let $\GB{r,\sigma}=(V,E,W)$ be the local  bounds graph of $\sigma$ in $r$.
We define the \defemph{extended local  bounds graph} of $\sigma$ in $r$ to be the weighted graph $\GE{r,\sigma}=(V\cup V', E\cup E'\cup E''\cup E''',W\cup W')$ where:
\begin{itemize}
\item $V'=\{\psi_i : i\in\Proc\}$; (each $\psi_i$ will stand for the auxiliary node of process~$i$)
\item $E'=\{(\sigma',\psi_i): i\in\Proc \mbox{ and $\sigma'$ the boundary $i\text{-}node$ of $\sigma$ in $r$}\}$, and $W'(\sigma',\psi_i)=1$; 
\item $E''=\{(\psi_i,\sigma_j): i,j\in\Proc$ and $\sigma_j\in V$ is a $j$-node that sends a message to process $i$ that is not received in any $i\text{-}node$ from $V\}$, and $W'(\psi_i,\sigma_j)=-\ub{ji}$; and
\item $E'''=\{(\psi_i,\psi_j) : (j,i)\in \Chan\}$, and $W'(\psi_i,\psi_j)=-\ub{ji}$.
\end{itemize}
\end{definition}

We call the nodes~$V$ of \GB{r,\sigma}\ \defemph{original nodes}. 
The edges of $E'$ make sure that auxiliary nodes don't appear before the respective boundary nodes.
The edges of $E''$ make sure that messages that are sent from within the past of $\sigma$, and are not received in the past of $\sigma$, can be received at the nodes of $V'$ (or after them) without exceeding their upper bounds.
The edges of $E'''$ ensure that messages sent after a boundary node of \GB{r,\sigma}\ can be delivered at or after the auxiliary node of the receiving process. This enables the construction of a run consistent with the extended graph in which the past of~$\sigma$ coincides with what it is in~$r$. 

We state without a proof that a valid timing function over the nodes of $\GE{r,\sigma}$ can be extended to a valid run, where the original nodes will appear according to the timing function. A more restricted version of this property will be proven and used in our analysis below. 
And what about the timing of the auxiliary nodes? The answer is that any direct message chain that goes outside the past must be delivered after their timings.
Later on we prove these results for a specific run.\\

Recall that in the proof of \Cref{thm:zz} we have found a valid timing that was based on the longest paths in \GB{r}, and then constructed a run with this timing.
As times in that run were corresponding to paths in $\GB{r}$, we could prove the existence of respective zigzags.
However, in the case of \Cref{thm:zz}, when we were talking on two nodes, say \te{1} and \te{2}, we knew exactly what are the corresponding basic nodes, and then we could find the longest path between them in \GB{r}.
\GE{r,\sigma}, on the contrary, contains only the basic nodes from $\past(r,\sigma)$, so how can we have a path in \GE{r,\sigma} between two general \sigAware\ nodes? 
Moreover, are paths in \GER\ still produce timing constraints?

It appears that, just as in \GB{r}, paths in \GE{r,\sigma} between two basic nodes indeed define constraints on their timing (constraints that apply in any run $r'\in\Rrep$ such that $r'\sim_\sigma r$).
Moreover, a path in \GE{r,\sigma} whose one or two endpoints is an auxiliary node, put constraint for some set of pairs of \sigAware\ nodes. 
For this reason, we say that such path is a \textit{\cpath} between the corresponding pair of \sigAware\ nodes (even though these nodes don't appear explicitly in the graph). \\

We start by observing 4 different types of paths in \GER:
\begin{enumerate}
\item A path of the form $I=(\psi_{i_c},\psi_{i_{c-1}},\ldots,\psi_{i_2},\sigma_1)$ that contains a (possibly empty) sequence of auxiliary nodes, and ends at a basic node from $\past(r,\sigma)$ (in this case~$\sigma_1$);
\item A path whose both ends are basic nodes from $\past(r,\sigma)$;
\item A path of the form $S=(\sigma_2,\psi_{j_2},\psi_{j_3},\ldots,\psi_{j_d})$  that starts at a basic node from $\past(r,\sigma)$ (in this case~$\sigma_2$) and followed by a sequence of auxiliary nodes; and
\item A path that contains only auxiliary nodes.
\end{enumerate}

Note that each path in \GER\ corresponds to one of the above types, or to a concatenation of paths from these types.
As we shall see, each path type defines a constraint between two nodes of a specific form, and also ensures the existence of a respective zigzag pattern between these two nodes (just as \GB{r} did).
For each type, we first describe the intuition for the relevant constraints and then prove the existence of the respective zigzag patterns.
These zigzag patterns are actually $\sigma$-visible zigzags.

we start with a type~1 path of the form: $I=(\psi_{i_c},\psi_{i_{c-1}},\ldots,\psi_{i_2},\sigma_1)$.
According to the definition of \GER, there is an edge between $\psi_{i_k}$ and $\psi_{i_{k-1}}$ only if $(i_{k-1},i_k)\in\Chan$, and the edge has a weight of $-\upb{i_{k-1}i_k}$.
Thus, it must be that $(i_{k-1},i_k)\in\Chan$ for each $3\leq k\leq c$.
Note that $\weight(\psi_{i_c},\psi_{i_{c-1}},\ldots,\psi_{i_2})=-\ubSym([i_2,i_3,\ldots,i_c])$.
Next, there is an edge between the auxiliary node of process $i_2$ ($\psi_{i_2}$) to a basic node ($\sigma_1$) only if a message is sent at the basic node to process $i_2$, and that message isn't received in the past of $\sigma$. 
Assuming that $\sigma_1$ is an $i_1$-node, the weight of that edge will be $-\upb{i_1i_2}$, and so we get that $\weight(I)=-\ubSym([i_1,i_2,\ldots,i_c])$.
Denote $q=[i_1,i_2,\ldots,i_c]$, so that $\weight(I)=-\ubSym(q)$.
The general node $\te{1}\eqdef\nodeF{\sigma_1,q}$ describes the node that receives the message chain that is sent at $\sigma_1$ and goes along~$q$.
According to the upper bounds we have that ${\timeof_r(\te{1})\leq\timeof_r(\sigma_1)+\ubSym(q)}$, thus ${\timeof_r(\sigma_1)\geq\timeof_r(\te{1})+(-\ubSym(q))}$, and so we get that $(\Rrep,r)\models\te{1}\atleast{\weight(I)}\sigma_1$.
The result is that $I$ defines a constraint between \te{1} (which is constructed according to $I$) and $\sigma_1$.
More formally we define:
\begin{definition}
Let $I=(\psi_{i_c},\psi_{i_{c-1}},\ldots,\psi_{i_2},\sigma_1)$ be a path in \GER, where $\sigma_1$ is an $i_1$-node.
Then we say that $I$ is a \cpath\ in \GER\ between \nodeF{\sigma_1,[i_1,i_2,\ldots,i_c]} and $\sigma_1$.
\end{definition}
Such path defines a \svz:
\begin{lemma}
\Llabel{lem:type1}
Let $I=(\psi_{i_c},\psi_{i_{c-1}},\ldots,\psi_{i_2},\sigma_1)$ be a path in \GER, where $\sigma_1$ is an $i_1$-node.
Then there is a \svz~$Z$ between \nodeF{\sigma_1,[i_1,i_2,\ldots,i_c]} and $\sigma_1$, where $\weight(Z)=\weight(I)$, and where $\head(Z)\lam_r\sigma$.
\end{lemma}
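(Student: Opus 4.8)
The plan is to realize $Z$ as a single degenerate two-legged fork, so that the whole lemma reduces to one propagation argument plus unwinding of definitions. Write $q\eqdef[i_1,i_2,\ldots,i_c]$ and $\te{1}\eqdef\nodeF{\sigma_1,q}$. First I would extract the network information carried by $I$: by the definition of $\GER$, the final edge $(\psi_{i_2},\sigma_1)$ belongs to $E''$, so $\sigma_1$ is an $i_1$-node that genuinely sends a message to process $i_2$ in $r$; and each earlier edge $(\psi_{i_k},\psi_{i_{k-1}})$ with $3\le k\le c$ belongs to $E'''$, so $(i_{k-1},i_k)\in\Chan$. Hence $q$ is a path in $\Net$, and adding up the edge weights of $I$ gives $\weight(I)=-\ubSym(q)$. (When $c=1$ the auxiliary part of $I$ is empty, $q=[i_1]$ and $\te{1}=\sigma_1$; the argument below specializes correctly.)

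The one genuinely non-definitional step is to show that $\te{1}$ appears in $r$, and this is where the \ffip\ hypothesis is used. I would argue by induction along $q$: $\sigma_1$ sends a message to $i_2$ in $r$, which the environment must eventually deliver at some $i_2$-node $\tau_2$; since $\Prot$ is an \ffip, $\tau_2$ immediately sends a message to every neighbour of $i_2$, and in particular to $i_3$, which is a neighbour because $(i_2,i_3)\in\Chan$; iterating, a message chain leaving $\sigma_1$ and running along $q$ exists in $r$, so by \Cref{def:general} the node $\te{1}=\nodeF{\sigma_1,q}$ appears in $r$.

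It then remains to assemble the zigzag and check its properties against the definitions. I would set $F\eqdef\pair{\sigma_1,\sigma_1,\te{1}}$; since $\sigma_1=\ocomp{\sigma_1}{[i_1]}$ and $\te{1}=\ocomp{\sigma_1}{q}$ and both endpoints appear in $r$, this is a two-legged fork in $r$ (\Cref{def:two-leg}) with $\base(F)=\head(F)=\sigma_1$, $\tail(F)=\te{1}$, and $\weight(F)=\lb{}([i_1])-\ub{}(q)=-\ubSym(q)=\weight(I)$. Taking $Z\eqdef(F)$, this is a zigzag pattern from $\te{1}$ to $\sigma_1$ (the consecutive-fork condition of \Cref{def:zig} is vacuous for a length-one sequence), with $\joins(Z)=0$, so $\weight(Z)=\weight(F)=\weight(I)$. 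For $\sigma$-visibility (\Cref{def:visible-zigzag}): condition~(i) is vacuous, and condition~(ii) holds because $\base(F)=\sigma_1=\node{\sigma_1,[i_1]}$ with $\sigma_1\lam_r\sigma$ --- the latter since $\sigma_1$ is a vertex of the local bounds graph $\GB{r,\sigma}$, whose vertices are exactly the nodes $\sigma'$ with $\sigma'\lam_r\sigma$. The same observation gives $\head(Z)=\head(F)=\sigma_1\lam_r\sigma$. The only obstacle worth naming is the \ffip\ propagation of the message chain along $q$ in the second paragraph; every other step is bookkeeping against the definitions.
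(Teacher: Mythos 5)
Your proof is correct and follows essentially the same route as the paper's: a single degenerate two-legged fork $F$ with $\base(F)=\head(F)=\nodeF{\sigma_1,i_1}$ and $\tail(F)=\nodeF{\sigma_1,[i_1,\ldots,i_c]}$, taking $Z=(F)$. The paper's proof is just terser; you additionally spell out the bookkeeping (that $q$ is a path in $\Net$, that $\weight(I)=-\ubSym(q)$, and the \ffip\ propagation showing $\nodeF{\sigma_1,q}$ appears in $r$), all of which is sound.
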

\begin{proof}
We construct a \tlf\ $F$ so that $\base(F)=\head(F)=\nodeF{\sigma_1,i_1}$ and $\tail(F)=\nodeF{\sigma_1,[i_1,i_2,\ldots,i_c]}$.
We define $Z=(F)$. As $\base(F)=\head(F)=\sigma_1\lam_r\sigma$, $Z$ is a \svz.
Moreover, $\weight(Z)=\weight(F)=-\ubSym([i_1,i_2,\ldots,i_c])=\weight(I)$.
\end{proof}

Next, we have a path of type 2, i.e.~a path whose both endpoints are basic nodes.
Note that a path that contains only basic nodes (and no auxiliary nodes), clearly defines a constraint between the basic nodes at its ends, as such path is also found in $\GB{r}$.
But, what happen when in the middle of a path of basic nodes, we add a walk through auxiliary nodes?
For example, consider the following path: $Q=[\sigma_i,\psi_i,\psi_k,\sigma_j]$.
Note that we can go from $\sigma_i$ to $\psi_i$ only if they belong to the same process (process~$i$) and $\sigma_i$ is a boundary node w.r.t $\sigma$. The weight of the respective edge is $1$.
From $\psi_i$ to $\psi_k$ we walk on an edge of weight $-\upb{ki}$, and from $\psi_k$ to $\sigma_j$ there is an edge with weight $-\upb{jk}$ only if a message that is sent at $\sigma_j$ to process $k$, isn't received in the past of $\sigma$.
This message is received at the \sigAware\ node \nodeF{\sigma_j,[j,k]}, and this node must appear no later than $\upb{jk}$ after $\sigma_j$. 
\nodeF{\sigma_j,[j,k]} will send a message to agent $i$, and the $i$-node that will receive that message, \nodeF{\sigma_j,[j,k,i]}, must appear no later than $\upb{ki}$ afterwards. 
Now, as \nodeF{\sigma_j,[j,k,i]} cannot be in the past of $\sigma$, it must appear after $\sigma_i$, i.e. at least 1 time unit after it.
We denote $\te{i}=\nodeF{\sigma_j,[j,k,i]}$.
So, we have that (i) $\timeof_r(\sigma_i)+1\leq\timeof_r(\te{i})$ and (ii) $\timeof_r(\te{i})\leq\timeof_r(\sigma_j)+\upb{jk}+\upb{ki}$.
Thus, $\timeof_r(\sigma_i)+1\leq\timeof_r(\sigma_j)+\upb{jk}+\upb{ki}$, and $\timeof_r(\sigma_i)+(1-\upb{jk}-\upb{ki})\leq\timeof_r(\sigma_j)$.
Note that $\weight(Q)=1-\upb{jk}-\upb{ki}$, thus $\timeof_r(\sigma_i)+\weight(Q)\leq\timeof_r(\sigma_j)$ and indeed the path~$Q$ defines a constraint between $\sigma_i$ and $\sigma_j$ (its two endpoints).
Generalizing the last result, we can see how a path $Q'$ whose both ends are basic nodes, say $\sigma_1$ and $\sigma_2$, defines a constraint between $\sigma_1$ and $\sigma_2$, i.e.~$\timeof_r(\sigma_1)+\weight(Q')\leq\timeof_r(\sigma_2)$.
More formally we define:
\begin{definition}
Let $Q$ be a path in $\GE{r,\sigma}$.
If both the first and last nodes of $Q$ are basic nodes, say $\sigma_1$ and $\sigma_2$ respectively, then we say that $Q$ is a \cpath\ in \GER\ between $\sigma_1$ and $\sigma_2$.
\end{definition}
Such path defines a \svz:
\begin{lemma}
\Llabel{lem:type2}
Let $Q$ be a path in $\GE{r,\sigma}$ whose first node is $\sigma_1$ and last node is $\sigma_2$, where both $\sigma_1$ and $\sigma_2$ are basic nodes from $\past(r,\sigma)$.
Then there is a \svz~$Z$ between $\sigma_1$ and $\sigma_2$, where $\weight(Z)=\weight(Q)$, and where $\head(Z)\lam_r\sigma$.
\end{lemma}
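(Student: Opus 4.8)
The plan is to prove this by induction on the number of edges of $Q$, extending the scheme of the proof of \Cref{lem:zzisgb} with one new case that handles the portions of $Q$ passing through auxiliary nodes. The first thing to record is the structural fact governing those portions: the only edge from an original node into an auxiliary node is an $E'$ edge of weight $1$ leaving a boundary node, and the only edges out of an auxiliary node are the $E'''$ edges (to another auxiliary node) and the $E''$ edges (back to an original node). Hence whenever $Q$ leaves the original part it does so from a boundary node and later returns, and a maximal such sub-walk — call it a ``dip'' — has the form
\[
D:\quad \sigma_a \xrightarrow{E'}\psi_{i_1}\xrightarrow{E'''}\psi_{i_2}\xrightarrow{E'''}\cdots\xrightarrow{E'''}\psi_{i_d}\xrightarrow{E''}\sigma_b,
\]
where $\sigma_a$ is the boundary $i_1$-node of $\sigma$ and $\sigma_b$ is a $j$-node that sends to $i_d$ a message not received in $\past(r,\sigma)$. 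Reading off the channel requirements of the $E'''$ and $E''$ edges, $q\eqdef[j,i_d,i_{d-1},\ldots,i_1]$ is a walk in $\Net$ and $\weight(D)=1-\ubSym(q)$. Since $Q$ starts and ends at original (basic) nodes, $Q$ alternates between such dips and sub-paths lying entirely inside $\GB{r,\sigma}$.

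For the base case $Q$ has no edges, $\sigma_1=\sigma_2$, and the joined pair of trivial forks $\bigl(\langle\sigma_1,\sigma_1,\sigma_1\rangle,\langle\sigma_1,\sigma_1,\sigma_1\rangle\bigr)$ is a weight-$0$ $\sigma$-visible zigzag from $\sigma_1$ to $\sigma_1$ whose last fork's head is $\sigma_1\lam_r\sigma$. For the inductive step I examine the first edge of $Q$. If it goes to an original node, it is an edge of $\GB{r,\sigma}\subseteq\GB{r}$; I strip it, apply the induction hypothesis to the remaining path, and prepend a fork exactly as in one of the three cases in the proof of \Cref{lem:zzisgb}. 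Since both endpoints of the stripped edge lie in $\past(r,\sigma)$, the prepended fork's head lies in $\past(r,\sigma)$ and the last fork is left intact (its head unchanged), so $\sigma$-visibility, the condition $\head(Z)\lam_r\sigma$, and the weight count are preserved. If the first edge goes to an auxiliary node, I strip the entire dip $D$ above, apply the induction hypothesis to the strictly shorter path from $\sigma_b$ to $\sigma_2$ to get a $\sigma$-visible zigzag $Z''$ from $\sigma_b$ to $\sigma_2$ with $\head(Z'')\lam_r\sigma$, and prepend the gadget $F_1\eqdef\langle\sigma_1,\sigma_1,\sigma_1\rangle$, $F_2\eqdef\langle\sigma_b,\sigma_b,\sigma_b\oCompose q\rangle$, forming $Z\eqdef(F_1,F_2)$ followed by $Z''$. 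Here $\head(F_2)=\sigma_b$ coincides with the tail of the first fork of $Z''$, so $F_2$ is joined to $Z''$, while $F_1$ is unjoined to $F_2$ and contributes $+1$; since $\weight(F_1)=0$ and $\weight(F_2)=-\ubSym(q)$, the standard bookkeeping gives $\weight(Z)=\bigl(1-\ubSym(q)\bigr)+\weight(Z'')=\weight(D)+\weight(Z'')=\weight(Q)$. Both $\head(F_1)=\sigma_1$ (a boundary node) and $\head(F_2)=\sigma_b$ are in $\past(r,\sigma)$, and the last fork of $Z$ is the last fork of $Z''$, so $Z$ is $\sigma$-visible; and $\head(Z)=\head(Z'')=\sigma_2\in\past(r,\sigma)$, giving $\head(Z)\lam_r\sigma$.

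The step needing the most care is confirming that the gadget $(F_1,F_2)$ is legitimate. I would check: (a) the message chain along $q$ genuinely exists in $r$ — $\sigma_b$ does send along $(j,i_d)$, and because $\Prot$ is an \ffip\ each successive node on the chain rebroadcasts along the next channel of $q$ — so $F_2$ is a bona fide two-legged fork with $\tail(F_2)=\nodeF{\sigma_b,q}$, an $i_1$-node; and (b) $\nodeF{\sigma_b,q}\notin\past(r,\sigma)$, because its first hop is received at a node $\tau_1\not\lam_r\sigma$ and $\tau_1\lam_r\nodeF{\sigma_b,q}$, so transitivity forces $\nodeF{\sigma_b,q}\not\lam_r\sigma$. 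From (b), since $\sigma_1$ is the \emph{last} $i_1$-node in $\past(r,\sigma)$ and $\nodeF{\sigma_b,q}$ is an $i_1$-node outside it, we get $\timeof_r(\sigma_1)<\timeof_r(\nodeF{\sigma_b,q})$, which is precisely why $\head(F_1)$ and $\tail(F_2)$ form a valid unjoined pair of consecutive forks and why the dip carries its extra $+1$. With (a) and (b) in hand, validity of the zigzag, the two $\sigma$-visibility conditions, and the weight identity reduce to the same computations already carried out in the proofs of \Cref{lem:zzisgb} and \Cref{lem:type1}.
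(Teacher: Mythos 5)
Your proof is correct and follows essentially the same route as the paper's: you decompose $Q$ into stretches of basic nodes (handled by the construction of \Cref{lem:zzisgb}) and excursions through auxiliary nodes, and for each such excursion you build exactly the fork the paper builds — base and head at the sending node $\sigma_b$, tail at the over-the-horizon node $\nodeF{\sigma_b,q}$, glued to the boundary node via an unjoined pair contributing the $+1$. The only difference is bookkeeping (a single edge-level induction with a macro-step for dips, plus the auxiliary trivial fork at the start of a dip, versus the paper's induction on alternating segments invoking \Cref{lem:zzisgb} as a black box), and your checks (a) and (b) correctly fill in the facts the paper uses for visibility and for the strict precedence $\timeof_r(\sigma_1)<\timeof_r(\nodeF{\sigma_b,q})$.
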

\begin{proof}
We can divide $Q$ to alternate segments of paths: A segment along original (basic) nodes (from \GB{r,\sigma}), followed by a segment along auxiliary nodes, and then another segment along original nodes and so on.
We define those segments such that they all begin and end at an original node (we can do so as $Q$ begins and ends at original nodes).
So, assume that $Q$ is the concatenation $Q=P_0\oCompose S_1\oCompose P_1\oCompose S_2\oCompose P_2\oCompose\ldots\oCompose P_k$ where (1) the last node of each segment is the first node of the following segment, (2) $P_i$, for $0\leq i\leq k$, contains only basic nodes, and $S_i$, for $1\leq i\leq k$, contains only auxiliary nodes, (3) $P_0$ starts at $\sigma_1$, and (4) $P_k$ ends at $\sigma_2$.
We prove that looking at a prefix of $Q$ that is constructed by the first $d=2m+1$ segments, we have a \svz\ $Z_m$ between $\sigma_1$ and the last node of $P_m$, with a weight equals to the weight of the prefix, and $\head(Z_m)\lam_r\sigma$.
We prove this for every $m$, and so there is also a \svz\ $Z$ between $\sigma_1$ and $\sigma_2$, with $\weight(Z)=\weight(Q)$ and $\head(Z)\lam_r\sigma$ as required.

For $m=0$ ($d=1$) this is followed by \Cref{lem:zzisgb}, as $P_0$ is also a path in \GB{r}.
Let $m>0$, and assume that the above claim is true up to $(m-1)$. 
I.e., we have a \svz~($Z_{m-1}$) from $\sigma_1$ to~$\sigma'$, where~$\sigma'$ is the last node of~$P_{m-1}$,
and such that $\weight(Z_{m-1})$ is equal to weight of the first $2m-1$ segments of $Q$.
Note that $\head(Z_{m-1})=\sigma'\lam_r\sigma$.

The next segment is $S_{m}$.
Assume that $S_{m}=(\sigma_i,\psi_{l_1}, \psi_{l_2},\ldots,\psi_{l_k},\sigma_j)$. 
The only way we can go from the original node, $\sigma_i$, to the following auxiliary node, $\psi_{l_1}$, is if $\sigma_i$ is a boundary node of $\sigma$, and $\psi_{l_1}$ is the auxiliary node of the same agent (and the weight of the relevant edge is 1).
Afterwards, we can go from $\psi_{l_n}$ to $\psi_{l_{n+1}}$ only if $(l_{n+1},l_n)\in\Chan$. Generally, $q=[l_k,l_{k-1},\ldots,l_1]$ must be a path in $\Net$.
From $\psi_{l_k}$ we can go to $\sigma_j$ only if a message was sent at $\sigma_j$ to process $l_k$, and this message was not received in the past of $\sigma$.
Assuming that $\sigma_j$ is a $j$-node, the result is that $S_m$ represents the \sigAware\ node $\te{i}=\nodeF{\sigma_j,\Comp{j}{q}}$ 
whose agent equals to the agent of the boundary node $\sigma_i$, and thus it must appear after it.
And so we can define a new \tlf\, $F_m$, where $\head(F_m)=\base(F_m)=\sigma_j$ and $\tail(F_m)=\te{i}$, and such that $\weight(F_m)=\weight(S_m)-1$.
As the head of the previous \svz\ ($Z_{m-1}$) is $\sigma_i$, which comes before the tail of $F_m$, we can concatenate them to a new \svz\ $Z'_m$, whose weight perfectly matches the weight of the path $Q$ up to (including) $S_m$, and $\head({Z'_m})=\sigma_j\lam_r\sigma$.

Next, observing $P_m$, note that it is also a path in $\GB{r}$, and by \Cref{lem:zzisgb} there exists a zigzag pattern, denoted by $Z'$, in $r$, from the first node of $P_m$ ($\sigma_j$) to its last node, with weight $\weight(Z')=\weight(P_m)$.
As all of the nodes of $P_m$ are from the past of $\sigma$, the zigzag is clearly $\sigma$-visible, with its head also in the past of~$\sigma$.
As its tail is equal to the head of $Z'_m$, we can concatenate them and get a new \svz\ $Z_m$ with $\weight(Z_m)$ equals to the prefix of $Q$ up to the end of $P_m$, as required.
\end{proof}

Now assume we have a type 3 path: $S=(\sigma_2,\psi_{j_2},\psi_{j_3},\ldots,\psi_{j_d})$ .

According to the definition of \GER, there is an edge between $\sigma_2$ (a basic node from $\past(r,\sigma)$) and $\psi_{j_2}$ only if $\sigma_2$ is a boundary node of process $j_2$, and then the edge between them has a weight of 1.
Just as in the case of type 1 paths, there is an edge between $\psi_{j_k}$ and $\psi_{j_{k+1}}$ only if $(j_{k+1},j_k)\in\Chan$, and the edge has a weight of $-\upb{j_{k+1}j_k}$.
Thus, it must be that $(j_{k+1},j_k)\in\Chan$ for each $1\leq k\leq d-1$.
Note that $\weight(\psi_{j_2},\psi_{j_3},\ldots,\psi_{j_d})=-\ubSym([j_d,j_{d-1},\ldots,j_2])$.

Denote $q=[j_d,j_{d-1},\ldots,j_2]$, and note that $\weight(S)=1-\ubSym(q)$.
Let $\tet'=\nodeF{\sigma',p'}$ be a \sigAware\ node, where $p'$ is a non singleton path in $\Net$ that ends at process $j_d$.
Since $p'$ is not a singleton, and $\tet'$ is a \sigAware\ node, it must appear only after the boundary node of process~$j_d$ from $\past(r,\sigma)$.
Now consider the node $\tet_2=\ocomp{\tet'}{q}$.
As $\te{2}$ is a node that receives a message chain from $\tet'$, we have that
$\timeof_r(\tet_2)\leq\timeof_r(\tet')+\ubSym(q)$.
\te{2} is an $j_2$-node, that must be outside $\past(r,\sigma)$, and so it must appear after the boundary node of process $j_2$, and we have that $\timeof_r(\sigma_2)+1\leq\timeof_r(\tet_2)$.
Combining the two inequalities, we get that $\timeof_r(\sigma_2)+1\leq\timeof_r(\tet')+\ubSym(q)$, thus $\timeof_r(\sigma_2)+(1-\ubSym(q)\leq\timeof_r(\tet')$.
As $\weight(S)=1-\ubSym(q)$ we get that $(\Rrep,r)\models\sigma_2\atleast{\weight(S)}\tet'$.
The result is that $S$ defines a constraint between $\sigma_2$ (it's first node) and a \sigAware\ node, $\tet'$, which is an arbitrary \sigAware\ node whose message chain ends at the process of the last auxiliary node in $S$.

More formally we define:
\begin{definition}
Let $S=(\sigma_2,\psi_{j_2},\psi_{j_3},\ldots,\psi_{j_d})$ be a path in \GER, where $\sigma_2$ is a $j_2$-node, and let $\tet'=\nodeF{\sigma',p'}$ be a \sigAware\ node.
If $p'$ is a non singleton path in $\Net$ that ends at process $j_d$,
then we say that $S$ is a \cpath\ in \GER\ between $\sigma_2$ and $\tet'$.
\end{definition}
Such path defines a \svz:
\begin{lemma}
\Llabel{lem:type3}
Let $S=(\sigma_2,\psi_{j_2},\psi_{j_3},\ldots,\psi_{j_d})$ be a \cpath\ between $\sigma_2$ and $\tet'$ in \GER.
Then there is a \svz~$Z$ between $\sigma_2$ and $\tet'$ where $\weight(Z)=\weight(S)$.
\end{lemma}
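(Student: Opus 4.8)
The plan is to follow the template of \Cref{lem:type1,lem:type2}: exhibit an explicit $\sigma$-visible zigzag built from two two-legged forks — a trivial fork anchored at $\sigma_2$, and a fork that ``unpacks'' the message chain that the auxiliary nodes of $S$ stand for. The precedence content is then automatic; the substance is (a) reading the right network path off $S$, (b) getting the weight to come out exactly $\weight(S)$, and (c) checking the two conditions of \Cref{def:visible-zigzag}.

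First I would read off the combinatorics of $S$. By the definition of $\GER$, the edge $(\sigma_2,\psi_{j_2})$ lies in $E'$, which certifies that $\sigma_2$ is the boundary $j_2$-node of $\sigma$ in $r$ and carries weight $1$, while each edge $(\psi_{j_k},\psi_{j_{k+1}})$ lies in $E'''$, which certifies $(j_{k+1},j_k)\in\Chan$ and carries weight $-\ub{j_{k+1}j_k}$. Hence $q\eqdef[j_d,j_{d-1},\ldots,j_2]$ is a path in $\Net$ and $\weight(S)=1-\ub{}(q)$. Writing $\tet'=\nodeF{\sigma',p'}$, set $\tet_2\eqdef\ocomp{\tet'}{q}$; since $p'$ ends at $j_d$ and $q$ starts at $j_d$, $\ocomp{p'}{q}$ is a path in $\Net$, so $\tet_2=\nodeF{\sigma',\ocomp{p'}{q}}$ is a well-defined $j_2$-node that appears in $r$ — its base $\sigma'$ appears because $\sigma'\lamr\sigma$, and under an \ffip\ the message chain along $\ocomp{p'}{q}$ is genuinely sent.

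Next I would take $Z\eqdef(F_1,F_2)$ with $F_1\eqdef\pair{\sigma_2,\sigma_2,\sigma_2}$ and $F_2\eqdef\pair{\tet',\tet',\tet_2}$ (base and head both $\tet'$, tail $\tet_2$). Both are legitimate two-legged forks, with $\weight(F_1)=0$ and $\weight(F_2)=-\ub{}(q)$ (the head-leg of $F_2$ is empty). I then claim $Z$ is a zigzag from $\sigma_2$ to $\tet'$ in the sense of \Cref{def:zig}: $\tail(F_1)=\sigma_2$, $\head(F_2)=\tet'$, and $\head(F_1)=\sigma_2$ and $\tail(F_2)=\tet_2$ are both $j_2$-nodes, so the only link condition to verify is $\timeof_r(\sigma_2)\le\timeof_r(\tet_2)$. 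This is exactly the over-the-horizon observation made in the discussion preceding the lemma: since $p'$ is non-singleton, the basic node of the \sigAware\ node $\tet'$ is not in $\past(r,\sigma)$, and therefore neither is $\basic(\tet_2,r)$, because $\basic(\tet',r)\lamr\basic(\tet_2,r)$ and $\past(r,\sigma)$ is closed under taking $\lamr$-predecessors; as $\sigma_2$ is the \emph{latest} $j_2$-node in $\past(r,\sigma)$, every $j_2$-node outside $\past(r,\sigma)$ — in particular $\tet_2$ — appears strictly after $\sigma_2$. Thus $\basic(\tet_2,r)\ne\sigma_2$, so $F_1$ is not joined to $F_2$, $\joins(Z)=1$, and $\weight(Z)=\weight(F_1)+\weight(F_2)+\joins(Z)=0-\ub{}(q)+1=\weight(S)$.

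Finally I would verify $\sigma$-visibility against \Cref{def:visible-zigzag}: condition (i) requires $\head(F_k)\lamr\sigma$ for $k<c$, i.e.\ $\head(F_1)=\sigma_2\lamr\sigma$, which holds because $\sigma_2$ is a boundary node of $\sigma$; condition (ii) requires $\base(F_c)$ to be of the form $\node{\sigma'',p''}$ with $\sigma''\lamr\sigma$, and indeed $\base(F_2)=\tet'=\nodeF{\sigma',p'}$ with $\sigma'\lamr\sigma$ since $\tet'$ is \sigAware. This produces the desired $\sigma$-visible zigzag $Z$ between $\sigma_2$ and $\tet'$ with $\weight(Z)=\weight(S)$. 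The step I expect to be the crux is the join accounting — arguing that $F_1$ and $F_2$ are genuinely \emph{not} joined, so that the $+1$ enters $\weight(Z)$ exactly once — which reduces entirely to $\basic(\ocomp{\tet'}{q},r)\notin\past(r,\sigma)$; this is the single place where the hypothesis that $p'$ is non-singleton is indispensable, and where I would take care to invoke the closure property of $\past(r,\sigma)$ precisely rather than informally.
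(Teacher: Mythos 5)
Your construction is exactly the paper's: the trivial fork $F_1$ anchored at $\sigma_2=\nodeF{\sigma_2,j_2}$, the fork $F_2$ with $\base(F_2)=\head(F_2)=\tet'$ and $\tail(F_2)=\ocomp{\tet'}{q}$ for $q=[j_d,\ldots,j_2]$, the weight accounting $0+1-\ubSym(q)=\weight(S)$ via the unjoined link on $j_2$'s timeline, and the same check of conditions (i) and (ii) of the visibility definition. The only difference is that you make explicit why $F_1$ and $F_2$ are not joined --- a step the paper's proof simply asserts --- and your justification (that $\basic(\ocomp{\tet'}{q},r)$ lies outside $\past(r,\sigma)$ because $p'$ is non-singleton) is precisely the premise already invoked in the paper's discussion preceding the lemma, so your argument is faithful to, and slightly more detailed than, the original.
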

\begin{proof} 
We define a new trivial \tlf\ $F_1$ where $\head(F)=\base(F)=\tail(F)=\nodeF{\sigma_2,j_2}$.
Now, assume that $\tet'=\nodeF{\sigma',p'}$, where the last process of $p'$ is $j_d$.
Define $q=[j_d,j_{d-1},\ldots,j_2]$.
Note that $\weight(S)=1-\ubSym(q)$.
We define a second \tlf\ $F_2$ where $\head(F_2)=\base(F_2)=\nodeF{\sigma',p'}$, and $\tail(F_2)=\nodeF{\sigma',\ocomp{p'}{q}}$.
Note that $\weight(F_2)=-\ubSym(q)$.

We can now define the zigzag pattern $Z=(F_1,F_2)$. 
As $F_1$ and $F_2$ are not joined, $\weight(Z)=\weight(F_1)+1+\weight(F_2)=1-\ubSym(q)=\weight(S)$.
As $\head(F_1)=\sigma_2\lam_r\sigma$ and $\base(F_2)=\nodeF{\sigma',p'}$ where $\sigma'\lam_r\sigma$, then $Z$ is a \svz, as required.
\end{proof}

Finally, assume we have a type 4 path.
Type 4 are paths that contain only auxiliary nodes.
Assume that $P=(\psi_{i_d},\psi_{i_{d-1}},\ldots,\psi_{i_1})$, and define $q=[i_1,i_2,\ldots,i_d]$.
Note that $\weight(P)=-\ubSym(q)$.
Let $\te{1}$ and \te{2} be two \sigAware\ nodes, such that $\te{1}=\ocomp{\te{2}}{q}$.
Clearly $\timeof_{r}(\te{1})\leq\timeof_{r}(\te{2})+\ubSym(q)$, and so $\timeof_{r}(\te{2})\geq\timeof_{r}(\te{1})+(-\ubSym(q))=\timeof_{r}(\te{1})+\weight(P)$, and we get that $P$ defines constraint between $\te{1}$ and $\te{2}$.

More formally we define:
\begin{definition}
Let $P=(\psi_{i_d},\psi_{i_{d-1}},\ldots,\psi_{i_1})$ be a path in \GER, and let $\te{1}$ and \te{2} be two \sigAware\ nodes.
Define $q=[i_1,i_2,\ldots,i_d]$.
If $\te{1}=\ocomp{\te{2}}{q}$, we say that $P$ is a \cpath\ from~\te{1} to~\te{2}.
\end{definition}
Such path defines a \svz:
\begin{lemma}
\Llabel{lem:type4}
Let $\te{1}$ and \te{2} be two \sigAware\ nodes, such that $\te{1}=\ocomp{\te{2}}{q}$, and let $P$ be a (type~4) \cpath\ between \te{1} and \te{2}.
Then there is a \svz~$Z$ between $\te{1}$ and $\te{2}$ where $\weight(Z)=\weight(P)$.
\end{lemma}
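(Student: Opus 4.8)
The plan is to imitate the short proofs of \Cref{lem:type1,lem:type3}: a type~4 \cpath\ $P$ records nothing more than a single string of upper bounds, so the witnessing $\sigma$-visible zigzag can be taken to be one degenerate two-legged fork, one of whose legs is empty and the other of which is the message chain that runs along~$q$.

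First I would unpack the hypotheses. Since $\te{2}$ is $\sigAware$, write $\te{2}=\nodeF{\sigma'',p''}$ with $\sigma''\lam_r\sigma$; then $\te{1}=\ocomp{\te{2}}{q}=\nodeF{\sigma'',\ocomp{p''}{q}}$ is $\sigAware$ as well, and both nodes appear in~$r$ because the protocol is an \ffip. As noted in the discussion preceding the lemma, the edges $(\psi_{i_{k+1}},\psi_{i_k})$ of~$P$ force $(i_k,i_{k+1})\in\Chan$ for $1\le k\le d-1$, so $q=[i_1,\ldots,i_d]$ is a genuine path in~$\Net$, $\weight(P)=-\ubSym(q)$, and $(\Rrep,r)\models\te{1}\atleast{\weight(P)}\te{2}$ --- this is the content of~$P$ extracted just before the lemma, and is what the zigzag must witness.

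I would then take the single two-legged fork $F$ with $\base(F)=\head(F)=\nodeF{\sigma'',p''}=\te{2}$, joined by the empty chain $[i_1]$, and $\tail(F)=\nodeF{\sigma'',\ocomp{p''}{q}}=\ocomp{\te{2}}{q}=\te{1}$. By \Cref{def:two-leg} this is a two-legged fork in~$r$ (here $\te{2}$ is an $i_1$-node, since $\ocomp{\te{2}}{q}$ is defined and $q$ begins at~$i_1$), and $\weight(F)=\lb{}([i_1])-\ubSym(q)=-\ubSym(q)=\weight(P)$. Setting $Z\eqdef(F)$, \Cref{def:zig} makes $Z$ a zigzag pattern from $\tail(F)=\te{1}$ to $\head(F)=\te{2}$ --- with $c=1$ there is no adjacency condition to check --- and $\weight(Z)=\weight(F)=\weight(P)$. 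Finally, $Z$ is $\sigma$-visible: condition~(i) of \Cref{def:visible-zigzag} is vacuous because $c=1$, and condition~(ii) holds since $\base(F_c)=\base(F)=\nodeF{\sigma'',p''}$ with $\sigma''\lam_r\sigma$. I do not expect a genuine obstacle here --- the analytic work was already done before the lemma, and the only point meriting a word of care is that the empty head-leg does not threaten $\sigma$-visibility, which it does not precisely because \Cref{def:visible-zigzag}(i) ranges only over $1\le k\le c-1$, while \Cref{def:visible-zigzag}(ii) is just the $\sigAware$-ness of~$\te{2}$. Note, in particular, that we do \emph{not} claim $\head(Z)\lam_r\sigma$ (unlike \Cref{lem:type1,lem:type2}), which is appropriate since the head $\te{2}$ may well lie beyond $\sigma$'s horizon.
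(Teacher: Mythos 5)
Your proof is correct and follows essentially the same construction as the paper's: a single degenerate two-legged fork $F$ with $\base(F)=\head(F)=\te{2}$ and $\tail(F)=\te{1}=\ocomp{\te{2}}{q}$, giving $Z=(F)$ with $\weight(Z)=-\ubSym(q)=\weight(P)$, with $\sigma$-visibility following from the $\sigAware$-ness of~$\te{2}$. Your parenthetical observation that the lemma's conclusion intentionally omits $\head(Z)\lam_r\sigma$, in contrast with \Cref{lem:type1,lem:type2}, is a correct reading of the statements and not something the paper's proof calls out explicitly.
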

\begin{proof}
We construct a \tlf\ $F$ so that $\base(F)=\head(F)=\te{2}$ and $\tail(F)=\te{1}=\ocomp{\te{2}}{q}$.
Assume that $\te{2}=\nodeF{\sigma_2,p_2}$, and that $P=(\psi_{i_d},\psi_{i_{d-1}},\ldots,\psi_{i_1})$, so that $q=[i_1,i_2,\ldots,i_d]$.
As $\sigma_2\lam_r\sigma$ and $base(F)=\te{2}=\nodeF{\sigma_2,p_2}$, then $Z=(F)$ is a $\sigma$-visible zigzag with $\weight(Z)=\weight(F)=-\ubSym(q)=\weight(P)$ as required.
\end{proof}

Every path in \GER\ can be uniquely represented by exactly one of the following:
\begin{itemize}
\item Joined concatenation of type 1 and type 2 paths.
\item Joined concatenation of type 1,2 and 3 paths.
\item Type 4 path.
\end{itemize}
We use above the term \textit{Joined concatenation} whereby the last node of a path is also the first node of the following path.
Recall that type 1 and 2 might contain only a single node, and the same goes for a joined concatenation of them.

For a single path of each type, we defined above the nodes between which there is a corresponding \cpath, and proved the existence of a corresponding \svz.
We shall now do the same when concatenating the different types (the two first options in the list above).
The first concatenation is of type 1 and type 2 paths:
\begin{definition}
Let $I$ be a type 1 path that is a \cpath\ from \te{1} to $\sigma_1$, and let $Q$ be a type 2 path from $\sigma_1$ to $\sigma_2$.
We say that $P=\ocomp{I}{Q}$ is a \cpath\ in \GER\ between $\te{1}$ and $\sigma_2$.
\end{definition}
Such path defines a \svz:
\begin{lemma}
\Llabel{lem:type12}
Let $I$ be a type 1 path that is a \cpath\ from \te{1} to $\sigma_1$, and let $Q$ be a type 2 path from $\sigma_1$ to $\sigma_2$.
Then there exists a \svz~$Z$ between \te{1} and $\sigma_2$ where $\weight(Z)=\weight(\ocomp{I}{Q})$.
\end{lemma}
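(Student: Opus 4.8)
The plan is to obtain $Z$ by gluing together the two $\sigma$-visible zigzags that the previous lemmas already produce for $I$ and for $Q$ separately. First I would apply \Cref{lem:type1} to the type~1 \cpath~$I$, getting a \svz~$Z_1$ between $\te{1}$ and $\sigma_1$ with $\weight(Z_1)=\weight(I)$; recall that in that lemma $\head(Z_1)$ is literally the basic node $\sigma_1$ and, crucially, $\head(Z_1)\lam_r\sigma$. Then I would apply \Cref{lem:type2} to the type~2 \cpath~$Q$, getting a \svz~$Z_2$ between $\sigma_1$ and $\sigma_2$ with $\weight(Z_2)=\weight(Q)$ and $\head(Z_2)\lam_r\sigma$; here $\tail(Z_2)=\sigma_1$. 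Writing $Z_1=(F_1,\ldots,F_a)$ and $Z_2=(G_1,\ldots,G_b)$, I set $Z\eqdef(F_1,\ldots,F_a,G_1,\ldots,G_b)$.

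Then I would verify three points. First, $Z$ is a legitimate zigzag pattern: the internal junctions of $Z_1$ and of $Z_2$ are inherited, and at the new junction $\head(F_a)$ and $\tail(G_1)$ are both the basic node $\sigma_1$, hence on the same timeline with $\timeof_r(\head(F_a))\le\timeof_r(\tail(G_1))$; in fact $F_a$ and $G_1$ are \emph{joined}. Second, the weight: since $F_a$ has no successor inside $Z_1$ and is joined to $G_1$ inside $Z$, we get $\joins(Z)=\joins(Z_1)+\joins(Z_2)$, so $\weight(Z)=\sum_k\weight(F_k)+\sum_k\weight(G_k)+\joins(Z)=\weight(Z_1)+\weight(Z_2)=\weight(I)+\weight(Q)=\weight(\ocomp{I}{Q})$, the last equality because $I$ and $Q$ share only the node $\sigma_1$ and no edge. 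Third, $\sigma$-visibility of $Z$: the top fork of $Z$ is $G_b$, the top fork of $Z_2$, so condition~(ii) of \Cref{def:visible-zigzag} is exactly the one $Z_2$ already satisfies; condition~(i) requires $\head(F_k)\lam_r\sigma$ for $1\le k\le a$ and $\head(G_k)\lam_r\sigma$ for $1\le k\le b-1$. The $G_k$ part is the $\sigma$-visibility of $Z_2$, the $F_k$ part for $k<a$ is the $\sigma$-visibility of $Z_1$, and for $k=a$ it is precisely the extra clause $\head(F_a)=\head(Z_1)\lam_r\sigma$ furnished by \Cref{lem:type1}. Finally $Z$ runs from $\tail(F_1)=\te{1}$ to $\head(G_b)=\sigma_2$.

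The argument is essentially bookkeeping, so I do not expect a genuine obstacle; the one point that needs care — and the reason \Cref{lem:type1} and \Cref{lem:type2} were stated to deliver the extra ``$\head(Z)\lam_r\sigma$'' conclusion — is that the fork which was the (exempt) top fork of $Z_1$ ceases to be the top fork of the concatenation, so its head must now meet condition~(i) of $\sigma$-visibility; this is exactly what the extra clause guarantees. The analogous concatenations involving type~3 paths will then follow the same template.
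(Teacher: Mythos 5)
Your proposal is correct and is exactly the paper's argument: the paper's proof of this lemma is the one-line statement that it "follows directly by concatenating the zigzags we get from" \Cref{lem:type1} and \Cref{lem:type2}, and you have simply spelled out the bookkeeping (the joined junction at $\sigma_1$, the additivity of weights, and the role of the extra clause $\head(Z)\lam_r\sigma$ in restoring condition~(i) for the fork that loses its top-fork status). No gap.
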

\begin{proof}
This follows directly by concatenating the zigzags we get from \Cref{lem:type1,lem:type2}. 
\end{proof}

The second concatenation is of type 1, 2 and 3 paths.
To keep it simple, we say that $P=\ocomp{I}{Q}$, which is a concatenation of type 1 and 2 paths, is called type 5 path. 
\begin{definition}
Let $P'$ be a type 5 path that is a \cpath\ from \te{1} to $\sigma_2$, and let $\tet'$ be a node and $S$ be a type 3 path such that $S$ is a \cpath\ from $\sigma_2$ to $\tet'$.
We say that $P=\ocomp{P'}{S}$ is a \cpath\ in \GER\ between $\te{1}$ and $\tet'$.
\end{definition}
Such path defines a \svz:
\begin{lemma}
\Llabel{lem:type123}
Let $P'$ be a type 5 path that is a \cpath\ from \te{1} to $\sigma_2$, and let $\tet'$ be a node and $S$ be a type 3 path such that $S$ is a \cpath\ from $\sigma_2$ to $\tet'$.
Then there exists a \svz~$Z$ between \te{1} and $\tet'$ where $\weight(Z)=\weight(\ocomp{P'}{S})$.
\end{lemma}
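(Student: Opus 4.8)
The plan is to build $Z$ by splicing together the two \svz\ patterns already produced for the two pieces, exactly as \Cref{lem:type12} splices the zigzags of \Cref{lem:type1,lem:type2}. By \Cref{lem:type12}, the type~5 \cpath\ $P'$ from $\te{1}$ to $\sigma_2$ gives a \svz\ $Z_1=(F_1,\ldots,F_c)$ from $\te{1}$ to $\sigma_2$ with $\weight(Z_1)=\weight(P')$; here $\head(F_c)=\sigma_2$, and $\sigma_2\lamr\sigma$ since $\sigma_2$, being a basic node of $\GER$, is an original node of $\GB{r,\sigma}$. By \Cref{lem:type3}, the type~3 \cpath\ $S$ from $\sigma_2$ to $\tet'$ gives a \svz\ $Z_2=(G_1,G_2)$ from $\sigma_2$ to $\tet'$ with $\weight(Z_2)=\weight(S)$, where (reading that proof) $G_1$ is the trivial two-legged fork with $\base(G_1)=\head(G_1)=\tail(G_1)=\nodeF{\sigma_2,j_2}$, i.e.\ the node corresponding to $\sigma_2$ itself, and $\base(G_2)=\nodeF{\sigma',p'}=\tet'$ with $\sigma'\lamr\sigma$.

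I would then set $Z\eqdef(F_1,\ldots,F_c,G_1,G_2)$ and check it is a zigzag pattern from $\te{1}$ to $\tet'$ in the sense of \Cref{def:zig}. Every junction internal to $Z_1$ or $Z_2$ is inherited, so only the new junction between $F_c$ and $G_1$ needs checking: $\head(F_c)$ and $\tail(G_1)$ both correspond to the basic $j_2$-node $\sigma_2$, hence lie on the same timeline with $\timeof_r(\head(F_c))=\timeof_r(\tail(G_1))$, so the junction is legal and $F_c$ and $G_1$ are in fact \emph{joined}. Since $\tail(F_1)=\te{1}$ and $\head(G_2)=\tet'$, $Z$ is a zigzag between $\te{1}$ and $\tet'$. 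For the weight, the new junction being joined contributes nothing to $\joins$, so $\joins(Z)=\joins(Z_1)+\joins(Z_2)$ and hence $\weight(Z)=\weight(Z_1)+\weight(Z_2)=\weight(P')+\weight(S)$; since $P'$ ends and $S$ begins at $\sigma_2$, path weight is additive across the glue point, so this equals $\weight(\ocomp{P'}{S})$.

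It remains to verify $\sigma$-visibility of $Z$. Condition~(i) of \Cref{def:visible-zigzag} requires the head of every fork of $Z$ except the last to lie in $\past(r,\sigma)$: for $F_1,\ldots,F_{c-1}$ this comes from \Cref{lem:type12}, for the now-interior $F_c$ it holds because $\head(F_c)=\sigma_2\lamr\sigma$, and for $G_1$ it holds because $\head(G_1)$ also corresponds to $\sigma_2\lamr\sigma$. Condition~(ii) requires $\base(G_2)$ to be a \sigAware\ node, which is exactly $\base(G_2)=\nodeF{\sigma',p'}$ with $\sigma'\lamr\sigma$ from \Cref{lem:type3}. Thus $Z$ is a \svz\ between $\te{1}$ and $\tet'$ with $\weight(Z)=\weight(\ocomp{P'}{S})$, as desired. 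I do not anticipate any real obstacle: the argument is a routine concatenation, and the only step deserving care is recognizing that $\head(F_c)$ and $\tail(G_1)$ sit at the same basic node $\sigma_2$, so that the spliced forks are joined and no spurious $+1$ slips into the weight.
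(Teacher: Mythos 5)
Your proof is correct and follows exactly the route the paper takes: the paper's proof of this lemma is the single sentence ``this follows directly by concatenating the zigzags we get from \Cref{lem:type12,lem:type3}'', and your argument is precisely the careful expansion of that concatenation (junction at $\sigma_2$ is joined, weights add, visibility conditions inherited).
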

\begin{proof}
This follows directly by concatenating the zigzags we get from \Cref{lem:type12,lem:type3}. 
\end{proof}

In the definitions we have seen above, any \cpath\ $P$ between two nodes, $\tet$ and $\tet'$, in \GER\, implies a \svz~$Z$ between these two node, with $\weight(Z)=\weight(P)$.
We generalize it in the following lemma:
\begin{lemma}
\Llabel{lem:paths}
Let $r\in\Rrep$, let $\sigma$ be a node of~$r$ and let $\te{1}$ and $\tet'_1$ be \sigAware\ nodes. 
Moreover, let $\te{2}=\Comp{\tet'_1}{p'}$ be another \sigAware\ node.
If $P$ is a \cpath\ in $\GE{r,\sigma}$ between \te{1} and $\tet'_1$, then there exists a $\sigma$-visible zigzag~$Z$ in~$r$, between~$\te{1}$ and~$\te{2}$, with $\weight(Z)=\weight(P)+\lbSym(p')$
\end{lemma}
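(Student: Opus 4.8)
The plan is to reduce the statement to the per-type case analysis already carried out in \cref{lem:type1,lem:type2,lem:type3,lem:type4,lem:type12,lem:type123}, and then to lengthen the last two-legged fork of the zigzag they produce, pushing its head from~$\tet'_1$ to~$\te{2}=\Comp{\tet'_1}{p'}$ while picking up exactly $\lbSym(p')$ of weight.

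First I would invoke the classification of paths in $\GE{r,\sigma}$ stated just before the lemma: every \cpath\ in $\GE{r,\sigma}$ is a joined concatenation of a type~1 and a type~2 path, or a joined concatenation of type~1, type~2 and type~3 paths, or a single type~4 path (allowing some constituent segments to be trivial). Applying the matching one of \cref{lem:type1,lem:type2,lem:type3,lem:type4,lem:type12,lem:type123} to the \cpath~$P$ between~\te{1} and~$\tet'_1$ then yields a $\sigma$-visible zigzag $Z_0=(F_1,\ldots,F_c)$ in~$r$ from~\te{1} to~$\tet'_1$ with $\weight(Z_0)=\weight(P)$; by \cref{def:visible-zigzag} this in particular records that $\head(F_k)\lam_r\sigma$ for $1\le k\le c-1$ and that $\base(F_c)=\node{\sigma',q}$ for some $\sigma'\lam_r\sigma$. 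It remains to turn $Z_0$ into a $\sigma$-visible zigzag from~\te{1} to~$\te{2}$ whose weight is $\weight(Z_0)+\lbSym(p')$.

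If $p'$ is a singleton then $\te{2}=\tet'_1$ and $\lbSym(p')=0$, so $Z\eqdef Z_0$ already works; so assume $p'$ is non-singleton, and note that $p'$ is a path in~$\Net$ starting at the site of~$\tet'_1$ since $\te{2}=\Comp{\tet'_1}{p'}$ is a \sigAware\ node of~$r$. Write $\beta\eqdef\base(F_c)$, so $F_c=\pair{\beta,\,\beta\oCompose p_h,\,\beta\oCompose p_t}$ with $\head(F_c)=\beta\oCompose p_h=\tet'_1$ and $\tail(F_c)=\beta\oCompose p_t$. Since $p_h$ ends at the site of~$\tet'_1$, which is where~$p'$ begins, $p_h\oCompose p'$ is a legal network path; I would set $F_c'\eqdef\pair{\beta,\,\beta\oCompose(p_h\oCompose p'),\,\beta\oCompose p_t}$, which is a genuine two-legged fork in~$r$, and $Z\eqdef(F_1,\ldots,F_{c-1},F_c')$. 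It then remains to run three immediate checks. First, $Z$ is a zigzag pattern from~\te{1} to~$\te{2}$: its tail is $\tail(F_1)=\te{1}$, its head is $\head(F_c')=\beta\oCompose(p_h\oCompose p')=\tet'_1\oCompose p'=\te{2}$, and each adjacency requirement of \cref{def:zig} --- which for index~$k$ mentions only $\head(F_k)$ and $\tail(F_{k+1})$ --- carries over verbatim from~$Z_0$ since all these nodes are unchanged, in particular $\tail(F_c')=\tail(F_c)$. Second, $\weight(Z)=\weight(Z_0)+\lbSym(p')=\weight(P)+\lbSym(p')$, because $\weight(F_c')=\lbSym(p_h\oCompose p')-\ubSym(p_t)=\weight(F_c)+\lbSym(p')$ while every other fork is unchanged and $\joins(Z)=\joins(Z_0)$. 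Third, $Z$ is $\sigma$-visible: condition~(i) of \cref{def:visible-zigzag} constrains only $F_1,\ldots,F_{c-1}$, which are untouched, and condition~(ii) depends only on $\base(F_c')=\beta=\base(F_c)$, which $\sigma$-visibility of~$Z_0$ already presents in the form $\node{\sigma',q}$ with $\sigma'\lam_r\sigma$.

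I do not expect a genuine obstacle here: all of the real work sits in the per-type \cref{lem:type1,lem:type2,lem:type3,lem:type4,lem:type12,lem:type123}, and the present lemma is the routine ``append one leg'' step that grafts the extra path~$p'$ onto the head of the last fork. The only points needing mild care are confirming that the path classification is truly exhaustive --- so that one of those lemmas always applies, including under the degenerate segmentations where a type-1 or type-2 segment is a single basic node --- and checking that lengthening $\head(F_c)$ by~$p'$ breaks neither the adjacency condition of \cref{def:zig} nor either condition of \cref{def:visible-zigzag}; both survive precisely because each is anchored at nodes ($\base(F_c)$, $\tail(F_c)$, and the heads $\head(F_1),\ldots,\head(F_{c-1})$) that the construction never touches.
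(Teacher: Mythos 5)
Your proposal is correct and follows essentially the same route as the paper's own proof: obtain a $\sigma$-visible zigzag of weight $\weight(P)$ between $\te{1}$ and $\tet'_1$ from the per-type lemmas, then lengthen the head of the topmost fork from $\tet'_1$ to $\te{2}$, picking up $\lbSym(p')$ in weight while leaving all the nodes that the zigzag and visibility conditions actually constrain untouched. The only difference is that you spell out the fork-arithmetic and the degenerate singleton case, which the paper leaves implicit.
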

\begin{proof}
We already know that if $P$ is a \cpath\ in $\GE{r,\sigma}$ between~\te{1} and~$\tet'_1$, then there exists a \svz~$Z'$ in~$r$, between~$\te{1}$ and~$\tet'_1$, with $\weight(Z')=\weight(P)$.
As $\te{2}=\Comp{\tet'_1}{p'}$, it directly follows that we can define \te{2} to be the head of the topmost \tlf\ of $Z'$ (instead of $\tet'_1$), and get a new $\sigma$-visible zigzag, $Z$, between~$\te{1}$ and~$\te{2}$, with $\weight(Z)=\weight(P)+\lbSym(p')$ as required.
\end{proof}

Let $\te{1}$ and \te{2} be \sigAware\ nodes in $r$.
If we can find a path in $\GE{r,\sigma}$ that will give us a $\sigma$-visible zigzag $Z$ from \te{1} to \te{2} (according to \Cref{lem:type4,lem:type12,lem:type123}), and we can construct a valid run $r'\sim_\sigma r$ where $\timeof_{r'}(\te{2})-\timeof_{r'}(\te{1})=\weight(Z)$, 
then we can prove \Cref{thm:vzz}.
We can do so because if $\Ksub{\sigma}(\te{1}\atleast{x}\te{2})$ and $r'\sim_\sigma r$, then $(\Rrep,r')\models\te{1}\atleast{x}\te{2}$, and as $Z$ is also a zigzag in $r'$, where $\weight(Z)=\timeof_{r'}(\te{2})-\timeof_{r'}(\te{1})$, it must be that $\weight(Z)\geq x$ as required.

In a similar manner to the proof of \Cref{thm:zz}, for a given $\te{1}$ we find a single timing (and a single run) that will give us the required results for any other node \te{2}.
But, what happens if for some node \te{2} we cannot find a path in \GE{r,\sigma} that will give us the corresponding zigzag?
For example, assume that $\te{1}=\nodeF{\sigma_1,i}$, and $\te{2}=\nodeF{\sigma_2,j}$, so that both $\te{1}$ and $\te{2}$ are in the past of $\sigma$.
In this case, it can be seen that if there is no path in \GE{r,\sigma} from $\sigma_1$ to $\sigma_2$, then there is no \cpath\ between \te{1} and \te{2}.
In such case there is no constraint for how early \te{2} can appear before \te{1} (but maybe only constraint on how much late it can appear), and so $\sigma$ cannot know statements of the form $\te{1}\atleast{x}\te{2}$ as we can find an equivalent run in which $\te{2}$ appears less than $x$ after $\te{1}$, for any $x$.
For those nodes, like \te{2} in this example, we will define the timing using a parameter that controls how early those nodes will appear.\\

Let $\sigma'$ be a basic node such that $\sigma'\lamr\sigma$, and let $\GB{r,\sigma}=(V,E)$ and $\GE{r,\sigma}=(V',E')$.
We divide the nodes from $V'$ to four sets w.r.t $\sigma'$:
\begin{itemize}
\item $\Vyes{\sigma'}=\{\sigma''\in V : $ there is a path from $\sigma'$ to $\sigma''\}$
\item $\Vno{\sigma'} =\{\sigma''\in V : $ there is no path from $\sigma'$ to $\sigma''\}$
\item $\Ayes{\sigma'} =\{\psi\in V'-V : $ there is a path from $\sigma'$ to $\psi\}$
\item $\Ano{\sigma'}  =\{\psi\in V'-V : $ there is no path from $\sigma'$ to $\psi\}$
\end{itemize}

\newcommand{\Tf}[1]{\ensuremath{T_\gamma[r,\sigma,{#1}]}}
\newcommand{\Tfs}{\ensuremath{T_\gamma}}

Now we define the promised timing function. This timing function defines the times of nodes from $\Vyes{\sigma'}$ and $\Ayes{\sigma'}$ to be according to the longest paths from $\sigma'$ to them, and it define the times of nodes from $\Vno{\sigma'}$ to be according to the longest paths from them to $\sigma$ (this is done just to make sure they receive a valid timing) and at least $\gamma$ (a timing parameter) before the earliest node from $\Vyes{\sigma'}$.
As we shall see, there is no importance for the nodes of $\Ano{\sigma'}$.
\begin{definition} [Fast-Timing]
Let $r\in\Rrep$, $\gamma\in\Nat$, $\sigma$ be a node of $r$, and let $\GE{r,\sigma}=(V',E',W')$.
Moreover, let $\tet'=\nodeF{\sigma',p'}$ be a \sigAware\ node.
The \defemph{$\bm{\gamma}$-Fast timing} of $\tet'$ in $r$ w.r.t $\sigma$, $\Tf{\tet'}$, is defined as follows:\\
For each $\sigma''\in\Vyes{\sigma'}$, define $d(\sigma'')$ to be the weight of the longest path from $\sigma'$ to $\sigma''$ in \GE{r,\sigma}. The same goes for each $\psi\in \Ayes{\sigma'}$.\\
For each $\sigma''\in\Vno{\sigma'}$, define $f(\sigma'')$ to be the weight of the longest path from $\sigma''$ to $\sigma$ in \GE{r,\sigma}.\\
Denote $F_1=max\{f(\cdot)\}$, $F_2=min\{f(\cdot)\}$ and $D=min\{d(\cdot)\}$.
If $\Vno{\sigma'}=\emptyset$, then define $F_1=F_2=0$.
We define $\Tf{\tet'}:V'\to\Nat$ such that:
\begin{itemize}
\item If $\sigma''\in\Vno{\sigma'}$, then $\Tf{\tet'}(\sigma'')=F_1-f(\sigma'')$.
\item If $\sigma''\in\Vyes{\sigma'}$, then $\Tf{\tet'}(\sigma'')=1+F_1-F_2+\gamma-D+d(\sigma'')$.
\item If $\psi\in\Ayes{\sigma'}$, then $\Tf{\tet'}(\psi)=1+F_1-F_2+\gamma-D+d(\psi)$.
\item If $\psi\in\Ano{\sigma'}$, then $\Tf{\tet'}(\psi)=0$.
\end{itemize}
\end{definition}

The fast timing is a valid timing function w.r.t \GE{r,\sigma}, in the sense of \Cref{def:vtf}:
\begin{lemma} 
\Llabel{lem:fasttiming} 
Let $r\in\Rrep$, $\gamma\in\Nat$, $\sigma$ be a node of $r$, $\GE{r,\sigma}=(V',E',w)$, and let $\tet'$ be a \sigAware\ node.
Define $\Tfs=\Tf{\tet'}$.
Then:
\begin{enumerate}
\item $\Tfs(v)\geq 0$ for any $v\in V'$.
\item For each pair of nodes $v_1,v_2\in V'$, such that $(v_1,v_2)\in E'$, we have that $\Tfs(v_2)\geq\Tfs(v_1)+w(v_1,v_2)$.
\item If $\sigma_i$ and $\sigma'_i$ are two $i$-nodes from the past of $\sigma$, and $\timeof_r(\sigma_i)<\timeof_r(\sigma'_i)$, then $\Tfs(\sigma_i)<\Tfs(\sigma'_i)$.
\item Let $\sigma_1$ and $\sigma_2$ be two basic nodes such that $\sigma_1\in\Vyes{\sigma'}$ and $\sigma_2\in\Vno{\sigma'}$. Then ${\Tfs(\sigma_2)+\gamma<\Tfs(\sigma_1)}$.
\end{enumerate}
\end{lemma}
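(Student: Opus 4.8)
The plan is to read the four parts off the definition of $\Tfs=\Tf{\tet'}$, using three elementary facts. (a)~$\GE{r,\sigma}$ has no positive cycle: each of its edges records a genuine upper- or lower-bound constraint of the run~$r$ — the original edges as in $\GB{r}$, the weight-$1$ edges of $E'$ because $\psi_i$ stands for the successor of the boundary $i$-node, and the weight-$\ub{ji}$ edges of $E''$ and $E'''$ because they bound a message leaving $\past(r,\sigma)$ — so $d(\cdot)$, $f(\cdot)$, $D$, $F_1$, $F_2$ are finite; moreover $d(\sigma')=0$ (the empty path), so $D\leq 0$, while $F_2\leq f(v)\leq F_1$ for $v\in\Vno{\sigma'}$ and $d(v)\geq D$ for $v\in\Vyes{\sigma'}\cup\Ayes{\sigma'}$. (b)~Reachability from $\sigma'$ in $\GE{r,\sigma}$ is closed under following edges, so every edge either stays inside $\Vyes{\sigma'}\cup\Ayes{\sigma'}$, stays inside $\Vno{\sigma'}\cup\Ano{\sigma'}$, or crosses from the latter set into the former. (c)~Every node of $\past(r,\sigma)$ has a directed path to $\sigma$ inside $\GB{r,\sigma}$ (the forward edges along its happens-before chain), so $f$ is defined on all of $\Vno{\sigma'}$.

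Granting these, Part~1 is immediate: $\Tfs(v)=F_1-f(v)\geq 0$ on $\Vno{\sigma'}$; $\Tfs(v)=1+F_1-F_2+\gamma-D+d(v)\geq 1$ on $\Vyes{\sigma'}\cup\Ayes{\sigma'}$ since $F_1-F_2\geq0$, $\gamma\geq0$, $-D\geq0$ and $d(v)-D\geq0$; and $\Tfs(v)=0$ on $\Ano{\sigma'}$. For Part~3, if $\sigma_i$ precedes $\sigma'_i$ on process~$i$'s timeline there is a path of successor edges of total weight $\geq1$ from $\sigma_i$ to $\sigma'_i$, so $d(\sigma'_i)\geq d(\sigma_i)+1$ and $f(\sigma_i)\geq f(\sigma'_i)+1$; since $\sigma_i\in\Vyes{\sigma'}$ forces $\sigma'_i\in\Vyes{\sigma'}$, the surviving cases ($\sigma_i,\sigma'_i\in\Vyes{\sigma'}$; $\sigma_i,\sigma'_i\in\Vno{\sigma'}$; $\sigma_i\in\Vno{\sigma'}$, $\sigma'_i\in\Vyes{\sigma'}$) each give $\Tfs(\sigma_i)<\Tfs(\sigma'_i)$, the mixed case using that $\Tfs\leq F_1-F_2$ on $\Vno{\sigma'}$ and $\Tfs\geq 1+F_1-F_2+\gamma$ on $\Vyes{\sigma'}$. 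Part~4 is exactly this mixed estimate: $\Tfs(\sigma_1)\geq 1+F_1-F_2+\gamma\geq 1+\Tfs(\sigma_2)+\gamma>\Tfs(\sigma_2)+\gamma$ for $\sigma_1\in\Vyes{\sigma'}$, $\sigma_2\in\Vno{\sigma'}$.

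The substance is Part~2, which I would prove by a case analysis on an edge $(v_1,v_2)$ keyed to the classes of its endpoints. If $v_1,v_2\in\Vyes{\sigma'}\cup\Ayes{\sigma'}$, extending a longest $\sigma'$-to-$v_1$ path by the edge gives $d(v_2)\geq d(v_1)+w(v_1,v_2)$, i.e.\ validity. If $v_1,v_2\in\Vno{\sigma'}$, then by~(c) appending a longest $v_2$-to-$\sigma$ path gives $f(v_1)\geq w(v_1,v_2)+f(v_2)$, again validity. Any edge touching an $\Ano{\sigma'}$-node is trivial: edges leaving such a node lie in $E''\cup E'''$, have negative weight, and start from a node of time~$0$; and edges entering an $\Ano{\sigma'}$-node can only be $E'''$-edges from other $\Ano{\sigma'}$-nodes (an $E'$-edge cannot enter $\Ano{\sigma'}$ — see below), again negative weight between two nodes of time~$0$. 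For the cross edges from $\Vno{\sigma'}\cup\Ano{\sigma'}$ into $\Vyes{\sigma'}\cup\Ayes{\sigma'}$, two observations finish the argument: first, a cross edge is not a send/receive edge between original nodes or its reverse, since those come paired and would otherwise make $v_1$ reachable from $\sigma'$, so a cross edge is a successor edge or an $E'$-edge (weight~$1$) or an $E''\!/E'''$-edge (negative weight); second, whenever an $E'$-edge (from the boundary $k$-node to $\psi_k$) exists, process~$k$ has a node in $\past(r,\sigma)$, the message chain witnessing this is a directed path in $\Net$ from $k$ to the process~$p$ of $\sigma$, reversing it yields an $E'''$-path from $\psi_p$ to $\psi_k$, and since $\sigma$ is the boundary $p$-node, $(\sigma,\psi_p)\in E'$ and, as $\sigma'\lamr\sigma$, $\sigma$ is reachable from $\sigma'$ — so $\psi_k\in\Ayes{\sigma'}$, no $E'$-edge ever points into $\Ano{\sigma'}$, and every weight-$1$ cross edge lands in $\Vyes{\sigma'}\cup\Ayes{\sigma'}$, where $\Tfs(v_2)\geq 1+(F_1-F_2)\geq\Tfs(v_1)+1$; negative-weight cross edges hold since $\Tfs(v_2)\geq 0>w(v_1,v_2)$. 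I expect the main obstacle to be making the reachability bookkeeping of these two observations watertight — the identification of message chains in a run with directed paths of $\Net$, and the paired-reverse-edge argument — after which the arithmetic in every case is routine.
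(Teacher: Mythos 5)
Your proposal is correct and follows essentially the same route as the paper's proof: parts 1, 3 and 4 are read off the definition of $\Tf{\tet'}$ exactly as in the paper, and part 2 is the same case analysis on an edge's endpoint classes, resting on the same two key observations — that send/receive edges between original nodes come in forward/reverse pairs (so no such edge can cross from the unreachable to the reachable side), and that an $E'$-edge into $\psi_k$ forces $\psi_k$ to be reachable from $\sigma'$ via $(\sigma,\psi_{p})$ and $E'''$-edges. Your explicit verification that $f$ is defined on all of $\Vno{\sigma'}$ (every node of $\past(r,\sigma)$ has a path to $\sigma$) is a detail the paper leaves implicit, but the argument is otherwise the same.
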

\begin{proof}

The first claim is immediate from the definition of $\Tfs$. For the second claim, assume by way of contradiction that there are $(v_1,v_2)\in E'$, where $\Tfs(v_2)<\Tfs(v_1)+w(v_1,v_2)$.

If $v_1\in\Vyes{\sigma'}$ or $v_1\in\Ayes{\sigma'}$, then also $v_2\in\Vyes{\sigma'}$ or $v_2\in\Ayes{\sigma'}$, thus $\Tfs(v_2)-\Tfs(v_1)=d(v_2)-d(v_1)<w(v_1,v_2)$.
Recall that $d(v_2)$ is the weight of the longest path from $\sigma'$ to $v_2$. 
Let us observe the following alternative path from $\sigma'$ to $v_2$: 
We begin by walking on the longest path from $\sigma'$ to $v_1$, whose weight is $d(v_1)$, and then walk over the edge $(v_1,v_2)$. And so we got a path from $\sigma'$ to $v_2$ with weight of $d(v_1)+w(v_1,v_2)>d(v_2)$ contradicting the fact that $d(v_2)$ should be the longest path.

Otherwise, assume that only $v_2\in\Vyes{\sigma'}$ or $v_2\in\Ayes{\sigma'}$. By the construction of \Tfs, the times of nodes to whom there is a path from $\sigma'$ is always bigger 
than the times of those that don't have such paths. Thus, the only edge that can cause troubles is if $(v_1,v_2)$ represents a message sending and receiving, where it has a weight bigger than 1. However, in such case $(v_2,v_1)\in E'$ and then it cannot be that there is a path towards $v_2$ and not towards $v_1$.

The last case is when there is no path from $\sigma'$ to both $v_1$ and $v_2$.
If both are original nodes, then $\Tfs(v_2)-\Tfs(v_1)=f(v_1)-f(v_2)<w(v_1,v_2)$.
Recall that $f(v_1)$ is the weight of the longest path from $v_1$ to $\sigma$. 
Let us observe the following alternative path from $v_1$ to $\sigma$: 
We begin by walking on $(v_1,v_2)$ and then along the longest path from $v_2$ to $\sigma$, whose weight is $f(v_2)$.
And so we got a path from $v_1$ to $\sigma$ with weight of $f(v_2)+w(v_1,v_2)>f(v_1)$ contradicting the fact that $f(v_1)$ should be the longest path.

Otherwise, if both nodes are auxiliary nodes, the claim is trivially true as $\Tfs(v_2)-\Tfs(v_1)=0$ and clearly $w(v_1,v_2)<0$.
It is even stronger if $v_1$ is auxiliary and $v_2$ is original, as then $\Tfs(v_2)-\Tfs(v_1)>0$ while still $w(v_1,v_2)<0$.
The last case is when $v_2$ is auxiliary and $v_1$ is original. 
As $v_2\in\Ano{\sigma'}$, then there is no path in $\GE{r,\sigma'}$ to $v_2$. However, as $\sigma\in\Vyes{\sigma'}$, that means that there is no path from $\sigma$ to $v_2$. This is possible only if there is no path in $\Net$ between from the agent of $v_2$ to the agent of $\sigma$, which means that no node of the agent of $v_2$ can be in $\sigma$'s past, so there cannot be an original node $v_1$ such that $(v_1,v_2)$ is an edge.\\

The third claim is immediate from the second claim, as there is a $1$ weight edge between successive nodes of the same agent.

Concerning the last claim: $\Tfs(\sigma_1)-\Tfs(\sigma_2)=\gamma+1+(f(\sigma_2)-F_2)+(d(\sigma_1)-D)$.
Note that $F_2\leq f(\sigma_2)$ and $D\leq d(\sigma_1)$, and so we get that $\Tfs(\sigma_1)-\Tfs(\sigma_2)>\gamma$ as required.

\end{proof}

\newcommand{\fastG}[2]{\ensuremath{fast^{#1}_\sigma(r,{#2})}}
\newcommand{\fast}[1]{\fastG{\gamma}{#1}}

Based on the fast-timing, we create a full run:
\begin{definition} [Fast run]    
Let $r\in\Rrep$, $\gamma\in\Nat$, $\sigma$ be a node of $r$ and let $\tet'$ be a \sigAware\ node.
Define $\Tfs=\Tf{\tet'}$.
The \defemph{$\bm{\gamma}$-fast run} of~$\tet'$ in~$r$, $r'=\fast{\tet'}$, is defined inductively as follows:
For $m=0$, r'(0)=r(0).
For $m>0$, assume that we have the global states of $r'$ up to time $m-1$. 
We have to decide what the environment does at time~$m$ with messages in transit.
Let $\mu$ be such message, that is sent in $r'$ by the $i$-node~$\sigma_i$ at time~$t_\mu$ to agent~$j$.
\begin{enumerate}
\item If $\sigma_i\lam_r\sigma$, and its message to $j$ is received in $r$ by a node $\sigma_j\lam_r\sigma$, then deliver $\mu$ only if $m=\Tfs(\sigma_j)$. Otherwise,
\item Assume that $\tet'=\nodeF{\sigma',p'}$. If $\mu$ is a message that is sent by a node $\tet''=\nodeF{\sigma',p''}$ in $r'$, where $\Comp{p''}{j}$ is a prefix of $p'$, 
then deliver $\mu$ only if $m=t_\mu+\ub{ij}$. Otherwise,
\item If $m\geq max(t_\mu+\lb{ij}, \Tfs(\psi_j))$, then deliver $\mu$. Otherwise, don't deliver $\mu$ at time $m$.
\end{enumerate}
Concerning external inputs (messages from the environment), deliver such to an agent $j\in\Proc$ only if ${m=\Tfs(\sigma_j)}$ for some $j$-node $\sigma_j\lam_r\sigma$ that receives that external input in $r$.\footnote{
While we often refer to the run $r$, in many cases it uses only for clarity, and not truly required. For example, there is no need to mention $r$ when discussing the messages that $\sigma$ receives, as $\sigma$ already contains this information}
\end{definition}
\begin{lemma}  
\Llabel{lem:fast}
Let $r\in\Rrep$, $\gamma\in\Nat$, let $\sigma$ be a node of $r$ and let $\tet'=\nodeF{\sigma',p'}$ be a \sigAware\ node, where $\timeof_r(\tet')>0$.
Define $r'=\fast{\tet'}$.
Then:
\begin{enumerate}
\item $r'\in\Rrep$
\item $r'\sim_\sigma r$
\item Let $\sigma_1$ be a basic node such that $\sigma_1\lam_r\sigma$, and $\timeof_r(\sigma_1)>0$. Then 
$\timeof_{r'}(\sigma_1)=\Tfs(\sigma_1)$.
\item Let $\sigma_1$ and $\sigma_2$ be two basic nodes such that $\sigma_1\in\Vyes{\sigma'}$ and $\sigma_2\in\Vno{\sigma'}$. Then ${\timeof_{r'}(\sigma_2)+\gamma<\timeof_{r'}(\sigma_1)}$.
\item Let $\te{2}=\nodeF{\sigma_2,p_2}$ be a \sigAware\ node, where $\sigma_2\in\Vyes{\sigma'}$.
Then there exists another \sigAware\ node $\tet'_1$ such that (i) $\te{2}=\ocomp{\tet'_1}{p'_1}$, (ii) there is a \cpath\ $P$ in \GE{r,\sigma} between $\tet'$ and $\tet'_1$, and (iii)  $\timeof_{r'}(\te{2})-\timeof_{r'}(\tet')=\weight(P)+\lbSym(p'_1)$.
\end{enumerate}
\end{lemma}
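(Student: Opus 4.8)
The plan is to reuse the architecture of the proof of \cref{thm:zz} --- the pair \cref{lem:runbytime} and \cref{lem:thetasaresame} --- but with the extended graph $\GE{r,\sigma}$ in the role that $\GB{r}$ played there and the $\gamma$-fast timing $\Tfs=\Tf{\tet'}$ in the role of the slow timing. Parts~1--3 are a single induction on time; Part~4 then drops out of Part~3 together with \cref{lem:fasttiming}(4); and Part~5, which is what feeds the necessity half of \cref{thm:vzz}, needs a separate and more delicate argument that traces how the chain $p_2$ defining $\te{2}$ is realized in $r'$.

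For Parts~1--3 I would prove by induction on $m$ that (i)~the length-$m$ prefix of $r'=\fast{\tet'}$ is a prefix of a run in $\Rrep$; (ii)~every basic node that appears in $r'$ by time $m$ lies in $\past(r,\sigma)\cup\VrInit$ or else appears no earlier than the time $\Tfs(\psi_j)$ of its process's auxiliary node; and (iii)~for every $j$-node $\sigma_j\in\past(r,\sigma)$ with $\timeof_r(\sigma_j)>0$, the local state of $j$ in $r'$ at $m$ equals $\sigma_j$ iff $m=\Tfs(\sigma_j)$. The engine is \cref{lem:fasttiming}: claim~(1) makes $\Tfs$ land in $\Nat$; claim~(2), applied to the various edge classes, yields both the ordinary message-bound bookkeeping inside $\past(r,\sigma)$ and the guarantee that a message sent inside $\past(r,\sigma)$ whose receipt lies beyond $\sigma$'s horizon cannot arrive before the receiving process's auxiliary time $\Tfs(\psi_j)$, which by the $E'$-edge of weight $1$ exceeds the time of that process's boundary node --- exactly the threshold built into delivery rule~(3), and precisely what keeps $\sigma$'s past intact; claim~(3) preserves the order of successive nodes on each timeline. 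Initial nodes are pinned at time $0$ by $r'(0)=r(0)$ and handled as in \cref{lem:runbytime}; note that $\timeof_r(\tet')>0$ forces $\sigma'$ to be non-initial, since an initial node emits no message. Once the induction is complete, Part~1 and the claim $r'\sim_\sigma r$ of Part~2 follow (no delivery inside $\past(r,\sigma)$ is altered), Part~3 is the special case of~(iii), and Part~4 is immediate: $\sigma_1,\sigma_2\in\past(r,\sigma)$, so $\timeof_{r'}(\sigma_1)=\Tfs(\sigma_1)$ and $\timeof_{r'}(\sigma_2)=\Tfs(\sigma_2)$ (or $\timeof_{r'}(\sigma_2)=0<\gamma+1\le\Tfs(\sigma_1)$ if $\sigma_2$ is initial, using that $\Vyes{\sigma'}$ contains no initial node), and \cref{lem:fasttiming}(4) gives $\Tfs(\sigma_2)+\gamma<\Tfs(\sigma_1)$.

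For Part~5, take $\te{2}=\nodeF{\sigma_2,p_2}$ with $\sigma_2\in\Vyes{\sigma'}$ and follow its realization in $r'$ along $p_2$, starting from $\sigma_2$ at time $\Tfs(\sigma_2)$. While the chain stays inside $\past(r,\sigma)$, delivery rule~(1) governs and its nodes appear at their $r$-times, i.e.\ at $\Tfs(\cdot)$; once it leaves, rule~(3) delivers each message at $\max\{t_\mu+\lb{ij},\,\Tfs(\psi_j)\}$. Let $\tet'_1=\nodeF{\sigma_2,p_2^{\mathrm{pre}}}$ be the last node along $p_2$ whose delivery was \emph{not} at exactly (previous time $+$ lower bound): then either $\tet'_1\in\past(r,\sigma)$, delivered at $\Tfs(\basic(\tet'_1,r))$ with $\basic(\tet'_1,r)\in\Vyes{\sigma'}$ (following the chain's forward edges keeps us $\sigma'$-reachable), or $\tet'_1$ is over the horizon and delivered at $\Tfs(\psi_j)$, in which case $\Tfs(\psi_j)>0$ forces $\psi_j\in\Ayes{\sigma'}$. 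Let $p'_1$ be the remaining suffix of $p_2$, so $\te{2}=\ocomp{\tet'_1}{p'_1}$ and, by the choice of $\tet'_1$, each message of $p'_1$ is delivered at its lower bound after the previous one, whence $\timeof_{r'}(\te{2})=\timeof_{r'}(\tet'_1)+\lbSym(p'_1)$. Analysing $\tet'$ the same way, its chain $p'$ splits as $\comp{p'_{\mathrm{in}}}{p'_{\mathrm{out}}}$ at the node $\bar\sigma\in\Vyes{\sigma'}$ where it exits $\past(r,\sigma)$ (possibly $\bar\sigma=\sigma'$), with $\timeof_{r'}(\tet')=\Tfs(\bar\sigma)+\ubSym(p'_{\mathrm{out}})$. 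By the definition of the fast timing, both $\timeof_{r'}(\tet'_1)$ and $\timeof_{r'}(\bar\sigma)$ equal the common constant $1+F_1-F_2+\gamma-D$ plus the weight of a longest $\sigma'$-rooted path in $\GE{r,\sigma}$; reversing $p'_{\mathrm{in}}$ and prefixing the type-1 segment along $p'_{\mathrm{out}}$ gives a constraint-path from $\tet'$ to $\sigma'$ of weight $-\ubSym(p')$, which, extended by a longest $\sigma'$-rooted path to $\tet'_1$, becomes --- by the classification of $\GE{r,\sigma}$-paths and \cref{lem:type12,lem:type123} --- a constraint-path $P$ between $\tet'$ and $\tet'_1$ with $\weight(P)=\timeof_{r'}(\tet'_1)-\timeof_{r'}(\tet')$. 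Combining, $\timeof_{r'}(\te{2})-\timeof_{r'}(\tet')=\weight(P)+\lbSym(p'_1)$, which is Part~5.

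I expect the main obstacle to be exactly that last paragraph: choosing $\tet'_1$ so that the suffix $p'_1$ contributes exactly $\lbSym(p'_1)$ rather than a mixture of lower bounds and auxiliary thresholds, and then verifying that the $\GE{r,\sigma}$-paths realizing $\Tfs(\tet'_1)$ and $\Tfs(\bar\sigma)$ can always be stitched --- with the type-1 tail coming from $p'_{\mathrm{out}}$ --- into one of the admissible constraint-path shapes of \cref{lem:type1,lem:type12,lem:type123} whose weight is \emph{exactly} $\timeof_{r'}(\tet'_1)-\timeof_{r'}(\tet')$, so that no slack is lost in the stitching. A subsidiary nuisance, cleared by routine boundary bookkeeping, is the treatment of initial nodes and of the degenerate cases in which $p'$ or $p_2^{\mathrm{pre}}$ is a singleton.
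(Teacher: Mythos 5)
Your overall architecture matches the paper's: an induction on $m$ giving validity, $r'\sim_\sigma r$, and the timing of past nodes; Part~4 from Part~3 together with \cref{lem:fasttiming}(4); and Part~5 by splitting $p_2$ at the last delivery that is not at a lower bound. There is, however, a genuine gap in Part~5: your case analysis for the delivery time of $\tet'_1$ omits the \emph{second} delivery rule of the fast run. You assert that once the chain realizing $\te{2}$ leaves $\past(r,\sigma)$, ``rule (3) delivers each message at $\max\{t_\mu+\lb{ij},\Tfs(\psi_j)\}$,'' and accordingly you allow only two shapes for $\tet'_1$: a node of $\past(r,\sigma)$ at time $\Tfs(\basic(\tet'_1,r))$, or an over-the-horizon node at time $\Tfs(\psi_j)$. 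But rule (2) takes precedence over rule (3): any message sent by a node $\nodeF{\sigma',p''}$ with $\Comp{p''}{j}$ a prefix of $p'$ is delivered at its \emph{upper} bound. This case is reachable --- the chain realizing $\te{2}$ may coincide with a prefix of the chain realizing $\tet'$ (e.g., $\te{2}=\nodeF{\sigma',p''}$ for a prefix $p''$ of $p'$, so $\sigma_2=\sigma'\in\Vyes{\sigma'}$) --- and then $\tet'_1$ occurs at time $\Tfs(\sigma')+\ubSym(p'_2)$, which in general equals neither $\Tfs(\psi_j)$ nor ``a common constant plus the weight of a longest $\sigma'$-rooted path.'' Your stitching recipe (a type-1 segment followed by a longest $\sigma'$-rooted path ending at $\tet'_1$) therefore does not produce a \cpath\ of the correct weight in this case. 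The paper treats it as a separate case: writing $p'=\ocomp{p'_2}{p''}$ with $p''=[l_1,\ldots,l_e]$, the required \cpath\ is the type-4 path $(\psi_{l_e},\ldots,\psi_{l_1})$, of weight $-\ubSym(p'')=\timeof_{r'}(\tet'_1)-\timeof_{r'}(\tet')$.

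A smaller, fixable omission: your inductive invariant for Parts~1--3 does not record that the chain realizing $\tet'$ travels at exactly its upper bounds, i.e.\ that $\timeof_{r'}(\nodeF{\sigma',p''})=\Tfs(\sigma')+\ubSym(p'')$ for every prefix $p''$ of $p'$. You rely on this in Part~5, and it is also what makes the rule-(2) deliveries consistent with the auxiliary-node thresholds (via the type-1 path $(\psi_{i_c},\ldots,\psi_{i_1},\sigma')$, which yields $\Tfs(\psi_{i_c})\le\Tfs(\sigma')+\ubSym(p'')$); the paper carries it as a second inductive claim, and so should you. On the positive side, your explicit handling of a $\tet'_1$ whose last non-lower-bound delivery lands \emph{inside} $\past(r,\sigma)$ by rule (1), and your treatment of an initial $\sigma_2$ in Part~4, are points where your write-up is more careful than the paper's own three-way split.
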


\begin{proof}
We prove by induction on $m$ that:
\begin{enumerate}
\item The prefix of $r'$ up to time $m$ is a prefix of a legal run in $\Rrep$, and
\item Let $p''$ be a prefix of $p'$ (the path that defines $\tet'$). If $\timeof_{r'}(\nodeF{\sigma',p''})=m$, then $m=\Tfs(\sigma')+\ubSym(p'')$.
\item For all $j\in\Proc$:
	\begin{itemize}
	\item If $m=\Tfs(\sigma_j)$ for some $j$-node $\sigma_j\lam_r\sigma$, then $\nodeL{j,r'_j(m)}=\sigma_j$ and $\timeof_{r'}(\sigma_j)=m$.
	\item Otherwise, if $m\neq\Tfs(\sigma_j)$ for every $j$-node $\sigma_j\lam_r\sigma$, and $m<\Tfs(\sigma_j)$ for some $j$-node $\sigma_j\lam_r\sigma$, then no messages are delivered to agent $j$ (so its local state remains the same).
	\end{itemize} 
\end{enumerate}
At $m=0$, the claims are true:
\begin{enumerate}
\item we have that $r'(0)=r(0)\in G_0$, so the prefix of $r'$ at time 0 is also a prefix of $r\in\Rrep$;
\item If $\timeof_{r'}(\nodeF{\sigma',p''})=0$, then it must be that $p''$ is a singleton and $\sigma'$ is an initial node of $r'$ (a node from time $0$). However, as $r'(0)=r(0)$, $\sigma'$ will be in this case also an initial node of $r$. 
According to the definition of the model, initial nodes never send messages (basic nodes send messages only at the moment they receive one, but initial nodes never receive messages, as they cease to exist once a message is received).
Thus, if $\timeof_{r}(\sigma')=0$, no message chain can leave $\sigma'$ in~$r$.
As $\tet'=\nodeF{\sigma',p'}$, it means that $p'$ must be a singleton in this case,
and so $\timeof_{r}(\tet')=\timeof_{r}(\sigma')=0$. 
However, recall that $\timeof_{r}(\tet')>0$, and so $\sigma'$ cannot be an initial node in~$r$ and~$r'$, and we get that $\timeof_{r'}(\nodeF{\sigma',p''})\neq0$ for any prefix $p''$ of $p'$; and
\item Assume that $\Tfs(\sigma_j)=0$ holds for some $j$-node $\sigma_j\lam_r\sigma$.
If $\sigma_j$ is not an initial node in~$r$, it must have a predecessor, say $\sigma'_j$, where also $\sigma'_j\lam_r\sigma$, and $\timeof_{r}(\sigma'_j)<\timeof_{r}(\sigma_j)$.
From $\Cref{lem:fasttiming}$ we get in this case that $\Tfs(\sigma'_j)<\Tfs(\sigma_j)=0$, contradicting the positiveness of $\Tfs$.
Thus, $\sigma_j$ must be an initial node in $r$.
As $r'(0)=r(0)$, we have that $\sigma_j$ is an initial node of $r'$ as well, and so $\timeof_{r'}(\sigma_j)=0$ as required.
The last claim is trivially true, as no messages are delivered at time $0$.
\end{enumerate}
Let $m>0$, and assume that the claims are true up to time $m-1$. 

For the validity of $r'$ at $m$, we must make sure that no message is delivered before its lower bound, or is remained in transit if its upper bound will pass.
Let $\mu$ be a message that is sent in $r'$ by the $i$-node~$\sigma_i$ at time~$t_\mu<m$ to agent~$j$, and that is still in transit before time $m$.
In the fast run definition there are three conditions that describe if $\mu$ will be delivered at $m$ or don't.
concerning the first condition, if $\sigma_i\lam_r\sigma$, then by the induction $t_\mu=\timeof_{r'}(\sigma_i)=\Tfs(\sigma_i)$. 
if this message is received in $r$ by a node $\sigma_j\lam_r\sigma$, then it will be delivered at $m'=\Tfs(\sigma_j)$.
By lemma \ref{lem:fasttiming}, $\lb{ij}\leq \Tfs(\sigma_j)-\Tfs(\sigma_i)\leq\ub{ij}$ and so $\lb{ij}\leq m'-t_\mu\leq\ub{ij}$. Thus, the message $\mu$ will be delivered in bounds, and in particular if $m=\Tfs(\sigma_j)$ then it will be delivered at $m$.

Delivery of a message according to the second condition is promised to be in the upper bound, hence inside bounds.
The third condition promises to obey the lower bound but possibly not the upper bound, if $t_\mu+\ub{ij}<\Tfs(\psi_j)$.
If $\sigma_i\lam_r\sigma$, then it must be that $\mu$ is received in $r$ outside the past of $\sigma$, or otherwise we would have stopped in the first condition. 
So, if indeed $\sigma_i\lam_r\sigma$, then $(\psi_j,\sigma_i)$ is an edge in \GE{r,\sigma} with weight of $-\ub{ij}$, and as the fast timing is a valid timing, it must be that $\Tfs(\psi_j)-\ub{ij}\leq\Tfs(\sigma_i)=t_\mu$, so that $t_\mu+\ub{ij}\geq\Tfs(\psi_j)$ so it won't be delivered behind the upper bound.

Otherwise, if $\sigma_i\not\lam_r\sigma$, then $\sigma_i$ sent $\mu$ after receiving another message.
It has received that other message following either the second condition or the third.
If it was due to the third condition, then it must be that $t_\mu\geq\Tfs(\psi_i)$. As $\Tfs(\psi_j)-\ub{ij}\leq\Tfs(\psi_i)$, then once again we get that $t_\mu+\ub{ij}\geq\Tfs(\psi_j)$.
Recall that $\tet'=\nodeF{\sigma',p'}$.
If $\sigma_i$ received a message due to the second condition, then there exists some path $p''=[i_0,i_1,\ldots,i_c]$ 
which is prefix of $p'$ (the path that defines $\tet'$), where $i_c=i$ and $i_0$ is the agent of $\sigma'$.
By the induction, it follows that $t_\mu=\timeof_{r'}(\sigma_i)=\Tfs(\sigma')+\ubSym(p'')$.
Note that $(\psi_i,\psi_{i_{c-1}},\ldots,\psi_{i_1},\sigma')$ is a path in \GE{r,\sigma} with weight of $-\ubSym(p'')$, hence $\Tfs(\psi_i)-\ubSym(p'')\leq\Tfs(\sigma')$ and so $\Tfs(\psi_i)\leq t_\mu$. 
Combining it with the fact that $\Tfs(\psi_j)-\ub{ij}\leq\Tfs(\psi_i)$, we get for the last time that $t_\mu+\ub{ij}\geq\Tfs(\psi_j)$ as required.\\

For the second claim of the induction, let $p''$ be a prefix of $p'$ (the path that defines $\tet'$).
Assume that $p''=\Comp{p'''}{j}$ and that $m=\timeof_{r'}(\nodeF{\sigma',p''})$.
This means that the node \nodeF{\sigma',p'''} (we assume it is an $i$-node) sent a message to agent~$j$, and that message was delivered in~$r'$ at time $m$.
This delivery must be according to the second delivery rule in the definition of~$r'$, thus $m=\timeof_{r'}(\nodeF{\sigma',p'''})+\upb{ij}$.
By the induction, we get that $\timeof_{r'}(\nodeF{\sigma',p'''})=\Tfs(\sigma')+\ubSym(p''')$, and so $m=\Tfs(\sigma')+\ubSym(p''')+\upb{ij}=\Tfs(\sigma')+\ubSym(p'')$ as required.

Next, we want to prove the third claim of the induction.
Assume that $m=\Tfs(\sigma_j)$ for some \mbox{$j$-node} $\sigma_j\lam_r\sigma$.
By the induction, $j$ didn't receive any message since the last $m'<m$ for which $m'=\Tfs(\sigma'_j)$ for some $j$-node $\sigma'_j\lam_r\sigma$, or otherwise, if there is no such $m'$, then since $m'=0$. By \Cref{lem:fasttiming}, that previous $\sigma'_j$ is the node that comes before $\sigma_j$ in $r$ (or if $m'=0$, then~$\sigma_j$ is the first node of~$j$ in~$r$, after the initial node).
Now we must prove that the messages that are delivered in $r'$ at $m$ to $j$ are exactly the messages that are delivered to $\sigma_j$, and then it follows that $\nodeL{j,r'_j(m)}=\sigma_j$ and that $\timeof_{r'}(\sigma_j)=m$.
Observing all the nodes that send the messages to $\sigma_j$, their timings is before $m$, and thus by the induction they appear in $r'$, and as we have seen above these messages will be delivered at $m$.
We still must prove that no other messages are delivered in $r'$ to $j$ at $m$.
Clearly no other message that is received in the past of $\sigma$ will be delivered here (according to the first condition of delivering messages in the fast run).
We still have to prove that it is true for the two other conditions.
The third condition is easy, as $m=\Tfs(\sigma_j)<\Tfs(\psi_j)$ thus no such message will be delivered at $m$.
Concerning the second condition, assume that there is some message that is sent in~$r'$ to agent $j$ from a node $\nodeF{\sigma',p''}$, where $q=\Comp{p''}{j}$ is a prefix of $p'$ (the path that defines $\tet'$). 
If this message is received at time~$m$, then we have seen that $m=\Tfs(\sigma')+\ubSym(q)$.
From similar reasons that described above it must be that $\Tfs(\psi_j)-\ubSym(q)\leq\Tfs(\sigma')$, and so $m=\Tfs(\sigma')+\ubSym(q)\geq\Tfs(\psi_j)$ contradicting the fact that $m=\Tfs(\sigma_j)<\Tfs(\psi_j)$.

Even if $m\neq\Tfs(\sigma_j)$ for every $\sigma_j\lam_r\sigma$, but there is some $\sigma'_j\lam_r\sigma$ such that $m<\Tfs(\sigma'_j)$, then our last analysis still applies, as $m<\Tfs(\sigma'_j)<\Tfs(\psi_j)$, so there will be no messages delivered to $j$ at $m$ in such case.
Thus, we proved the induction, which proves the first three claims of \Cref{lem:fast}.
The fourth claim follows directly from the third claim and from \Cref{lem:fasttiming}.\\

Now we shall prove the last claim of \Cref{lem:fast}.
Recall that $\tet'=\nodeF{\sigma',p'}$, and let $\te{2}=\nodeF{\sigma_2,p_2}$ be a \sigAware\ node, where $\sigma_2\in\Vyes{\sigma'}$.
We want to prove that there exists another \sigAware\ node, $\tet'_1$, such that (i) $\te{2}=\ocomp{\tet'_1}{p'_1}$, (ii) there is a \cpath\ $P$ in \GE{r,\sigma} between $\tet'$ and $\tet'_1$, and (iii)  $\timeof_{r'}(\te{2})-\timeof_{r'}(\tet')=\weight(P)+\lbSym(p'_1)$.

Assume that $p'=[i_0,i_1,\ldots,i_c]$ (so that $\sigma'$ is an $i_0$-node).
According to the second condition of message delivery in the fast run, we get that $\timeof_{r'}(\tet')-\timeof_{r'}(\sigma')=\ubSym(p')$.
Note that $I=(\psi_{i_c},\psi_{i_{c-1}},\ldots,\psi_{i_1},\sigma')$ is a type 1 path, that is also a \cpath\ from $\tet'$ to $\sigma'$, with $\weight(I)=-\ubSym(p')=\timeof_{r'}(\sigma')-\timeof_{r'}(\tet')$.

First, lets assume that $p_2$ is a singleton, so effectively $\te{2}=\sigma_2$.
Note that $\sigma'$ is not an initial node, as we assume that $\timeof_r(\tet')>0$, and message chains cannot leave an initial node.
As in \GER\ there are no edges that enter initial nodes, there cannot be paths from $\sigma'$ to initial nodes.
Thus, as $\sigma_2\in\Vyes{\sigma'}$, it must be that $\timeof_r(\sigma_2)>0$.
By the third claim, we have that $\timeof_{r'}(\sigma_2)=\Tfs(\sigma_2)$ and 
$\timeof_{r'}(\sigma')=\Tfs(\sigma')$, and so $\timeof_{r'}(\sigma_2)-\timeof_{r'}(\sigma')=\Tfs(\sigma_2)-\Tfs(\sigma')=d(\sigma_2)$.
By the definition of the fast timing, $d(\sigma_2)$ is the weight of a (longest) type 2 path, $Q$, in \GE{r,\sigma} from $\sigma'$ to $\sigma_2$. Thus, we have that $\weight(Q)=\timeof_{r'}(\sigma_2)-\timeof_{r'}(\sigma')$.
We can concatenate $I$ and $Q$, and get a \cpath\ $P=\ocomp{I}{Q}$ from $\tet'$ to $\sigma_2$ with $\weight(P)=\weight(I)+\weight(Q)=\timeof_{r'}(\sigma_2)-\timeof_{r'}(\tet')$.
We choose $\tet'_1=\te{2}=\sigma_2$, and we are done.

Assume otherwise, that $p_2$ is not a singleton.
The messages in the message chain that leaves $\sigma_2$ and goes along $p_2$ are received according to the second and third conditions of message delivery in the definition of $r'$.
We choose to paths in $\Net$, $p'_1$ and $p'_2$, such that (i) $p_2=\ocomp{p'_2}{p'_1}$ and (ii) denoting $\tet'_1=\nodeF{\sigma_2,p'_2}$ (so that $\te{2}=\ocomp{\tet'_1}{p'_1}$), we have that all the messages in the message chain that leaves $\tet'_1$ and goes along $p'_1$ are received in their lower bound (according to the third condition of message delivery), and where the last message that is received by $\tet'_1$ from the message chain that leaves $\sigma_2$ and goes along $p'_2$, is not received in its lower bound.
If there are \textbf{no} messages that are received in the lower bound along $p_2$, then we choose $p'_1$ to be a singleton, and $\tet'_1=\te{2}$.
If there are \textbf{only} messages that are received in the lower bound along $p_2$, then we choose $p'_2$ to be a singleton, and $\tet'_1=\sigma_2$.
Note that $\timeof_{r'}(\te{2})-\timeof_{r'}(\tet'_1)=\lbSym(p'_1)$.
We must prove now that there is a \cpath\ $P$ from $\tet'$ to $\tet'_1$ with $\weight(P)=\timeof_{r'}(\tet'_1)-\timeof_{r'}(\tet')$ and we are done.

Now there are three options: (1) All the above mention messages are received in the lower bound, otherwise (2) the last message that is received at $\tet'_1$ is received according to the second condition of message delivery in~$r'$, i.e.~in its upper bound, or otherwise (3) the last message that is received at $\tet'_1$ is received by the limitation of the timing of some auxiliary node (in the third condition of message delivery in~$r'$).
 
In the first case, we have that $\tet'_1=\sigma_2$, and just as the case where $p_2$ was a singleton, we can find a path $P$ from $\tet'$ to $\sigma_2$ where $\timeof_{r'}(\tet'_1)-\timeof_{r'}(\tet')=\weight(P)$, 
as required.

In the second case, according to the definition of $r'$ it must be that $\sigma_2=\sigma'$ and that $p'_2$ is a prefix of $p'$.
So, assume that $p'=\ocomp{p'_2}{p''}$, and that $p''=[l_1,l_2,\ldots,l_e]$.
Note that $P=(\psi_{l_e},\psi_{l_{e-1}},\ldots,\psi_{l_1})$ is a \cpath\ (type~4) from $\tet'$ to $\tet'_1$, and that $\timeof_{r'}(\tet'_1)-\timeof_{r'}(\tet')=-\ubSym(p'')=\weight(P)$, 
as required.

Considering the last case, assume that $\tet'_1$ is a $j$-node, so that $\timeof_{r'}(\tet'_1)=\Tfs(\psi_j)$.
Clearly $\Tfs(\psi_j)>0$, or otherwise $r'$ wouldn't be valid, and so $\psi_j\in\Ayes{\sigma'}$.
Thus, there is a path from $\sigma'$ to $\psi_j$ in \GER.
Define $\sigma''$ to be the last basic node on that path.
We can divide this path to a type 2 path, $Q$, from $\sigma'$ to $\sigma''$, and a type 3 path, $S$, from $\sigma''$ to $\psi_j$.
Note that $S$ defines a \cpath\ between $\sigma''$ and $\tet'_1$.
Concerning the weights of these paths, we have that $\weight(Q)=d(\sigma'')$ and $\weight(S)=d(\psi_j)-d(\sigma'')$, where $d(\cdot)$ is the weight of the longest path from $\sigma'$ to the relevant node.
We can combine the type 1 path $I$ (the \cpath\ from $\tet'$ to $\sigma'$, with $\weight(I)=-\ubSym(p')$), the type 2 path $Q$ (from $\sigma'$ to $\sigma''$) and the type 3 path $S$ (the \cpath\ from $\sigma''$ to $\tet'_1$), and get a new path, $P$, that is a \cpath\ from $\tet'$ to $\tet'_1$.
Note that $\weight(P)=(-\ubSym(p'))+(d(\sigma''))+(d(\psi_j)-d(\sigma''))=d(\psi_j)-\ubSym(p')$, and that
$\timeof_{r'}(\tet'_1)-\timeof_{r'}(\sigma')=\Tfs(\psi_j)-\Tfs(\sigma')=d(\psi_j)$.
We already know that $\timeof_{r'}(\sigma')-\timeof_{r'}(\tet')=-\ubSym(p')$, and so $\timeof_{r'}(\tet'_1)-\timeof_{r'}(\tet')=d(\psi_j)-\ubSym(p')=\weight(P)$ as required.

\end{proof}

Combining the results of \Cref{lem:fast,lem:paths}, we can get the following result:
\begin{corollary}
\Clabel{cor:fastisvzz}
Let $r\in\Rrep$, $\gamma\in\Nat$, $\sigma$ be a node of $r$ and let $\tet'$ be a \sigAware\ node, where $\timeof_r(\tet')>0$.
Define $r'=\fast{\tet'}$.
Let $\te{2}=\nodeF{\sigma_2,p_2}$ be a \sigAware\ node, where $\sigma_2\in\Vyes{\sigma'}$.
Then there is a $\sigma$-visible zigzag in $r$ from $\tet'$ to \te{2}, with weight that equals to $\timeof_{r'}(\te{2})-\timeof_{r'}(\tet')$.
\end{corollary}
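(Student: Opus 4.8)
The plan is to obtain the claimed $\sigma$-visible zigzag essentially for free, by composing the last clause of \Cref{lem:fast} with \Cref{lem:paths}. No fresh construction is required: the fast run $r'=\fast{\tet'}$ and its timing properties are already supplied by \Cref{lem:fast}, and the translation of \cpath s into $\sigma$-visible zigzags is exactly the content of \Cref{lem:paths}. So the proof will be a short chaining argument.

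Concretely, I would first apply part~(5) of \Cref{lem:fast} to the \sigAware\ node $\te{2}=\nodeF{\sigma_2,p_2}$, using the hypotheses $\sigma_2\in\Vyes{\sigma'}$ and $\timeof_r(\tet')>0$. This yields a \sigAware\ node $\tet'_1$ and a network path $p'_1$ such that (i)~$\te{2}=\ocomp{\tet'_1}{p'_1}$, (ii)~there is a \cpath\ $P$ in $\GE{r,\sigma}$ between $\tet'$ and $\tet'_1$, and (iii)~$\timeof_{r'}(\te{2})-\timeof_{r'}(\tet')=\weight(P)+\lbSym(p'_1)$. I would then invoke \Cref{lem:paths} with $\te{1}:=\tet'$, the node $\tet'_1$ just produced, and $p':=p'_1$; since $\tet'$, $\tet'_1$ and $\te{2}=\ocomp{\tet'_1}{p'_1}$ are all \sigAware\ and $P$ is a \cpath\ between $\tet'$ and $\tet'_1$, \Cref{lem:paths} delivers a $\sigma$-visible zigzag $Z$ in $r$ from $\tet'$ to $\te{2}$ with $\weight(Z)=\weight(P)+\lbSym(p'_1)$. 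Chaining this with~(iii) gives $\weight(Z)=\timeof_{r'}(\te{2})-\timeof_{r'}(\tet')$, which is precisely the assertion.

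The only thing that needs care — and it is bookkeeping rather than a genuine obstacle — is checking that the outputs of \Cref{lem:fast}(5) satisfy the hypotheses of \Cref{lem:paths} verbatim: that $\tet'_1$ is \sigAware, that $\te{2}$ is literally of the shape $\ocomp{\tet'_1}{p'_1}$ (matching the $\Comp{\tet'_1}{p'}$ required there), and that the \cpath\ of item~(ii) connects the correct pair of nodes. Since part~(5) of \Cref{lem:fast} was phrased exactly to match \Cref{lem:paths}, this goes through immediately, and the corollary reduces to a one-line composition once the two lemmas are in hand. I therefore expect no real difficulty here; all the substantive work — building $r'$, checking its validity, tracking the longest-path (fast) timings, and assembling the forks — has already been carried out in \Cref{lem:fasttiming,lem:fast,lem:paths}.
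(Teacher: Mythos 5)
Your proposal is correct and matches the paper exactly: the paper states this corollary with only the one-line justification ``Combining the results of Lemma~\ref{lem:fast} and Lemma~\ref{lem:paths},'' and your chaining of Lemma~\ref{lem:fast}(5) into Lemma~\ref{lem:paths} is precisely that combination, spelled out. Your bookkeeping remark about matching $\ocomp{\tet'_1}{p'_1}$ to the form required by Lemma~\ref{lem:paths} is also apt, since Lemma~\ref{lem:paths} was phrased specifically so that the output of Lemma~\ref{lem:fast}(5) would slot in directly.
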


{\bf Finally, we are ready to prove \Cref{thm:vzz}:}
\begin{proof}[Proof of \Cref{thm:vzz}]
Let~$\Rrep=\Rrep(\Prot,\gamma)$ and suppose that~$\Prot$ is an \ffip.
Moreover, let $\sigma$ be a basic node of ${r\in\Rrep}$, let $x\in\Int$, and let $\te{1}=\nodeF{\sigma_1,p_1}$ and $\te{2}=\nodeF{\sigma_2,p_2}$ be $\sigma$-aware nodes in~$r$, such that both ${\timeof_r(\te{1})>0}$ and ${\timeof_r(\te{2})>0}$.
Recall that $\tet'=\nodeF{\sigma',p'}$ can be a $\sigma$-aware node only if $\sigma'\lam_r\sigma$. Thus, since \te{1} and \te{2} are $\sigma$-aware nodes, we have that both $\sigma_1\lam_r\sigma$ and $\sigma_2\lam_r\sigma$.
Assume that \mbox{$(\Rrep,r)\models K_{\sigma}(\te{1}\atleast{x}\te{2})$}.
Concerning $\sigma_2$, the basic node of $\te{2}$ in $r$, either $\sigma_2\in\Vyes{\sigma_1}$ or $\sigma_2\in\Vno{\sigma_1}$.
First, assume that ${\sigma_2\in\Vyes{\sigma_1}}$.
Define $r'=\fastG{0}{\te{1}}$.
According to the definition of knowledge, as $r'\sim_\sigma r$ (by \Cref{lem:fast}) and \mbox{$(\Rrep,r)\models K_{\sigma}(\te{1}\atleast{x}\te{2})$}, we have that $(\Rrep,r')\models\te{1}\atleast{x}\te{2}$.
Thus, $\timeof_{r'}(\te{1})+x\leq\timeof_{r'}(\te{2})$.
By \Cref{cor:fastisvzz} we can conclude that there is a $\sigma$-visible zigzag from \te{1} to \te{2} in~$r$ with weight of $\timeof_{r'}(\te{2})-\timeof_{r'}(\te{1})\geq x$ as required. 

If $\sigma_2\in\Vno{\sigma_1}$, then define $\gamma=max(0,\ubSym(p_2)-\lbSym(p_1)-x)$ and $r''=\fast{\te{1}}$.
Once again, $(\Rrep,r'')\models\te{1}\atleast{x}\te{2}$ and thus $\timeof_{r''}(\te{1})+x\leq\timeof_{r''}(\te{2})$.
However, we will prove that in contrast $\timeof_{r''}(\te{2})<\timeof_{r''}(\te{1})+x$, and so it cannot be that $\sigma_2\in\Vno{\sigma_1}$.
By \Cref{lem:fast}, as $\sigma_2\in\Vno{\sigma_1}$ and $\sigma_1\in\Vyes{\sigma_1}$, we get that $\timeof_{r''}(\sigma{2})<\timeof_{r''}(\sigma_1)-\gamma$. 
Note that $\timeof_{r''}(\te{2})\leq\timeof_{r''}(\sigma_2)+\ubSym(p_2)$ and $\timeof_{r''}(\sigma_1)\leq\timeof_{r''}(\te{1})-\lbSym(p_1)$.
Combining the last three inequalities, we get that: \[\timeof_{r''}(\te{2})\leq\timeof_{r''}(\sigma_2)+\ubSym(p_2)<\timeof_{r''}(\sigma_1)-\gamma+\ubSym(p_2)\leq\timeof_{r''}(\te{1})-\lbSym(p_1)+\ubSym(p_2)-\gamma\]
Note that $\gamma\geq\ubSym(p_2)-\lbSym(p_1)-x$, hence $-\lbSym(p_1)+\ubSym(p_2)-\gamma\leq x$, and so $\timeof_{r''}(\te{2})<\timeof_{r''}(\te{1})+x$, as claimed.
\end{proof}


\end{document}